\documentclass{article}

\usepackage{lineno,hyperref}
\usepackage{amsthm}

\modulolinenumbers[5]

\usepackage[latin1]{inputenc}
\usepackage{subfigure}
\usepackage{color}
\usepackage{epic}
\usepackage{tikz}

\definecolor{AliceBlue}{rgb}{0.94,0.97,1.00}
\definecolor{AntiqueWhite1}{rgb}{1.00,0.94,0.86}
\definecolor{AntiqueWhite2}{rgb}{0.93,0.87,0.80}
\definecolor{AntiqueWhite3}{rgb}{0.80,0.75,0.69}
\definecolor{AntiqueWhite4}{rgb}{0.55,0.51,0.47}
\definecolor{AntiqueWhite}{rgb}{0.98,0.92,0.84}
\definecolor{BlanchedAlmond}{rgb}{1.00,0.92,0.80}
\definecolor{BlueViolet}{rgb}{0.54,0.17,0.89}
\definecolor{CadetBlue1}{rgb}{0.60,0.96,1.00}
\definecolor{CadetBlue2}{rgb}{0.56,0.90,0.93}
\definecolor{CadetBlue3}{rgb}{0.48,0.77,0.80}
\definecolor{CadetBlue4}{rgb}{0.33,0.53,0.55}
\definecolor{CadetBlue}{rgb}{0.37,0.62,0.63}
\definecolor{CornflowerBlue}{rgb}{0.39,0.58,0.93}
\definecolor{DarkBlue}{rgb}{0.00,0.00,0.55}
\definecolor{DarkCyan}{rgb}{0.00,0.55,0.55}
\definecolor{DarkGoldenrod1}{rgb}{1.00,0.73,0.06}
\definecolor{DarkGoldenrod2}{rgb}{0.93,0.68,0.05}
\definecolor{DarkGoldenrod3}{rgb}{0.80,0.58,0.05}
\definecolor{DarkGoldenrod4}{rgb}{0.55,0.40,0.03}
\definecolor{DarkGoldenrod}{rgb}{0.72,0.53,0.04}
\definecolor{DarkGray}{rgb}{0.66,0.66,0.66}
\definecolor{DarkGreen}{rgb}{0.00,0.39,0.00}
\definecolor{DarkGrey}{rgb}{0.66,0.66,0.66}
\definecolor{DarkKhaki}{rgb}{0.74,0.72,0.42}
\definecolor{DarkMagenta}{rgb}{0.55,0.00,0.55}
\definecolor{DarkOliveGreen1}{rgb}{0.79,1.00,0.44}
\definecolor{DarkOliveGreen2}{rgb}{0.74,0.93,0.41}
\definecolor{DarkOliveGreen3}{rgb}{0.64,0.80,0.35}
\definecolor{DarkOliveGreen4}{rgb}{0.43,0.55,0.24}
\definecolor{DarkOliveGreen}{rgb}{0.33,0.42,0.18}
\definecolor{DarkOrange1}{rgb}{1.00,0.50,0.00}
\definecolor{DarkOrange2}{rgb}{0.93,0.46,0.00}
\definecolor{DarkOrange3}{rgb}{0.80,0.40,0.00}
\definecolor{DarkOrange4}{rgb}{0.55,0.27,0.00}
\definecolor{DarkOrange}{rgb}{1.00,0.55,0.00}
\definecolor{DarkOrchid1}{rgb}{0.75,0.24,1.00}
\definecolor{DarkOrchid2}{rgb}{0.70,0.23,0.93}
\definecolor{DarkOrchid3}{rgb}{0.60,0.20,0.80}
\definecolor{DarkOrchid4}{rgb}{0.41,0.13,0.55}
\definecolor{DarkOrchid}{rgb}{0.60,0.20,0.80}
\definecolor{DarkRed}{rgb}{0.55,0.00,0.00}
\definecolor{DarkSalmon}{rgb}{0.91,0.59,0.48}
\definecolor{DarkSeaGreen1}{rgb}{0.76,1.00,0.76}
\definecolor{DarkSeaGreen2}{rgb}{0.71,0.93,0.71}
\definecolor{DarkSeaGreen3}{rgb}{0.61,0.80,0.61}
\definecolor{DarkSeaGreen4}{rgb}{0.41,0.55,0.41}
\definecolor{DarkSeaGreen}{rgb}{0.56,0.74,0.56}
\definecolor{DarkSlateBlue}{rgb}{0.28,0.24,0.55}
\definecolor{DarkSlateGray1}{rgb}{0.59,1.00,1.00}
\definecolor{DarkSlateGray2}{rgb}{0.55,0.93,0.93}
\definecolor{DarkSlateGray3}{rgb}{0.47,0.80,0.80}
\definecolor{DarkSlateGray4}{rgb}{0.32,0.55,0.55}
\definecolor{DarkSlateGray}{rgb}{0.18,0.31,0.31}
\definecolor{DarkSlateGrey}{rgb}{0.18,0.31,0.31}
\definecolor{DarkTurquoise}{rgb}{0.00,0.81,0.82}
\definecolor{DarkViolet}{rgb}{0.58,0.00,0.83}
\definecolor{DeepPink1}{rgb}{1.00,0.08,0.58}
\definecolor{DeepPink2}{rgb}{0.93,0.07,0.54}
\definecolor{DeepPink3}{rgb}{0.80,0.06,0.46}
\definecolor{DeepPink4}{rgb}{0.55,0.04,0.31}
\definecolor{DeepPink}{rgb}{1.00,0.08,0.58}
\definecolor{DeepSkyBlue1}{rgb}{0.00,0.75,1.00}
\definecolor{DeepSkyBlue2}{rgb}{0.00,0.70,0.93}
\definecolor{DeepSkyBlue3}{rgb}{0.00,0.60,0.80}
\definecolor{DeepSkyBlue4}{rgb}{0.00,0.41,0.55}
\definecolor{DeepSkyBlue}{rgb}{0.00,0.75,1.00}
\definecolor{DimGray}{rgb}{0.41,0.41,0.41}
\definecolor{DimGrey}{rgb}{0.41,0.41,0.41}
\definecolor{DodgerBlue1}{rgb}{0.12,0.56,1.00}
\definecolor{DodgerBlue2}{rgb}{0.11,0.53,0.93}
\definecolor{DodgerBlue3}{rgb}{0.09,0.45,0.80}
\definecolor{DodgerBlue4}{rgb}{0.06,0.31,0.55}
\definecolor{DodgerBlue}{rgb}{0.12,0.56,1.00}
\definecolor{FloralWhite}{rgb}{1.00,0.98,0.94}
\definecolor{ForestGreen}{rgb}{0.13,0.55,0.13}
\definecolor{GhostWhite}{rgb}{0.97,0.97,1.00}
\definecolor{GreenYellow}{rgb}{0.68,1.00,0.18}
\definecolor{HotPink1}{rgb}{1.00,0.43,0.71}
\definecolor{HotPink2}{rgb}{0.93,0.42,0.65}
\definecolor{HotPink3}{rgb}{0.80,0.38,0.56}
\definecolor{HotPink4}{rgb}{0.55,0.23,0.38}
\definecolor{HotPink}{rgb}{1.00,0.41,0.71}
\definecolor{IndianRed1}{rgb}{1.00,0.42,0.42}
\definecolor{IndianRed2}{rgb}{0.93,0.39,0.39}
\definecolor{IndianRed3}{rgb}{0.80,0.33,0.33}
\definecolor{IndianRed4}{rgb}{0.55,0.23,0.23}
\definecolor{IndianRed}{rgb}{0.80,0.36,0.36}
\definecolor{LavenderBlush1}{rgb}{1.00,0.94,0.96}
\definecolor{LavenderBlush2}{rgb}{0.93,0.88,0.90}
\definecolor{LavenderBlush3}{rgb}{0.80,0.76,0.77}
\definecolor{LavenderBlush4}{rgb}{0.55,0.51,0.53}
\definecolor{LavenderBlush}{rgb}{1.00,0.94,0.96}
\definecolor{LawnGreen}{rgb}{0.49,0.99,0.00}
\definecolor{LemonChiffon1}{rgb}{1.00,0.98,0.80}
\definecolor{LemonChiffon2}{rgb}{0.93,0.91,0.75}
\definecolor{LemonChiffon3}{rgb}{0.80,0.79,0.65}
\definecolor{LemonChiffon4}{rgb}{0.55,0.54,0.44}
\definecolor{LemonChiffon}{rgb}{1.00,0.98,0.80}
\definecolor{LightBlue1}{rgb}{0.75,0.94,1.00}
\definecolor{LightBlue2}{rgb}{0.70,0.87,0.93}
\definecolor{LightBlue3}{rgb}{0.60,0.75,0.80}
\definecolor{LightBlue4}{rgb}{0.41,0.51,0.55}
\definecolor{LightBlue}{rgb}{0.68,0.85,0.90}
\definecolor{LightCoral}{rgb}{0.94,0.50,0.50}
\definecolor{LightCyan1}{rgb}{0.88,1.00,1.00}
\definecolor{LightCyan2}{rgb}{0.82,0.93,0.93}
\definecolor{LightCyan3}{rgb}{0.71,0.80,0.80}
\definecolor{LightCyan4}{rgb}{0.48,0.55,0.55}
\definecolor{LightCyan}{rgb}{0.88,1.00,1.00}
\definecolor{LightGoldenrod1}{rgb}{1.00,0.93,0.55}
\definecolor{LightGoldenrod2}{rgb}{0.93,0.86,0.51}
\definecolor{LightGoldenrod3}{rgb}{0.80,0.75,0.44}
\definecolor{LightGoldenrod4}{rgb}{0.55,0.51,0.30}
\definecolor{LightGoldenrodYellow}{rgb}{0.98,0.98,0.82}
\definecolor{LightGoldenrod}{rgb}{0.93,0.87,0.51}
\definecolor{LightGray}{rgb}{0.83,0.83,0.83}
\definecolor{LightGreen}{rgb}{0.56,0.93,0.56}
\definecolor{LightGrey}{rgb}{0.83,0.83,0.83}
\definecolor{LightPink1}{rgb}{1.00,0.68,0.73}
\definecolor{LightPink2}{rgb}{0.93,0.64,0.68}
\definecolor{LightPink3}{rgb}{0.80,0.55,0.58}
\definecolor{LightPink4}{rgb}{0.55,0.37,0.40}
\definecolor{LightPink}{rgb}{1.00,0.71,0.76}
\definecolor{LightSalmon1}{rgb}{1.00,0.63,0.48}
\definecolor{LightSalmon2}{rgb}{0.93,0.58,0.45}
\definecolor{LightSalmon3}{rgb}{0.80,0.51,0.38}
\definecolor{LightSalmon4}{rgb}{0.55,0.34,0.26}
\definecolor{LightSalmon}{rgb}{1.00,0.63,0.48}
\definecolor{LightSeaGreen}{rgb}{0.13,0.70,0.67}
\definecolor{LightSkyBlue1}{rgb}{0.69,0.89,1.00}
\definecolor{LightSkyBlue2}{rgb}{0.64,0.83,0.93}
\definecolor{LightSkyBlue3}{rgb}{0.55,0.71,0.80}
\definecolor{LightSkyBlue4}{rgb}{0.38,0.48,0.55}
\definecolor{LightSkyBlue}{rgb}{0.53,0.81,0.98}
\definecolor{LightSlateBlue}{rgb}{0.52,0.44,1.00}
\definecolor{LightSlateGray}{rgb}{0.47,0.53,0.60}
\definecolor{LightSlateGrey}{rgb}{0.47,0.53,0.60}
\definecolor{LightSteelBlue1}{rgb}{0.79,0.88,1.00}
\definecolor{LightSteelBlue2}{rgb}{0.74,0.82,0.93}
\definecolor{LightSteelBlue3}{rgb}{0.64,0.71,0.80}
\definecolor{LightSteelBlue4}{rgb}{0.43,0.48,0.55}
\definecolor{LightSteelBlue}{rgb}{0.69,0.77,0.87}
\definecolor{LightYellow1}{rgb}{1.00,1.00,0.88}
\definecolor{LightYellow2}{rgb}{0.93,0.93,0.82}
\definecolor{LightYellow3}{rgb}{0.80,0.80,0.71}
\definecolor{LightYellow4}{rgb}{0.55,0.55,0.48}
\definecolor{LightYellow}{rgb}{1.00,1.00,0.88}
\definecolor{LimeGreen}{rgb}{0.20,0.80,0.20}
\definecolor{MediumAquamarine}{rgb}{0.40,0.80,0.67}
\definecolor{MediumBlue}{rgb}{0.00,0.00,0.80}
\definecolor{MediumOrchid1}{rgb}{0.88,0.40,1.00}
\definecolor{MediumOrchid2}{rgb}{0.82,0.37,0.93}
\definecolor{MediumOrchid3}{rgb}{0.71,0.32,0.80}
\definecolor{MediumOrchid4}{rgb}{0.48,0.22,0.55}
\definecolor{MediumOrchid}{rgb}{0.73,0.33,0.83}
\definecolor{MediumPurple1}{rgb}{0.67,0.51,1.00}
\definecolor{MediumPurple2}{rgb}{0.62,0.47,0.93}
\definecolor{MediumPurple3}{rgb}{0.54,0.41,0.80}
\definecolor{MediumPurple4}{rgb}{0.36,0.28,0.55}
\definecolor{MediumPurple}{rgb}{0.58,0.44,0.86}
\definecolor{MediumSeaGreen}{rgb}{0.24,0.70,0.44}
\definecolor{MediumSlateBlue}{rgb}{0.48,0.41,0.93}
\definecolor{MediumSpringGreen}{rgb}{0.00,0.98,0.60}
\definecolor{MediumTurquoise}{rgb}{0.28,0.82,0.80}
\definecolor{MediumVioletRed}{rgb}{0.78,0.08,0.52}
\definecolor{MidnightBlue}{rgb}{0.10,0.10,0.44}
\definecolor{MintCream}{rgb}{0.96,1.00,0.98}
\definecolor{MistyRose1}{rgb}{1.00,0.89,0.88}
\definecolor{MistyRose2}{rgb}{0.93,0.84,0.82}
\definecolor{MistyRose3}{rgb}{0.80,0.72,0.71}
\definecolor{MistyRose4}{rgb}{0.55,0.49,0.48}
\definecolor{MistyRose}{rgb}{1.00,0.89,0.88}
\definecolor{NavajoWhite1}{rgb}{1.00,0.87,0.68}
\definecolor{NavajoWhite2}{rgb}{0.93,0.81,0.63}
\definecolor{NavajoWhite3}{rgb}{0.80,0.70,0.55}
\definecolor{NavajoWhite4}{rgb}{0.55,0.47,0.37}
\definecolor{NavajoWhite}{rgb}{1.00,0.87,0.68}
\definecolor{NavyBlue}{rgb}{0.00,0.00,0.50}
\definecolor{OldLace}{rgb}{0.99,0.96,0.90}
\definecolor{OliveDrab1}{rgb}{0.75,1.00,0.24}
\definecolor{OliveDrab2}{rgb}{0.70,0.93,0.23}
\definecolor{OliveDrab3}{rgb}{0.60,0.80,0.20}
\definecolor{OliveDrab4}{rgb}{0.41,0.55,0.13}
\definecolor{OliveDrab}{rgb}{0.42,0.56,0.14}
\definecolor{OrangeRed1}{rgb}{1.00,0.27,0.00}
\definecolor{OrangeRed2}{rgb}{0.93,0.25,0.00}
\definecolor{OrangeRed3}{rgb}{0.80,0.22,0.00}
\definecolor{OrangeRed4}{rgb}{0.55,0.15,0.00}
\definecolor{OrangeRed}{rgb}{1.00,0.27,0.00}
\definecolor{PaleGoldenrod}{rgb}{0.93,0.91,0.67}
\definecolor{PaleGreen1}{rgb}{0.60,1.00,0.60}
\definecolor{PaleGreen2}{rgb}{0.56,0.93,0.56}
\definecolor{PaleGreen3}{rgb}{0.49,0.80,0.49}
\definecolor{PaleGreen4}{rgb}{0.33,0.55,0.33}
\definecolor{PaleGreen}{rgb}{0.60,0.98,0.60}
\definecolor{PaleTurquoise1}{rgb}{0.73,1.00,1.00}
\definecolor{PaleTurquoise2}{rgb}{0.68,0.93,0.93}
\definecolor{PaleTurquoise3}{rgb}{0.59,0.80,0.80}
\definecolor{PaleTurquoise4}{rgb}{0.40,0.55,0.55}
\definecolor{PaleTurquoise}{rgb}{0.69,0.93,0.93}
\definecolor{PaleVioletRed1}{rgb}{1.00,0.51,0.67}
\definecolor{PaleVioletRed2}{rgb}{0.93,0.47,0.62}
\definecolor{PaleVioletRed3}{rgb}{0.80,0.41,0.54}
\definecolor{PaleVioletRed4}{rgb}{0.55,0.28,0.36}
\definecolor{PaleVioletRed}{rgb}{0.86,0.44,0.58}
\definecolor{PapayaWhip}{rgb}{1.00,0.94,0.84}
\definecolor{PeachPuff1}{rgb}{1.00,0.85,0.73}
\definecolor{PeachPuff2}{rgb}{0.93,0.80,0.68}
\definecolor{PeachPuff3}{rgb}{0.80,0.69,0.58}
\definecolor{PeachPuff4}{rgb}{0.55,0.47,0.40}
\definecolor{PeachPuff}{rgb}{1.00,0.85,0.73}
\definecolor{PowderBlue}{rgb}{0.69,0.88,0.90}
\definecolor{RosyBrown1}{rgb}{1.00,0.76,0.76}
\definecolor{RosyBrown2}{rgb}{0.93,0.71,0.71}
\definecolor{RosyBrown3}{rgb}{0.80,0.61,0.61}
\definecolor{RosyBrown4}{rgb}{0.55,0.41,0.41}
\definecolor{RosyBrown}{rgb}{0.74,0.56,0.56}
\definecolor{RoyalBlue1}{rgb}{0.28,0.46,1.00}
\definecolor{RoyalBlue2}{rgb}{0.26,0.43,0.93}
\definecolor{RoyalBlue3}{rgb}{0.23,0.37,0.80}
\definecolor{RoyalBlue4}{rgb}{0.15,0.25,0.55}
\definecolor{RoyalBlue}{rgb}{0.25,0.41,0.88}
\definecolor{SaddleBrown}{rgb}{0.55,0.27,0.07}
\definecolor{SandyBrown}{rgb}{0.96,0.64,0.38}
\definecolor{SeaGreen1}{rgb}{0.33,1.00,0.62}
\definecolor{SeaGreen2}{rgb}{0.31,0.93,0.58}
\definecolor{SeaGreen3}{rgb}{0.26,0.80,0.50}
\definecolor{SeaGreen4}{rgb}{0.18,0.55,0.34}
\definecolor{SeaGreen}{rgb}{0.18,0.55,0.34}
\definecolor{SkyBlue1}{rgb}{0.53,0.81,1.00}
\definecolor{SkyBlue2}{rgb}{0.49,0.75,0.93}
\definecolor{SkyBlue3}{rgb}{0.42,0.65,0.80}
\definecolor{SkyBlue4}{rgb}{0.29,0.44,0.55}
\definecolor{SkyBlue}{rgb}{0.53,0.81,0.92}
\definecolor{SlateBlue1}{rgb}{0.51,0.44,1.00}
\definecolor{SlateBlue2}{rgb}{0.48,0.40,0.93}
\definecolor{SlateBlue3}{rgb}{0.41,0.35,0.80}
\definecolor{SlateBlue4}{rgb}{0.28,0.24,0.55}
\definecolor{SlateBlue}{rgb}{0.42,0.35,0.80}
\definecolor{SlateGray1}{rgb}{0.78,0.89,1.00}
\definecolor{SlateGray2}{rgb}{0.73,0.83,0.93}
\definecolor{SlateGray3}{rgb}{0.62,0.71,0.80}
\definecolor{SlateGray4}{rgb}{0.42,0.48,0.55}
\definecolor{SlateGray}{rgb}{0.44,0.50,0.56}
\definecolor{SlateGrey}{rgb}{0.44,0.50,0.56}
\definecolor{SpringGreen1}{rgb}{0.00,1.00,0.50}
\definecolor{SpringGreen2}{rgb}{0.00,0.93,0.46}
\definecolor{SpringGreen3}{rgb}{0.00,0.80,0.40}
\definecolor{SpringGreen4}{rgb}{0.00,0.55,0.27}
\definecolor{SpringGreen}{rgb}{0.00,1.00,0.50}
\definecolor{SteelBlue1}{rgb}{0.39,0.72,1.00}
\definecolor{SteelBlue2}{rgb}{0.36,0.67,0.93}
\definecolor{SteelBlue3}{rgb}{0.31,0.58,0.80}
\definecolor{SteelBlue4}{rgb}{0.21,0.39,0.55}
\definecolor{SteelBlue}{rgb}{0.27,0.51,0.71}
\definecolor{VioletRed1}{rgb}{1.00,0.24,0.59}
\definecolor{VioletRed2}{rgb}{0.93,0.23,0.55}
\definecolor{VioletRed3}{rgb}{0.80,0.20,0.47}
\definecolor{VioletRed4}{rgb}{0.55,0.13,0.32}
\definecolor{VioletRed}{rgb}{0.82,0.13,0.56}
\definecolor{WhiteSmoke}{rgb}{0.96,0.96,0.96}
\definecolor{YellowGreen}{rgb}{0.60,0.80,0.20}
\definecolor{aliceblue}{rgb}{0.94,0.97,1.00}
\definecolor{antiquewhite}{rgb}{0.98,0.92,0.84}
\definecolor{aquamarine1}{rgb}{0.50,1.00,0.83}
\definecolor{aquamarine2}{rgb}{0.46,0.93,0.78}
\definecolor{aquamarine3}{rgb}{0.40,0.80,0.67}
\definecolor{aquamarine4}{rgb}{0.27,0.55,0.45}
\definecolor{aquamarine}{rgb}{0.50,1.00,0.83}
\definecolor{azure1}{rgb}{0.94,1.00,1.00}
\definecolor{azure2}{rgb}{0.88,0.93,0.93}
\definecolor{azure3}{rgb}{0.76,0.80,0.80}
\definecolor{azure4}{rgb}{0.51,0.55,0.55}
\definecolor{azure}{rgb}{0.94,1.00,1.00}
\definecolor{beige}{rgb}{0.96,0.96,0.86}
\definecolor{bisque1}{rgb}{1.00,0.89,0.77}
\definecolor{bisque2}{rgb}{0.93,0.84,0.72}
\definecolor{bisque3}{rgb}{0.80,0.72,0.62}
\definecolor{bisque4}{rgb}{0.55,0.49,0.42}
\definecolor{bisque}{rgb}{1.00,0.89,0.77}
\definecolor{black}{rgb}{0.00,0.00,0.00}
\definecolor{blanchedalmond}{rgb}{1.00,0.92,0.80}
\definecolor{blue1}{rgb}{0.00,0.00,1.00}
\definecolor{blue2}{rgb}{0.00,0.00,0.93}
\definecolor{blue3}{rgb}{0.00,0.00,0.80}
\definecolor{blue4}{rgb}{0.00,0.00,0.55}
\definecolor{blueviolet}{rgb}{0.54,0.17,0.89}
\definecolor{blue}{rgb}{0.00,0.00,1.00}
\definecolor{brown1}{rgb}{1.00,0.25,0.25}
\definecolor{brown2}{rgb}{0.93,0.23,0.23}
\definecolor{brown3}{rgb}{0.80,0.20,0.20}
\definecolor{brown4}{rgb}{0.55,0.14,0.14}
\definecolor{brown}{rgb}{0.65,0.16,0.16}
\definecolor{burlywood1}{rgb}{1.00,0.83,0.61}
\definecolor{burlywood2}{rgb}{0.93,0.77,0.57}
\definecolor{burlywood3}{rgb}{0.80,0.67,0.49}
\definecolor{burlywood4}{rgb}{0.55,0.45,0.33}
\definecolor{burlywood}{rgb}{0.87,0.72,0.53}
\definecolor{cadetblue}{rgb}{0.37,0.62,0.63}
\definecolor{chartreuse1}{rgb}{0.50,1.00,0.00}
\definecolor{chartreuse2}{rgb}{0.46,0.93,0.00}
\definecolor{chartreuse3}{rgb}{0.40,0.80,0.00}
\definecolor{chartreuse4}{rgb}{0.27,0.55,0.00}
\definecolor{chartreuse}{rgb}{0.50,1.00,0.00}
\definecolor{chocolate1}{rgb}{1.00,0.50,0.14}
\definecolor{chocolate2}{rgb}{0.93,0.46,0.13}
\definecolor{chocolate3}{rgb}{0.80,0.40,0.11}
\definecolor{chocolate4}{rgb}{0.55,0.27,0.07}
\definecolor{chocolate}{rgb}{0.82,0.41,0.12}
\definecolor{coral1}{rgb}{1.00,0.45,0.34}
\definecolor{coral2}{rgb}{0.93,0.42,0.31}
\definecolor{coral3}{rgb}{0.80,0.36,0.27}
\definecolor{coral4}{rgb}{0.55,0.24,0.18}
\definecolor{coral}{rgb}{1.00,0.50,0.31}
\definecolor{cornflowerblue}{rgb}{0.39,0.58,0.93}
\definecolor{cornsilk1}{rgb}{1.00,0.97,0.86}
\definecolor{cornsilk2}{rgb}{0.93,0.91,0.80}
\definecolor{cornsilk3}{rgb}{0.80,0.78,0.69}
\definecolor{cornsilk4}{rgb}{0.55,0.53,0.47}
\definecolor{cornsilk}{rgb}{1.00,0.97,0.86}
\definecolor{cyan1}{rgb}{0.00,1.00,1.00}
\definecolor{cyan2}{rgb}{0.00,0.93,0.93}
\definecolor{cyan3}{rgb}{0.00,0.80,0.80}
\definecolor{cyan4}{rgb}{0.00,0.55,0.55}
\definecolor{cyan}{rgb}{0.00,1.00,1.00}
\definecolor{darkblue}{rgb}{0.00,0.00,0.55}
\definecolor{darkcyan}{rgb}{0.00,0.55,0.55}
\definecolor{darkgoldenrod}{rgb}{0.72,0.53,0.04}
\definecolor{darkgray}{rgb}{0.66,0.66,0.66}
\definecolor{darkgreen}{rgb}{0.00,0.39,0.00}
\definecolor{darkgrey}{rgb}{0.66,0.66,0.66}
\definecolor{darkkhaki}{rgb}{0.74,0.72,0.42}
\definecolor{darkmagenta}{rgb}{0.55,0.00,0.55}
\definecolor{darkolive}{rgb}{0.33,0.42,0.18}
\definecolor{darkorange}{rgb}{1.00,0.55,0.00}
\definecolor{darkorchid}{rgb}{0.60,0.20,0.80}
\definecolor{darkred}{rgb}{0.55,0.00,0.00}
\definecolor{darksalmon}{rgb}{0.91,0.59,0.48}
\definecolor{darksea}{rgb}{0.56,0.74,0.56}
\definecolor{darkslate}{rgb}{0.18,0.31,0.31}
\definecolor{darkslate}{rgb}{0.18,0.31,0.31}
\definecolor{darkslate}{rgb}{0.28,0.24,0.55}
\definecolor{darkturquoise}{rgb}{0.00,0.81,0.82}
\definecolor{darkviolet}{rgb}{0.58,0.00,0.83}
\definecolor{deeppink}{rgb}{1.00,0.08,0.58}
\definecolor{deepsky}{rgb}{0.00,0.75,1.00}
\definecolor{dimgray}{rgb}{0.41,0.41,0.41}
\definecolor{dimgrey}{rgb}{0.41,0.41,0.41}
\definecolor{dodgerblue}{rgb}{0.12,0.56,1.00}
\definecolor{firebrick1}{rgb}{1.00,0.19,0.19}
\definecolor{firebrick2}{rgb}{0.93,0.17,0.17}
\definecolor{firebrick3}{rgb}{0.80,0.15,0.15}
\definecolor{firebrick4}{rgb}{0.55,0.10,0.10}
\definecolor{firebrick}{rgb}{0.70,0.13,0.13}
\definecolor{floralwhite}{rgb}{1.00,0.98,0.94}
\definecolor{forestgreen}{rgb}{0.13,0.55,0.13}
\definecolor{gainsboro}{rgb}{0.86,0.86,0.86}
\definecolor{ghostwhite}{rgb}{0.97,0.97,1.00}
\definecolor{gold1}{rgb}{1.00,0.84,0.00}
\definecolor{gold2}{rgb}{0.93,0.79,0.00}
\definecolor{gold3}{rgb}{0.80,0.68,0.00}
\definecolor{gold4}{rgb}{0.55,0.46,0.00}
\definecolor{goldenrod1}{rgb}{1.00,0.76,0.15}
\definecolor{goldenrod2}{rgb}{0.93,0.71,0.13}
\definecolor{goldenrod3}{rgb}{0.80,0.61,0.11}
\definecolor{goldenrod4}{rgb}{0.55,0.41,0.08}
\definecolor{goldenrod}{rgb}{0.85,0.65,0.13}
\definecolor{gold}{rgb}{1.00,0.84,0.00}
\definecolor{gray0}{rgb}{0.00,0.00,0.00}
\definecolor{gray100}{rgb}{1.00,1.00,1.00}
\definecolor{gray10}{rgb}{0.10,0.10,0.10}
\definecolor{gray11}{rgb}{0.11,0.11,0.11}
\definecolor{gray12}{rgb}{0.12,0.12,0.12}
\definecolor{gray13}{rgb}{0.13,0.13,0.13}
\definecolor{gray14}{rgb}{0.14,0.14,0.14}
\definecolor{gray15}{rgb}{0.15,0.15,0.15}
\definecolor{gray16}{rgb}{0.16,0.16,0.16}
\definecolor{gray17}{rgb}{0.17,0.17,0.17}
\definecolor{gray18}{rgb}{0.18,0.18,0.18}
\definecolor{gray19}{rgb}{0.19,0.19,0.19}
\definecolor{gray1}{rgb}{0.01,0.01,0.01}
\definecolor{gray20}{rgb}{0.20,0.20,0.20}
\definecolor{gray21}{rgb}{0.21,0.21,0.21}
\definecolor{gray22}{rgb}{0.22,0.22,0.22}
\definecolor{gray23}{rgb}{0.23,0.23,0.23}
\definecolor{gray24}{rgb}{0.24,0.24,0.24}
\definecolor{gray25}{rgb}{0.25,0.25,0.25}
\definecolor{gray26}{rgb}{0.26,0.26,0.26}
\definecolor{gray27}{rgb}{0.27,0.27,0.27}
\definecolor{gray28}{rgb}{0.28,0.28,0.28}
\definecolor{gray29}{rgb}{0.29,0.29,0.29}
\definecolor{gray2}{rgb}{0.02,0.02,0.02}
\definecolor{gray30}{rgb}{0.30,0.30,0.30}
\definecolor{gray31}{rgb}{0.31,0.31,0.31}
\definecolor{gray32}{rgb}{0.32,0.32,0.32}
\definecolor{gray33}{rgb}{0.33,0.33,0.33}
\definecolor{gray34}{rgb}{0.34,0.34,0.34}
\definecolor{gray35}{rgb}{0.35,0.35,0.35}
\definecolor{gray36}{rgb}{0.36,0.36,0.36}
\definecolor{gray37}{rgb}{0.37,0.37,0.37}
\definecolor{gray38}{rgb}{0.38,0.38,0.38}
\definecolor{gray39}{rgb}{0.39,0.39,0.39}
\definecolor{gray3}{rgb}{0.03,0.03,0.03}
\definecolor{gray40}{rgb}{0.40,0.40,0.40}
\definecolor{gray41}{rgb}{0.41,0.41,0.41}
\definecolor{gray42}{rgb}{0.42,0.42,0.42}
\definecolor{gray43}{rgb}{0.43,0.43,0.43}
\definecolor{gray44}{rgb}{0.44,0.44,0.44}
\definecolor{gray45}{rgb}{0.45,0.45,0.45}
\definecolor{gray46}{rgb}{0.46,0.46,0.46}
\definecolor{gray47}{rgb}{0.47,0.47,0.47}
\definecolor{gray48}{rgb}{0.48,0.48,0.48}
\definecolor{gray49}{rgb}{0.49,0.49,0.49}
\definecolor{gray4}{rgb}{0.04,0.04,0.04}
\definecolor{gray50}{rgb}{0.50,0.50,0.50}
\definecolor{gray51}{rgb}{0.51,0.51,0.51}
\definecolor{gray52}{rgb}{0.52,0.52,0.52}
\definecolor{gray53}{rgb}{0.53,0.53,0.53}
\definecolor{gray54}{rgb}{0.54,0.54,0.54}
\definecolor{gray55}{rgb}{0.55,0.55,0.55}
\definecolor{gray56}{rgb}{0.56,0.56,0.56}
\definecolor{gray57}{rgb}{0.57,0.57,0.57}
\definecolor{gray58}{rgb}{0.58,0.58,0.58}
\definecolor{gray59}{rgb}{0.59,0.59,0.59}
\definecolor{gray5}{rgb}{0.05,0.05,0.05}
\definecolor{gray60}{rgb}{0.60,0.60,0.60}
\definecolor{gray61}{rgb}{0.61,0.61,0.61}
\definecolor{gray62}{rgb}{0.62,0.62,0.62}
\definecolor{gray63}{rgb}{0.63,0.63,0.63}
\definecolor{gray64}{rgb}{0.64,0.64,0.64}
\definecolor{gray65}{rgb}{0.65,0.65,0.65}
\definecolor{gray66}{rgb}{0.66,0.66,0.66}
\definecolor{gray67}{rgb}{0.67,0.67,0.67}
\definecolor{gray68}{rgb}{0.68,0.68,0.68}
\definecolor{gray69}{rgb}{0.69,0.69,0.69}
\definecolor{gray6}{rgb}{0.06,0.06,0.06}
\definecolor{gray70}{rgb}{0.70,0.70,0.70}
\definecolor{gray71}{rgb}{0.71,0.71,0.71}
\definecolor{gray72}{rgb}{0.72,0.72,0.72}
\definecolor{gray73}{rgb}{0.73,0.73,0.73}
\definecolor{gray74}{rgb}{0.74,0.74,0.74}
\definecolor{gray75}{rgb}{0.75,0.75,0.75}
\definecolor{gray76}{rgb}{0.76,0.76,0.76}
\definecolor{gray77}{rgb}{0.77,0.77,0.77}
\definecolor{gray78}{rgb}{0.78,0.78,0.78}
\definecolor{gray79}{rgb}{0.79,0.79,0.79}
\definecolor{gray7}{rgb}{0.07,0.07,0.07}
\definecolor{gray80}{rgb}{0.80,0.80,0.80}
\definecolor{gray81}{rgb}{0.81,0.81,0.81}
\definecolor{gray82}{rgb}{0.82,0.82,0.82}
\definecolor{gray83}{rgb}{0.83,0.83,0.83}
\definecolor{gray84}{rgb}{0.84,0.84,0.84}
\definecolor{gray85}{rgb}{0.85,0.85,0.85}
\definecolor{gray86}{rgb}{0.86,0.86,0.86}
\definecolor{gray87}{rgb}{0.87,0.87,0.87}
\definecolor{gray88}{rgb}{0.88,0.88,0.88}
\definecolor{gray89}{rgb}{0.89,0.89,0.89}
\definecolor{gray8}{rgb}{0.08,0.08,0.08}
\definecolor{gray90}{rgb}{0.90,0.90,0.90}
\definecolor{gray91}{rgb}{0.91,0.91,0.91}
\definecolor{gray92}{rgb}{0.92,0.92,0.92}
\definecolor{gray93}{rgb}{0.93,0.93,0.93}
\definecolor{gray94}{rgb}{0.94,0.94,0.94}
\definecolor{gray95}{rgb}{0.95,0.95,0.95}
\definecolor{gray96}{rgb}{0.96,0.96,0.96}
\definecolor{gray97}{rgb}{0.97,0.97,0.97}
\definecolor{gray98}{rgb}{0.98,0.98,0.98}
\definecolor{gray99}{rgb}{0.99,0.99,0.99}
\definecolor{gray9}{rgb}{0.09,0.09,0.09}
\definecolor{gray}{rgb}{0.75,0.75,0.75}
\definecolor{green1}{rgb}{0.00,1.00,0.00}
\definecolor{green2}{rgb}{0.00,0.93,0.00}
\definecolor{green3}{rgb}{0.00,0.80,0.00}
\definecolor{green4}{rgb}{0.00,0.55,0.00}
\definecolor{greenyellow}{rgb}{0.68,1.00,0.18}
\definecolor{green}{rgb}{0.00,1.00,0.00}
\definecolor{grey0}{rgb}{0.00,0.00,0.00}
\definecolor{grey100}{rgb}{1.00,1.00,1.00}
\definecolor{grey10}{rgb}{0.10,0.10,0.10}
\definecolor{grey11}{rgb}{0.11,0.11,0.11}
\definecolor{grey12}{rgb}{0.12,0.12,0.12}
\definecolor{grey13}{rgb}{0.13,0.13,0.13}
\definecolor{grey14}{rgb}{0.14,0.14,0.14}
\definecolor{grey15}{rgb}{0.15,0.15,0.15}
\definecolor{grey16}{rgb}{0.16,0.16,0.16}
\definecolor{grey17}{rgb}{0.17,0.17,0.17}
\definecolor{grey18}{rgb}{0.18,0.18,0.18}
\definecolor{grey19}{rgb}{0.19,0.19,0.19}
\definecolor{grey1}{rgb}{0.01,0.01,0.01}
\definecolor{grey20}{rgb}{0.20,0.20,0.20}
\definecolor{grey21}{rgb}{0.21,0.21,0.21}
\definecolor{grey22}{rgb}{0.22,0.22,0.22}
\definecolor{grey23}{rgb}{0.23,0.23,0.23}
\definecolor{grey24}{rgb}{0.24,0.24,0.24}
\definecolor{grey25}{rgb}{0.25,0.25,0.25}
\definecolor{grey26}{rgb}{0.26,0.26,0.26}
\definecolor{grey27}{rgb}{0.27,0.27,0.27}
\definecolor{grey28}{rgb}{0.28,0.28,0.28}
\definecolor{grey29}{rgb}{0.29,0.29,0.29}
\definecolor{grey2}{rgb}{0.02,0.02,0.02}
\definecolor{grey30}{rgb}{0.30,0.30,0.30}
\definecolor{grey31}{rgb}{0.31,0.31,0.31}
\definecolor{grey32}{rgb}{0.32,0.32,0.32}
\definecolor{grey33}{rgb}{0.33,0.33,0.33}
\definecolor{grey34}{rgb}{0.34,0.34,0.34}
\definecolor{grey35}{rgb}{0.35,0.35,0.35}
\definecolor{grey36}{rgb}{0.36,0.36,0.36}
\definecolor{grey37}{rgb}{0.37,0.37,0.37}
\definecolor{grey38}{rgb}{0.38,0.38,0.38}
\definecolor{grey39}{rgb}{0.39,0.39,0.39}
\definecolor{grey3}{rgb}{0.03,0.03,0.03}
\definecolor{grey40}{rgb}{0.40,0.40,0.40}
\definecolor{grey41}{rgb}{0.41,0.41,0.41}
\definecolor{grey42}{rgb}{0.42,0.42,0.42}
\definecolor{grey43}{rgb}{0.43,0.43,0.43}
\definecolor{grey44}{rgb}{0.44,0.44,0.44}
\definecolor{grey45}{rgb}{0.45,0.45,0.45}
\definecolor{grey46}{rgb}{0.46,0.46,0.46}
\definecolor{grey47}{rgb}{0.47,0.47,0.47}
\definecolor{grey48}{rgb}{0.48,0.48,0.48}
\definecolor{grey49}{rgb}{0.49,0.49,0.49}
\definecolor{grey4}{rgb}{0.04,0.04,0.04}
\definecolor{grey50}{rgb}{0.50,0.50,0.50}
\definecolor{grey51}{rgb}{0.51,0.51,0.51}
\definecolor{grey52}{rgb}{0.52,0.52,0.52}
\definecolor{grey53}{rgb}{0.53,0.53,0.53}
\definecolor{grey54}{rgb}{0.54,0.54,0.54}
\definecolor{grey55}{rgb}{0.55,0.55,0.55}
\definecolor{grey56}{rgb}{0.56,0.56,0.56}
\definecolor{grey57}{rgb}{0.57,0.57,0.57}
\definecolor{grey58}{rgb}{0.58,0.58,0.58}
\definecolor{grey59}{rgb}{0.59,0.59,0.59}
\definecolor{grey5}{rgb}{0.05,0.05,0.05}
\definecolor{grey60}{rgb}{0.60,0.60,0.60}
\definecolor{grey61}{rgb}{0.61,0.61,0.61}
\definecolor{grey62}{rgb}{0.62,0.62,0.62}
\definecolor{grey63}{rgb}{0.63,0.63,0.63}
\definecolor{grey64}{rgb}{0.64,0.64,0.64}
\definecolor{grey65}{rgb}{0.65,0.65,0.65}
\definecolor{grey66}{rgb}{0.66,0.66,0.66}
\definecolor{grey67}{rgb}{0.67,0.67,0.67}
\definecolor{grey68}{rgb}{0.68,0.68,0.68}
\definecolor{grey69}{rgb}{0.69,0.69,0.69}
\definecolor{grey6}{rgb}{0.06,0.06,0.06}
\definecolor{grey70}{rgb}{0.70,0.70,0.70}
\definecolor{grey71}{rgb}{0.71,0.71,0.71}
\definecolor{grey72}{rgb}{0.72,0.72,0.72}
\definecolor{grey73}{rgb}{0.73,0.73,0.73}
\definecolor{grey74}{rgb}{0.74,0.74,0.74}
\definecolor{grey75}{rgb}{0.75,0.75,0.75}
\definecolor{grey76}{rgb}{0.76,0.76,0.76}
\definecolor{grey77}{rgb}{0.77,0.77,0.77}
\definecolor{grey78}{rgb}{0.78,0.78,0.78}
\definecolor{grey79}{rgb}{0.79,0.79,0.79}
\definecolor{grey7}{rgb}{0.07,0.07,0.07}
\definecolor{grey80}{rgb}{0.80,0.80,0.80}
\definecolor{grey81}{rgb}{0.81,0.81,0.81}
\definecolor{grey82}{rgb}{0.82,0.82,0.82}
\definecolor{grey83}{rgb}{0.83,0.83,0.83}
\definecolor{grey84}{rgb}{0.84,0.84,0.84}
\definecolor{grey85}{rgb}{0.85,0.85,0.85}
\definecolor{grey86}{rgb}{0.86,0.86,0.86}
\definecolor{grey87}{rgb}{0.87,0.87,0.87}
\definecolor{grey88}{rgb}{0.88,0.88,0.88}
\definecolor{grey89}{rgb}{0.89,0.89,0.89}
\definecolor{grey8}{rgb}{0.08,0.08,0.08}
\definecolor{grey90}{rgb}{0.90,0.90,0.90}
\definecolor{grey91}{rgb}{0.91,0.91,0.91}
\definecolor{grey92}{rgb}{0.92,0.92,0.92}
\definecolor{grey93}{rgb}{0.93,0.93,0.93}
\definecolor{grey94}{rgb}{0.94,0.94,0.94}
\definecolor{grey95}{rgb}{0.95,0.95,0.95}
\definecolor{grey96}{rgb}{0.96,0.96,0.96}
\definecolor{grey97}{rgb}{0.97,0.97,0.97}
\definecolor{grey98}{rgb}{0.98,0.98,0.98}
\definecolor{grey99}{rgb}{0.99,0.99,0.99}
\definecolor{grey9}{rgb}{0.09,0.09,0.09}
\definecolor{grey}{rgb}{0.75,0.75,0.75}
\definecolor{honeydew1}{rgb}{0.94,1.00,0.94}
\definecolor{honeydew2}{rgb}{0.88,0.93,0.88}
\definecolor{honeydew3}{rgb}{0.76,0.80,0.76}
\definecolor{honeydew4}{rgb}{0.51,0.55,0.51}
\definecolor{honeydew}{rgb}{0.94,1.00,0.94}
\definecolor{hotpink}{rgb}{1.00,0.41,0.71}
\definecolor{indianred}{rgb}{0.80,0.36,0.36}
\definecolor{ivory1}{rgb}{1.00,1.00,0.94}
\definecolor{ivory2}{rgb}{0.93,0.93,0.88}
\definecolor{ivory3}{rgb}{0.80,0.80,0.76}
\definecolor{ivory4}{rgb}{0.55,0.55,0.51}
\definecolor{ivory}{rgb}{1.00,1.00,0.94}
\definecolor{khaki1}{rgb}{1.00,0.96,0.56}
\definecolor{khaki2}{rgb}{0.93,0.90,0.52}
\definecolor{khaki3}{rgb}{0.80,0.78,0.45}
\definecolor{khaki4}{rgb}{0.55,0.53,0.31}
\definecolor{khaki}{rgb}{0.94,0.90,0.55}
\definecolor{lavenderblush}{rgb}{1.00,0.94,0.96}
\definecolor{lavender}{rgb}{0.90,0.90,0.98}
\definecolor{lawngreen}{rgb}{0.49,0.99,0.00}
\definecolor{lemonchiffon}{rgb}{1.00,0.98,0.80}
\definecolor{lightblue}{rgb}{0.68,0.85,0.90}
\definecolor{lightcoral}{rgb}{0.94,0.50,0.50}
\definecolor{lightcyan}{rgb}{0.88,1.00,1.00}
\definecolor{lightgoldenrod}{rgb}{0.93,0.87,0.51}
\definecolor{lightgoldenrod}{rgb}{0.98,0.98,0.82}
\definecolor{lightgray}{rgb}{0.83,0.83,0.83}
\definecolor{lightgreen}{rgb}{0.56,0.93,0.56}
\definecolor{lightgrey}{rgb}{0.83,0.83,0.83}
\definecolor{lightpink}{rgb}{1.00,0.71,0.76}
\definecolor{lightsalmon}{rgb}{1.00,0.63,0.48}
\definecolor{lightsea}{rgb}{0.13,0.70,0.67}
\definecolor{lightsky}{rgb}{0.53,0.81,0.98}
\definecolor{lightslate}{rgb}{0.47,0.53,0.60}
\definecolor{lightslate}{rgb}{0.47,0.53,0.60}
\definecolor{lightslate}{rgb}{0.52,0.44,1.00}
\definecolor{lightsteel}{rgb}{0.69,0.77,0.87}
\definecolor{lightyellow}{rgb}{1.00,1.00,0.88}
\definecolor{limegreen}{rgb}{0.20,0.80,0.20}
\definecolor{linen}{rgb}{0.98,0.94,0.90}
\definecolor{magenta1}{rgb}{1.00,0.00,1.00}
\definecolor{magenta2}{rgb}{0.93,0.00,0.93}
\definecolor{magenta3}{rgb}{0.80,0.00,0.80}
\definecolor{magenta4}{rgb}{0.55,0.00,0.55}
\definecolor{magenta}{rgb}{1.00,0.00,1.00}
\definecolor{maroon1}{rgb}{1.00,0.20,0.70}
\definecolor{maroon2}{rgb}{0.93,0.19,0.65}
\definecolor{maroon3}{rgb}{0.80,0.16,0.56}
\definecolor{maroon4}{rgb}{0.55,0.11,0.38}
\definecolor{maroon}{rgb}{0.69,0.19,0.38}
\definecolor{mediumaquamarine}{rgb}{0.40,0.80,0.67}
\definecolor{mediumblue}{rgb}{0.00,0.00,0.80}
\definecolor{mediumorchid}{rgb}{0.73,0.33,0.83}
\definecolor{mediumpurple}{rgb}{0.58,0.44,0.86}
\definecolor{mediumsea}{rgb}{0.24,0.70,0.44}
\definecolor{mediumslate}{rgb}{0.48,0.41,0.93}
\definecolor{mediumspring}{rgb}{0.00,0.98,0.60}
\definecolor{mediumturquoise}{rgb}{0.28,0.82,0.80}
\definecolor{mediumviolet}{rgb}{0.78,0.08,0.52}
\definecolor{midnightblue}{rgb}{0.10,0.10,0.44}
\definecolor{mintcream}{rgb}{0.96,1.00,0.98}
\definecolor{mistyrose}{rgb}{1.00,0.89,0.88}
\definecolor{moccasin}{rgb}{1.00,0.89,0.71}
\definecolor{navajowhite}{rgb}{1.00,0.87,0.68}
\definecolor{navyblue}{rgb}{0.00,0.00,0.50}
\definecolor{navy}{rgb}{0.00,0.00,0.50}
\definecolor{oldlace}{rgb}{0.99,0.96,0.90}
\definecolor{olivedrab}{rgb}{0.42,0.56,0.14}
\definecolor{orange1}{rgb}{1.00,0.65,0.00}
\definecolor{orange2}{rgb}{0.93,0.60,0.00}
\definecolor{orange3}{rgb}{0.80,0.52,0.00}
\definecolor{orange4}{rgb}{0.55,0.35,0.00}
\definecolor{orangered}{rgb}{1.00,0.27,0.00}
\definecolor{orange}{rgb}{1.00,0.65,0.00}
\definecolor{orchid1}{rgb}{1.00,0.51,0.98}
\definecolor{orchid2}{rgb}{0.93,0.48,0.91}
\definecolor{orchid3}{rgb}{0.80,0.41,0.79}
\definecolor{orchid4}{rgb}{0.55,0.28,0.54}
\definecolor{orchid}{rgb}{0.85,0.44,0.84}
\definecolor{palegoldenrod}{rgb}{0.93,0.91,0.67}
\definecolor{palegreen}{rgb}{0.60,0.98,0.60}
\definecolor{paleturquoise}{rgb}{0.69,0.93,0.93}
\definecolor{paleviolet}{rgb}{0.86,0.44,0.58}
\definecolor{papayawhip}{rgb}{1.00,0.94,0.84}
\definecolor{peachpuff}{rgb}{1.00,0.85,0.73}
\definecolor{peru}{rgb}{0.80,0.52,0.25}
\definecolor{pink1}{rgb}{1.00,0.71,0.77}
\definecolor{pink2}{rgb}{0.93,0.66,0.72}
\definecolor{pink3}{rgb}{0.80,0.57,0.62}
\definecolor{pink4}{rgb}{0.55,0.39,0.42}
\definecolor{pink}{rgb}{1.00,0.75,0.80}
\definecolor{plum1}{rgb}{1.00,0.73,1.00}
\definecolor{plum2}{rgb}{0.93,0.68,0.93}
\definecolor{plum3}{rgb}{0.80,0.59,0.80}
\definecolor{plum4}{rgb}{0.55,0.40,0.55}
\definecolor{plum}{rgb}{0.87,0.63,0.87}
\definecolor{powderblue}{rgb}{0.69,0.88,0.90}
\definecolor{purple1}{rgb}{0.61,0.19,1.00}
\definecolor{purple2}{rgb}{0.57,0.17,0.93}
\definecolor{purple3}{rgb}{0.49,0.15,0.80}
\definecolor{purple4}{rgb}{0.33,0.10,0.55}
\definecolor{purple}{rgb}{0.63,0.13,0.94}
\definecolor{red1}{rgb}{1.00,0.00,0.00}
\definecolor{red2}{rgb}{0.93,0.00,0.00}
\definecolor{red3}{rgb}{0.80,0.00,0.00}
\definecolor{red4}{rgb}{0.55,0.00,0.00}
\definecolor{red}{rgb}{1.00,0.00,0.00}
\definecolor{rosybrown}{rgb}{0.74,0.56,0.56}
\definecolor{royalblue}{rgb}{0.25,0.41,0.88}
\definecolor{saddlebrown}{rgb}{0.55,0.27,0.07}
\definecolor{salmon1}{rgb}{1.00,0.55,0.41}
\definecolor{salmon2}{rgb}{0.93,0.51,0.38}
\definecolor{salmon3}{rgb}{0.80,0.44,0.33}
\definecolor{salmon4}{rgb}{0.55,0.30,0.22}
\definecolor{salmon}{rgb}{0.98,0.50,0.45}
\definecolor{sandybrown}{rgb}{0.96,0.64,0.38}
\definecolor{seagreen}{rgb}{0.18,0.55,0.34}
\definecolor{seashell1}{rgb}{1.00,0.96,0.93}
\definecolor{seashell2}{rgb}{0.93,0.90,0.87}
\definecolor{seashell3}{rgb}{0.80,0.77,0.75}
\definecolor{seashell4}{rgb}{0.55,0.53,0.51}
\definecolor{seashell}{rgb}{1.00,0.96,0.93}
\definecolor{sienna1}{rgb}{1.00,0.51,0.28}
\definecolor{sienna2}{rgb}{0.93,0.47,0.26}
\definecolor{sienna3}{rgb}{0.80,0.41,0.22}
\definecolor{sienna4}{rgb}{0.55,0.28,0.15}
\definecolor{sienna}{rgb}{0.63,0.32,0.18}
\definecolor{skyblue}{rgb}{0.53,0.81,0.92}
\definecolor{slateblue}{rgb}{0.42,0.35,0.80}
\definecolor{slategray}{rgb}{0.44,0.50,0.56}
\definecolor{slategrey}{rgb}{0.44,0.50,0.56}
\definecolor{snow1}{rgb}{1.00,0.98,0.98}
\definecolor{snow2}{rgb}{0.93,0.91,0.91}
\definecolor{snow3}{rgb}{0.80,0.79,0.79}
\definecolor{snow4}{rgb}{0.55,0.54,0.54}
\definecolor{snow}{rgb}{1.00,0.98,0.98}
\definecolor{springgreen}{rgb}{0.00,1.00,0.50}
\definecolor{steelblue}{rgb}{0.27,0.51,0.71}
\definecolor{tan1}{rgb}{1.00,0.65,0.31}
\definecolor{tan2}{rgb}{0.93,0.60,0.29}
\definecolor{tan3}{rgb}{0.80,0.52,0.25}
\definecolor{tan4}{rgb}{0.55,0.35,0.17}
\definecolor{tan}{rgb}{0.82,0.71,0.55}
\definecolor{thistle1}{rgb}{1.00,0.88,1.00}
\definecolor{thistle2}{rgb}{0.93,0.82,0.93}
\definecolor{thistle3}{rgb}{0.80,0.71,0.80}
\definecolor{thistle4}{rgb}{0.55,0.48,0.55}
\definecolor{thistle}{rgb}{0.85,0.75,0.85}
\definecolor{tomato1}{rgb}{1.00,0.39,0.28}
\definecolor{tomato2}{rgb}{0.93,0.36,0.26}
\definecolor{tomato3}{rgb}{0.80,0.31,0.22}
\definecolor{tomato4}{rgb}{0.55,0.21,0.15}
\definecolor{tomato}{rgb}{1.00,0.39,0.28}
\definecolor{turquoise1}{rgb}{0.00,0.96,1.00}
\definecolor{turquoise2}{rgb}{0.00,0.90,0.93}
\definecolor{turquoise3}{rgb}{0.00,0.77,0.80}
\definecolor{turquoise4}{rgb}{0.00,0.53,0.55}
\definecolor{turquoise}{rgb}{0.25,0.88,0.82}
\definecolor{violetred}{rgb}{0.82,0.13,0.56}
\definecolor{violet}{rgb}{0.93,0.51,0.93}
\definecolor{wheat1}{rgb}{1.00,0.91,0.73}
\definecolor{wheat2}{rgb}{0.93,0.85,0.68}
\definecolor{wheat3}{rgb}{0.80,0.73,0.59}
\definecolor{wheat4}{rgb}{0.55,0.49,0.40}
\definecolor{wheat}{rgb}{0.96,0.87,0.70}
\definecolor{whitesmoke}{rgb}{0.96,0.96,0.96}
\definecolor{white}{rgb}{1.00,1.00,1.00}
\definecolor{yellow1}{rgb}{1.00,1.00,0.00}
\definecolor{yellow2}{rgb}{0.93,0.93,0.00}
\definecolor{yellow3}{rgb}{0.80,0.80,0.00}
\definecolor{yellow4}{rgb}{0.55,0.55,0.00}
\definecolor{yellowgreen}{rgb}{0.60,0.80,0.20}
\definecolor{yellow}{rgb}{1.00,1.00,0.00}

\usepackage{amssymb}

\usepackage{epsfig}

\newtheorem{thm}{Theorem}
\newtheorem{lem}{Lemma}
\newtheorem{definition}{Definition}

%

\newcommand{\DENSE}{%
\setlength{\labelwidth}{12pt}%
\setlength{\labelsep}{2pt}%
\setlength{\leftmargin}{\labelwidth}%
\addtolength{\leftmargin}{\labelsep}%
\setlength{\parsep}{0pt}%
\setlength{\itemsep}{0pt}%
\setlength{\topsep}{0pt}
}

\newenvironment{ditemize}{\begin{list}%
{$-$~}{\DENSE}}{\end{list}}

\font\testo = cmss10 at 12pt
\font\scr = cmss10
\font\scrscr = cmss10 at 6pt
\def\erre{\mathchoice{\hbox{\testo
I\kern-.17emR}}{
   \hbox{\testo I\kern-.17emR}}{\hbox{\scr
I\kern-.17emR}}{
   \hbox{\scrscr I\kern-.17emR}}}
\def\enne{\mathchoice{\hbox{\testo
I\kern-.17emN}}{
   \hbox{\testo I\kern-.17emN}}{\hbox{\scr
I\kern-.17emN}}{
   \hbox{\scrscr I\kern-.17emN}}}

\bibliographystyle{elsarticle-num}

\begin{document}


\Large{\bf Some classes of graphs that are not PCGs} \footnote{Partially supported by {\em Sapienza} University of Rome, project ``Combinatorial structures and algorithms for problems in co-phylogeny''.\\
Part of the results of this paper have been submitted to a conference.}

\vspace*{1cm}
\begin{centering}
\large{Pierluigi Baiocchi \,\,\,\,\,\, Tiziana Calamoneri} \\
\large{Angelo Monti \,\,\,\,\,\,\,\,\,Rossella Petreschi}

\vspace*{.5cm}

\normalsize

Computer Science Department,\\ 
``Sapienza'' University of Rome, Italy\\ 
pierluigi.baiocchi@gmail.com, \{calamo,monti,petreschi\}@di.uniroma1.it\\
\,\,
\end{centering}

\begin{abstract}
A graph $G=(V,E)$ is a {\em pairwise compatibility graph} (PCG) if there 
exists an edge-weighted tree $T$ and  two 
non-negative real numbers $d_{min}$ and $d_{max}$,  
$d_{min} \leq d_{max}$, such that each node $u \in V$ is uniquely 
associated to a leaf of $T$ and there is an edge $(u,v) \in E$ 
if and only if $d_{min} \leq d_{T} (u, v) \leq d_{max}$,
where $d_{T} (u, v)$ is the sum of the weights of the edges on the unique 
path $P_{T}(u,v)$ from $u$ to $v$ in $T$. 
Understanding which graph classes lie inside and which ones outside the PCG class is an important issue.
In this paper we propose a new proof technique that allows us to show that some interesting classes of graphs have empty intersection with PCG. 
As an example, we use this technique to show that wheels and graphs obtained as strong product between a cycle and $P_2$ are not PCGs.
\end{abstract}

\normalsize

{\bf keywords:}
Phylogenetic Tree Reconstruction Problem,
Pairwise Compatibility Graphs (PCGs),
PCG Recognition Problem,
Wheel.

%

\section{Introduction}

Graphs we deal with in this paper are motivated by a fundamental problem in computational biology, that is
the reconstruction of ancestral relationships \cite{O99}. 
It is known that the evolutionary history of a set of organisms is represented by a {\em phylogenetic tree}, i.e. a tree where leaves represent distinct known taxa while the internal nodes possible ancestors that might have led through evolution to this set of taxa. The edges of the tree are weighted in order to represent a kind of evolutionary distance among species.  
Given a set of taxa, the {\em phylogenetic tree reconstruction problem} consists in finding the ``best'' phylogenetic tree that explains the given data.  
Since it is not completely clear what ``best'' means, the performance of the reconstruction algorithms is usually evaluated experimentally by comparing the tree produced by the algorithm with those partial subtrees that are unanimously recognized as ``sure'' by biologists. 
However, the tree reconstruction problem is proved to be NP-hard under many criteria of optimality, moreover real phylogenetic trees  are usually huge, so testing these heuristics on real data is in general very difficult. This is the reason why it is common to exploit {\em sample techniques}, extracting relatively small subsets of taxa from large 
phylogenetic trees, according to some biologically-motivated constraints, and to test the reconstruction algorithms only on the smaller subtrees induced by the sample. 
The underlying idea is that the behavior of the algorithm on the whole tree will be more or less the same as on the sample. 
It has been observed that using in the sample very close or very distant taxa can create problems for phylogeny reconstruction algorithms \cite{F78} so, in selecting a sample from the leaves of  the tree, the constraint of keeping the pairwise distance between any two leaves in the sample between two given positive integers $d_{min}$ and $d_{max}$ is used.
This motivates the introduction of \emph{pairwise compatibility graphs} (PCGs): given  a phylogenetic tree $T$, and integers $d_{min}, d_{max}$ we can associate a graph $G$, called the pairwise compatibility graph of $T$,  whose nodes are the leaves  of $T$ and for which there is an edge between two nodes if the corresponding leaves in $T$ are at weighted distance within the interval $[d_{min}, d_{max}]$. 

From a more theoretical point of view, we highlight that the problem of sampling a set of $m$ leaves from a weighted tree $T$, such that their pairwise distance is within some interval $[d_{min},d_{max}]$, reduces to selecting a clique of size $m$ uniformly  at random from the associated pairwise compatibility graph.
As the sampling problem can be solved in polynomial time on PCGs \cite{KMP03}, it follows that the {\em max clique problem} is solved in polynomial time on this class of graphs, if the edge-weighted tree $T$ and the two values $d_{min},d_{max}$ are  known or can be provided in polynomial time.

\medskip

The previous reasonings motivate the interest of researchers in the so called {\em PCG recognition problem}, consisting in understanding whether, given a graph $G$, it is possible to determine an edge-weighted tree $T$ and two integers $d_{min}, d_{max}$ such that $G$ is the associated pairwise compatibility graph.

\begin{figure}[h]
\begin{center}
\begin{tabular}{c  c  c}
\begin{picture}(70,90)(0,0)
\put(10,50){\circle*{6}}
\put(30,30){\circle*{6}}
\put(30,70){\circle*{6}}
\put(50,50){\circle*{6}}
\put(6,40){$a$}
\put(26,18){$d$}
\put(26,75){$b$}
\put(46,40){$c$}
\qbezier(10,50)(30,70)(30,70)
\qbezier(30,30)(50,50)(50,50)
\put(30,70){\line(0,-1){40}}
\end{picture}
&
%
%
\begin{picture}(110,90)(0,0)
\put(0,30){\circle*{6}}
\put(30,30){\circle*{6}}
\put(60,30){\circle*{6}}
\put(90,30){\circle*{6}}
\put(-4,20){$a$}
\put(26,20){$d$}
\put(56,20){$b$}
\put(86,20){$c$}

\put(30,60){\circle*{6}}
\put(60,60){\circle*{6}}

\put(30,60){\line(0,-1){30}}
\put(60,60){\line(0,-1){30}}

\put(30,60){\line(1,0){30}}

\qbezier(0,30)(30,60)(30,60)
\qbezier(90,30)(60,60)(60,60)

\put(3,40){2}
\put(30,40){1}
\put(60,40){1}
\put(83,40){2}

\put(42,63){2}
\end{picture}
&
\begin{picture}(80,90)(0,0)
\color{red}
\thicklines

\dottedline{4}(36,70)(56,50)
\dottedline{4}(14,50)(36,30)
\color{blue}
\dashline{7}(14,50)(56,50)
\color{black}
\put(14,50){\circle*{6}}
\put(36,30){\circle*{6}}
\put(36,70){\circle*{6}}
\put(56,50){\circle*{6}}
\put(8,40){$a$}
\put(32,18){$d$}
\put(32,75){$b$}
\put(52,40){$c$}
\qbezier(14,50)(36,70)(36,70)
\qbezier(36,30)(56,50)(56,50)
\put(36,70){\line(0,-1){40}}
\end{picture}
\\
a. & \hspace*{-.5cm}b. & \hspace*{-.5cm}c.
\end{tabular}
\end{center}
\vspace*{-0.5cm}
\caption{a. A graph $G$. 
b. An edge-weighted caterpillar $T$ such that $G=PCG(T, 4, 5)$. c. $G$ where the PCG-coloring induced by triple $T, 4, 5$ is highlighted.}
\label{fig:PCG}
\end{figure}
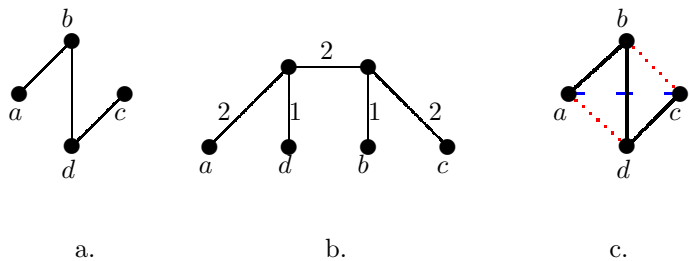

In Figure \ref{fig:PCG}.a a small graph that is  $PCG(T, 4,5)$ is depicted and, in Figure \ref{fig:PCG}.b, $T$ is shown.
 In general, $T$ is not unique; here $T$ is a {\em caterpillar}, i.e. a tree consisting of a central path, called spine, and nodes directly connected to that path.
Due to their simple structure, caterpillars are the most used witness trees to show that a graph is PCG.
However, it has been proven that there are some PCGs for which it is not possible to find a caterpillar as witness tree \cite{CFS13}.

Due to the flexibility afforded in the construction of instances (i.e. choice of tree topology and values for $d_{min}$ and $d_{max}$), when PCGs were introduced, it was also conjectured that all graphs are PCGs \cite{KMP03}.
This conjecture has been confuted by proving the existence of some graphs not belonging to PCG.
Namely,
Yanhaona et al. \cite{Yal10} showed a not PCG bipartite graph with 15 nodes (Figure~\ref{fig.notPCG}.a). 
Subsequently, Mehnaz and Rahman \cite{MR13} generalized the used technique to provide a class of bipartite graphs that are not PCGs.
More recently, Durochet et al. \cite{DMR15} proved that there exists a not bipartite graph with 8 nodes that is not PCG (Figure~\ref{fig.notPCG}.b); this is the smallest graph that is not PCG, since all graphs with at most 7 nodes are PCGs \cite{CFS13}.

The authors of \cite{DMR15} provided also an example of a planar graph with 20 nodes that is not PCG (Figure~\ref{fig.notPCG}.c). 
Finally, it holds that, if a graph $H$ is not PCG, every graph admitting $H$ as induced subgraph is not PCG, too \cite{survey}. 

On the other side, many graph classes have been proved to be in PCG, such as cliques and trees, cycles, single chord cycles, cacti, tree power graphs 
\cite{Yal09,Yal10}, interval graphs \cite{BH08} Dilworth 2 and dilworth $k$ graphs \cite{CP14, CP14k}.
\begin{figure}[t]
\begin{center}
\vspace*{1cm}
\hspace*{-2.7cm}
\begin{tabular}{c  c  c}
\begin{picture}(120,80)(0,0)
\put(25,10){\circle*{6}}
\put(45,10){\circle*{6}}
\put(65,10){\circle*{6}}
\put(85,10){\circle*{6}}
\put(105,10){\circle*{6}}

\put(0,60){\circle*{6}}
\put(15,60){\circle*{6}}
\put(30,60){\circle*{6}}
\put(45,60){\circle*{6}}
\put(60,60){\circle*{6}}
\put(75,60){\circle*{6}}
\put(90,60){\circle*{6}}
\put(105,60){\circle*{6}}
\put(120,60){\circle*{6}}
\put(135,60){\circle*{6}}

\qbezier(25,10)(0,60)(0,60)
\qbezier(25,10)(15,60)(15,60)
\qbezier(25,10)(30,60)(30,60)
\qbezier(25,10)(45,60)(45,60)
\qbezier(25,10)(60,60)(60,60)
\qbezier(25,10)(75,60)(75,60)

\qbezier(45,10)(0,60)(0,60)
\qbezier(45,10)(15,60)(15,60)
\qbezier(45,10)(30,60)(30,60)
\qbezier(45,10)(90,60)(90,60)
\qbezier(45,10)(105,60)(105,60)
\qbezier(45,10)(120,60)(120,60)

\qbezier(65,10)(0,60)(0,60)
\qbezier(65,10)(45,60)(45,60)
\qbezier(65,10)(60,60)(60,60)
\qbezier(65,10)(90,60)(90,60)
\qbezier(65,10)(105,60)(105,60)
\qbezier(65,10)(135,60)(135,60)

\qbezier(85,10)(15,60)(15,60)
\qbezier(85,10)(45,60)(45,60)
\qbezier(85,10)(75,60)(75,60)
\qbezier(85,10)(105,60)(105,60)
\qbezier(85,10)(120,60)(120,60)
\qbezier(85,10)(135,60)(135,60)

\qbezier(105,10)(30,60)(30,60)
\qbezier(105,10)(60,60)(60,60)
\qbezier(105,10)(75,60)(75,60)
\qbezier(105,10)(105,60)(105,60)
\qbezier(105,10)(120,60)(120,60)
\qbezier(105,10)(135,60)(135,60)
\end{picture}
&
\begin{picture}(100,80)(0,0)
\put(35,0){\circle*{6}}
\put(35,20){\circle*{6}}
\put(0,35){\circle*{6}}
\put(20,35){\circle*{6}}
\put(50,35){\circle*{6}}
\put(70,35){\circle*{6}}
\put(35,50){\circle*{6}}
\put(35,70){\circle*{6}}


\put(0,35){\line(1,0){20}}
\put(50,35){\line(1,0){20}}
\put(35,0){\line(0,1){20}}
\put(35,50){\line(0,1){20}}

\qbezier(0,35)(35,20)(35,20)
\qbezier(0,35)(35,0)(35,0)
\qbezier(0,35)(35,50)(35,50)
\qbezier(0,35)(35,70)(35,70)
\qbezier(20,35)(35,20)(35,20)
\qbezier(20,35)(35,0)(35,0)
\qbezier(20,35)(35,50)(35,50)
\qbezier(20,35)(35,70)(35,70)

\qbezier(50,35)(35,20)(35,20)
\qbezier(50,35)(35,0)(35,0)
\qbezier(50,35)(35,50)(35,50)
\qbezier(50,35)(35,70)(35,70)
\qbezier(70,35)(35,20)(35,20)
\qbezier(70,35)(35,0)(35,0)
\qbezier(70,35)(35,50)(35,50)
\qbezier(70,35)(35,70)(35,70)
\end{picture}
&
\hspace*{-1.8cm}
\begin{picture}(100,80)(0,0)
\put(10,50){\circle*{6}}
\put(20,50){\circle*{6}}
\put(30,50){\circle*{6}}
\put(40,50){\circle*{6}}

\put(60,50){\circle*{6}}
\put(70,50){\circle*{6}}
\put(80,50){\circle*{6}}
\put(90,50){\circle*{6}}

\put(110,50){\circle*{6}}
\put(120,50){\circle*{6}}
\put(130,50){\circle*{6}}
\put(140,50){\circle*{6}}

\put(160,50){\circle*{6}}
\put(170,50){\circle*{6}}
\put(180,50){\circle*{6}}
\put(190,50){\circle*{6}}

\put(90,20){\circle*{6}}
\put(90,80){\circle*{6}}
\put(160,20){\circle*{6}}
\put(160,80){\circle*{6}}

\put(10,50){\line(1,0){10}}
\put(30,50){\line(1,0){10}}
\put(60,50){\line(1,0){10}}
\put(80,50){\line(1,0){10}}
\put(110,50){\line(1,0){10}}
\put(130,50){\line(1,0){10}}
\put(160,50){\line(1,0){10}}
\put(180,50){\line(1,0){10}}

\qbezier(90,80)(60,50)(60,50)
\qbezier(90,80)(70,50)(70,50)
\qbezier(90,80)(80,50)(80,50)
\qbezier(90,80)(90,50)(90,50)
\qbezier(90,20)(60,50)(60,50)
\qbezier(90,20)(70,50)(70,50)
\qbezier(90,20)(80,50)(80,50)
\qbezier(90,20)(90,50)(90,50)

\qbezier(90,80)(110,50)(110,50)
\qbezier(90,80)(120,50)(120,50)
\qbezier(90,80)(130,50)(130,50)
\qbezier(90,80)(140,50)(140,50)
\qbezier(160,20)(110,50)(110,50)
\qbezier(160,20)(120,50)(120,50)
\qbezier(160,20)(130,50)(130,50)
\qbezier(160,20)(140,50)(140,50)

\qbezier(160,80)(160,50)(160,50)
\qbezier(160,80)(170,50)(170,50)
\qbezier(160,80)(180,50)(180,50)
\qbezier(160,80)(190,50)(190,50)
\qbezier(160,20)(160,50)(160,50)
\qbezier(160,20)(170,50)(170,50)
\qbezier(160,20)(180,50)(180,50)
\qbezier(160,20)(190,50)(190,50)

\qbezier(10,50)(50,-20)(90,20)
\qbezier(20,50)(50,-10)(90,20)
\qbezier(30,50)(50,0)(90,20)
\qbezier(40,50)(50,10)(90,20)

\qbezier(10,50)(80,140)(160,80)
\qbezier(20,50)(80,130)(160,80)
\qbezier(30,50)(80,120)(160,80)
\qbezier(40,50)(80,110)(160,80)
\end{picture}
\\
\hspace*{.4cm}a.&
\hspace*{-.9cm}b.&
\hspace*{2cm}c.
\end{tabular}
\end{center}
\caption{a. The first graph proven not to be a PCG. b. The graph of smallest size proven not to be a PCG. c. A planar graph that is not PCG.}
\label{fig.notPCG}
\end{figure}
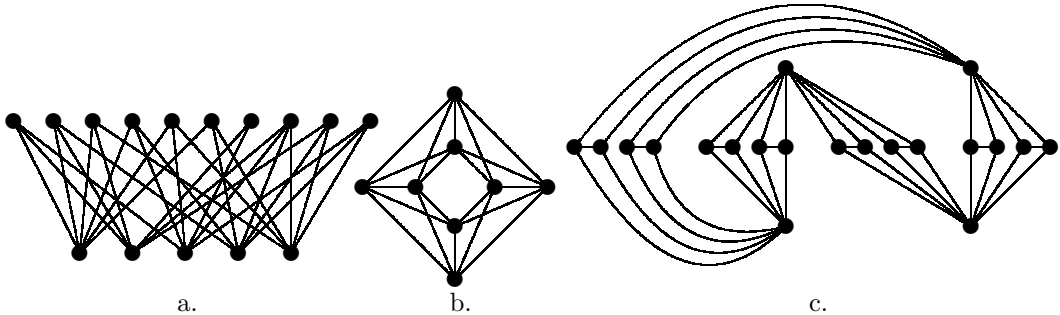

However, despite these results,
it remains unclear which is the boundary of the PCG class.
In this paper, we move a step in the direction of searching new graph classes that are not PCGs.
Indeed, in Section \ref{sec.technique} we introduce a new general proof technique that allows us to show that a graph is not a PCG. We exploit it on two interesting classes of graphs:

\begin{itemize}
\item wheels, for which it was left as an open problem to understand whether they were PCGs or not \cite{CAS13};
\item  graphs obtained as strong product between a cycle and $P_2$, that are a generalization of the smallest known not PCG \cite{DMR15}.
\end{itemize}

After some preliminaries (Section \ref{sec.forbidden}),
the results dealing with these classes are presented in Sections \ref{sec.wheel} and \ref{ref.product}, respectively.

Finally, in Section \ref{sec.minimality}, for any graph $G$ in each one of the two classes, we show that by deleting any node from $G$ we get a PCG, so proving that it does not contain any induced subgraph that is not PCG.

We conclude the paper with Section \ref{sec.conclusion}, where we address some open problems.

\section{Proof Technique}
\label{sec.technique}

In this section, after introducing some definitions, we describe our proof technique, useful to prove that some classes of graphs have empty intersection with the class of PCGs, formally defined as follows.

\begin{definition} \cite{KMP03}
A graph $G=(V,E)$ is a {\em pairwise compatibility graph} (PCG) if there 
exist a tree $T$, a weight function assigning a positive real value to each edge of $G$, and  two 
non-negative real numbers $d_{min}$ and $d_{max}$,  
$d_{min} \leq d_{max}$, such that each node $u \in V$ is uniquely 
associated to a leaf of $T$ and there is an edge $(u,v) \in E$ 
if and only if $d_{min} \leq d_{T} (u, v) \leq d_{max}$,
where $d_{T} (u, v)$ is the sum of the weights of the edges on the unique 
path $P_{T}(u,v)$ from $u$ to $v$ in $T$. 
In such a case, we say that $G$ is a 
PCG of $T$ for $d_{min}$ and $d_{max}$; in symbols, $G=PCG(T, d_{min}, d_{max})$.
\end{definition}

In order not to overburden the exposition, in the following,
when we speak about a tree, we implicitly mean that it is edge-weighted. 

\medskip

Given a graph $G=(V,E)$, 
we call {\em non-edges} of $G$ the edges that do not belong to the graph.
A {\em tri-coloring of $G$} is an edge labeling of the complete graph $K_{|V|}$ with labels from set $\{$ black, red, blue $\}$ such that all edges of $K_{|V|}$ that are in $G$ are labeled black, while the other edges of $K_{|V|}$ (i.e. the non-edges of $G$) are labeled either red or blue.
A tri-coloring is called a {\em partial tri-coloring} if not all the 
non-edges of $G$ are labeled.

Notice that, if $G=PCG(T, d_{min}, d_{max})$, some of its non-edges do not belong to $G$ because the weights of the corresponding paths on $T$ are strictly larger than $d_{max}$, while some other edges are not in $G$ because the weights of the corresponding paths on $T$ are strictly smaller than $d_{min}$. This motivates the following definition.

\begin{definition}
Given a graph $G=PCG(T, d_{min}, d_{max})$, we call its {\em PCG-coloring} the tri-coloring $\mathcal{C}$ of $G$ such that:\\
- $(u,v)$ is {\em red} in $\mathcal{C}$ if $d_T(u,v) < d_{min}$, \\ 
- $(u,v)$ is {\em black} in $\mathcal{C}$ if $d_{min} \leq d_T(u,v) \leq d_{max}$, \\
- $(u,v)$ is {\em blue} in $\mathcal{C}$ if $d_T(u,v) > d_{max}$. 

\noindent
In such a case, we say that triple $(T, d_{min}, d_{max})$ induces PCG-coloring $\mathcal{C}$.
\end{definition}

In order to read the figures even in gray scale, we draw red edges as red and dotted and blue edges as blue and dashed in all the figures.

In Figure  \ref{fig:PCG}.c we highlight the PCG-coloring induced by triple $(T, 4, 5)$
where $T$ is the tree in Figure \ref{fig:PCG}.b.

The following definition formalizes that not all tri-colorings are PCG-colorings.

\begin{definition}
A tri-coloring $\mathcal{C}$ (either partial or not) of a graph $G$
is called a {\em forbidden PCG-coloring} if no triple $(T, d_{min}, d_{max})$ inducing $\mathcal{C}$ exists.
\end{definition}

Observe that a graph is PCG if and only if there exists a tri-coloring $\mathcal{C}$ that is a PCG-coloring for $G$.

Besides, any induced subgraph $H$ of a given $G=PCG(T, d_{min}, d_{max})$ is also PCG, indeed $H=PCG(T', d_{min}, d_{max})$, where $T'$ is the 
subtree induced by the leaves corresponding to the nodes of $H$.
Moreover, $H$ inherits the PCG-coloring induced by triple $(T, d_{min}, d_{max})$ from $G$.
Thus, if we were able to prove that $H$ inherits a forbidden PCG-coloring from a tri-coloring $\mathcal{C}$ of $G$, then we would show that $\mathcal{C}$ cannot be a PCG-coloring for $G$ in any way.
This is the core of our proof technique.

\medskip

\noindent {\bf Tecnique:}\\
{\em
Given a graph $G$ that we want to prove not to be PCG:
\begin{enumerate}
\item
list some forbidden PCG-colorings of particular graphs that are induced pairwise compatibility subgraphs of $G$;

\item
show that each tri-coloring of $G$ induces a forbidden PCG-coloring in at least an induced subgraph;

\item
conclude that $G$ is not PCG, since all its tri-colorings are proved to be forbidden.
\end{enumerate}
}

\section{Forbidden Tri-Colorings}
\label{sec.forbidden}

We now highlight some forbidden partial tri-colorings.
In agreement with the proof technique described in the previous section, 
alongl the paper, we will use them to show that the three considered classes have empty intersection with PCG.
\medskip

Given a graph $G=(V,E)$ and a subset 
$S \subseteq V$, we denote by 
$G[S]$ the subgraph of $G$ induced by nodes in $S$. 

A {\em subtree induced by a set of leaves} of $T$ is the minimal subtree of $T$ which contains those leaves. In particular, we denote by $T_{uvw}$ the subtree of a tree induced by three leaves $u, v$ and $w$.

The following lemma from \cite{Yal10} will be largely used:

\begin{lem} 
\label{lemma.subtree}
Let $T$ be a tree, and $u,v$ and $w$ be three leaves of $T$ such that $P_T(u, v)$ is the largest path in $T_{uvw}$. 
Let $x$ be a leaf of $T$ other than $u, v, w$. Then, $d_T(w,x) \leq \max \{d_T(u,x), d_T(v,x) \}$.
\end{lem}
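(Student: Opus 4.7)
The plan is to exploit the structure of the small subtree $T_{uvw}$: being a tree on three leaves, it has a unique internal ``Steiner'' node $c$ at which the three subpaths $P_T(u,c)$, $P_T(v,c)$, $P_T(w,c)$ meet (in degenerate situations $c$ may coincide with one of the leaves, which only simplifies what follows). The first step is to translate the assumption that $P_T(u,v)$ is the longest of the three pairwise paths in $T_{uvw}$ into inequalities on the arm lengths: writing the three pairwise distances as sums of arm lengths and cancelling common summands yields
\[
d_T(c,u) \ge d_T(c,w) \quad \text{and} \quad d_T(c,v) \ge d_T(c,w).
\]

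Next, I would locate the fourth leaf $x$ relative to $T_{uvw}$ by setting $p$ to be the unique vertex of $T_{uvw}$ closest to $x$, i.e.\ the point where the path from $x$ first meets $T_{uvw}$. The key observation is that $p$ must lie on exactly one of the three arms $P_T(c,u)$, $P_T(c,v)$, $P_T(c,w)$, and in each case $d_T(u,x)$, $d_T(v,x)$, $d_T(w,x)$ split as $d_T(p,x)$ plus contributions from pieces of the three arms; a direct comparison then finishes the argument.

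For instance, if $p$ lies on the arm toward $u$, then both $d_T(v,x)$ and $d_T(w,x)$ traverse $c$ and $p$, so $d_T(v,x)-d_T(w,x)=d_T(c,v)-d_T(c,w)\ge 0$. The case where $p$ lies on the arm toward $v$ is symmetric and gives $d_T(u,x)\ge d_T(w,x)$. The remaining case, in which $p$ lies on the arm toward $w$, is slightly different because the path from $w$ to $x$ retraces part of that arm; a short calculation yields $d_T(u,x)-d_T(w,x)=d_T(c,u)-d_T(c,w)+2\,d_T(c,p)\ge 0$. In all three cases, either $d_T(u,x)$ or $d_T(v,x)$ dominates $d_T(w,x)$, as required. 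The whole argument is bookkeeping once the Steiner node $c$ and the attachment point $p$ are identified; there is no real obstacle beyond writing the three cases uniformly.
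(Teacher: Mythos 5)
Your proof is correct. Note that the paper does not actually prove this lemma: it is stated as a known result imported from Yanhaona et al.\ (the reference labelled Yal10), so there is no in-paper argument to compare against; your Steiner-point decomposition is the standard proof and fills that gap correctly. The two inequalities $d_T(c,u)\ge d_T(c,w)$ and $d_T(c,v)\ge d_T(c,w)$ do follow from maximality of $P_T(u,v)$ by cancelling the common arm, the three attachment cases for $p$ are exhaustive (the degenerate case $p=c$ is covered by any of them), and the computation in the third case, $d_T(u,x)-d_T(w,x)=d_T(c,u)-d_T(c,w)+2\,d_T(c,p)\ge 0$, is right.
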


It is immediate to see that the $m$ node path, $P_m$, is a PCG; the following lemma gives some constraints to the associated PCG-coloring.
\begin{lem}
\label{claim1}
Let $P_m$, $m\geq 4$, be a path and let $\mathcal{C}$ be one of its PCG-colorings.
If all non-edges $(v_1,v_i)$, $3 \leq i \leq m-1$, and $(v_2,v_m)$ are colored with blue in $\mathcal{C}$, then also non-edge $(v_1,v_m)$ is  colored with blue in $\mathcal{C}$.
\end{lem}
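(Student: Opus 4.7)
The plan is to argue by contradiction. Suppose $\mathcal{C}$ is induced by some triple $(T,d_{min},d_{max})$ satisfying the hypothesis, but $(v_1,v_m)$ is not colored blue in $\mathcal{C}$; since $(v_1,v_m)$ is a non-edge of $P_m$ (recall $m\ge 4$), this forces $d_T(v_1,v_m)\le d_{max}$. I would then prove, by induction on $j$ for $3\le j\le m-1$, the strengthened invariant
\[
d_T(v_m,v_j)\;>\;d_T(v_1,v_j).
\]
Specialised to $j=m-1$ this gives $d_T(v_m,v_{m-1})>d_T(v_1,v_{m-1})>d_{max}$, which contradicts the fact that $(v_{m-1},v_m)$ is an edge of $P_m$ and therefore $d_T(v_{m-1},v_m)\le d_{max}$.

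For the base case $j=3$, suppose toward a contradiction that $d_T(v_3,v_m)\le d_T(v_1,v_3)$. Combined with $d_T(v_1,v_3)>d_{max}\ge d_T(v_1,v_m)$, this makes $P_T(v_1,v_3)$ the largest path in $T_{v_1v_3v_m}$, so Lemma~\ref{lemma.subtree} applied with $u=v_1$, $v=v_3$, $w=v_m$ and $x=v_2$ yields
\[
d_T(v_m,v_2)\le\max\{d_T(v_1,v_2),d_T(v_3,v_2)\}\le d_{max},
\]
because both $(v_1,v_2)$ and $(v_2,v_3)$ are edges of $P_m$, contradicting that $(v_2,v_m)$ is blue. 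The inductive step proceeds in the same style: assuming $d_T(v_m,v_j)>d_T(v_1,v_j)$ and supposing $d_T(v_m,v_{j+1})\le d_T(v_1,v_{j+1})$, one has $d_T(v_1,v_{j+1})>d_{max}\ge d_T(v_1,v_m)$, so $P_T(v_1,v_{j+1})$ is the largest path in $T_{v_1v_{j+1}v_m}$; Lemma~\ref{lemma.subtree} with $x=v_j$ then forces $d_T(v_m,v_j)\le\max\{d_T(v_1,v_j),d_T(v_{j+1},v_j)\}=d_T(v_1,v_j)$, since $d_T(v_{j+1},v_j)\le d_{max}<d_T(v_1,v_j)$. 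This directly violates the inductive hypothesis, completing the induction.

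The part I expect to be the real obstacle is identifying the correct invariant. A more naive choice such as ``$d_T(v_m,v_j)>d_{max}$'' is too weak to propagate: under the contradiction assumption $d_T(v_1,v_m)\le d_{max}$, it is generically $P_T(v_1,v_{j+1})$ (and not $P_T(v_{j+1},v_m)$) that is the largest side in $T_{v_1v_{j+1}v_m}$, in which case Lemma~\ref{lemma.subtree} only yields an upper bound of the shape $d_T(v_m,v_j)\le d_T(v_1,v_j)$, fully compatible with both sides being larger than $d_{max}$. Comparing $d_T(v_m,v_j)$ directly with the blue distance $d_T(v_1,v_j)$ turns this same upper bound into an immediate contradiction with the stronger inductive hypothesis, and recognising that this is the right sharpening is the crucial step.
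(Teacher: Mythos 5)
Your argument is correct: the induction is well founded (for $3\le j\le m-2$ both $(v_1,v_j)$ and $(v_1,v_{j+1})$ are among the non-edges assumed blue, $(v_j,v_{j+1})$ is an edge of $P_m$, and the global contradiction hypothesis $d_T(v_1,v_m)\le d_{max}$ is exactly what certifies that $P_T(v_1,v_{j+1})$ is the longest side of $T_{v_1v_{j+1}v_m}$), and the terminal contradiction at the edge $(v_{m-1},v_m)$ is valid. The paper reaches the same conclusion by a different, direct route: it iterates Lemma \ref{lemma.subtree} on the triples $\{v_1,v_2,v_i\}$, with $P_T(v_1,v_i)$ as the longest side and $v_{i+1}$ as the external leaf, to push the invariant $d_T(v_2,v_i)\le d_T(v_1,v_i)$ forward from $i=4$ up to $i=m$; at $i=m$ the blueness of $(v_2,v_m)$ then transfers at once to $(v_1,v_m)$. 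So where you compare distances from $v_m$ against distances from $v_1$ under a contradiction hypothesis, the paper compares distances from $v_2$ against distances from $v_1$ unconditionally. Your closing observation that an absolute invariant such as $d_T(v_m,v_j)>d_{max}$ is too weak, and that one must compare against the blue distance $d_T(v_1,v_j)$, is precisely the sharpening the paper also relies on, in its mirror form $d_T(v_2,v_i)\le d_T(v_1,v_i)$. The paper's version buys a slightly cleaner presentation --- its intermediate inequality holds without assuming anything about $(v_1,v_m)$, and each step is a single forward application of the lemma rather than a nested contradiction --- whereas your triples only exist as described under the global assumption; the substance and difficulty of the two arguments are otherwise the same.
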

\begin{proof}
Let $\mathcal{C}$ be the PCG-coloring of $P_m$ induced by triple $(T, d_{min}, d_{max})$.
We apply Lemma \ref{lemma.subtree} iteratively. 

First consider nodes $v_1$, $v_2$, $v_3$ and $v_4$ as $u$, $w$, $v$ and $x$: $P_T(v_1, v_3)$ is easily the largest path in $T_{v_1 v_3 v_2}$;  then $d_T(v_2,v_4) \leq \max \{d_T(v_1, v_4), d_T(v_3, v_4) \}=d_T(v_1, v_4)$ because $(v_1, v_4)$ is a blue non-edge by hypothesis while $(v_3, v_4)$ is an edge. 

Now repeat the reasoning with nodes $v_1$, $v_2$, $v_i$ and $v_{i+1}$, $4 \leq i < m$, as $u$, $w$, $v$ and $x$, exploiting that at the previous step we have obtained that $d_T(v_2, v_i) \leq d_T(v_1, v_i)$: in $T_{v_1 v_i v_2}$, $P_T(v_1, v_i)$ is the largest path and so $d_T(v_2,v_{i+1}) \leq \max \{ d_T(v_1, v_{i+1}),$ $d_T(v_i, v_{i+1})\}$ $=d_T(v_1, v_{i+1})$ since $(v_1, v_{i+1})$ is a blue non-edge while $(v_i, v_{i+1})$ is an edge.

Posing $i=m-1$, we get that  $d_T(v_2,v_m) \leq d_T(v_1, v_m)$; since non-edge $(v_2, v_m)$ is blue by hypothesis, $(v_1, v_m)$ is blue, too.
\end{proof}

%
%

Given a graph, in order to ease the exposition, we call {\em 2-non-edge} a non-edge between nodes that are at distance $2$ in the graph.

\begin{lem}
\label{le1}
Let $P_n$, $n\geq 3$, be a path. Any $PCG$-coloring of $P_n$ that has at least one red non-edge but no red 2-non-edges is forbidden.
\end{lem}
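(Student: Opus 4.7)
The plan is to argue by contradiction and reduce the claim to a single direct application of Lemma \ref{claim1}. Assume $\mathcal{C}$ is a PCG-coloring of $P_n$ induced by some triple $(T,d_{min},d_{max})$, with at least one red non-edge but no red 2-non-edges. Among all red non-edges in $\mathcal{C}$, I would pick one, call it $(v_a,v_b)$ with $a<b$, whose \emph{span} $k := b-a$ is minimum; since 2-non-edges are not red, $k \geq 3$.

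Next, I would restrict attention to the subpath of $P_n$ induced by the consecutive nodes $v_a,v_{a+1},\dots,v_b$, which is a $P_{k+1}$ and inherits a PCG-coloring from $\mathcal{C}$ via the subtree of $T$ spanned by the corresponding leaves. Relabel these nodes as $u_1,\dots,u_{k+1}$. The crucial observation is that inside this subpath every non-edge different from $(u_1,u_{k+1})$ is blue: any such non-edge $(u_i,u_j)$, $2 \le j-i < k$, corresponds to a non-edge of the original $P_n$ with span strictly smaller than $k$, so by minimality it cannot be red, and being a non-edge it must therefore be blue.

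To conclude, I would invoke Lemma \ref{claim1} on this subpath, which is legitimate since $m := k+1 \geq 4$. Its hypothesis requires $(u_1,u_3),(u_1,u_4),\dots,(u_1,u_k)$ and $(u_2,u_{k+1})$ to be blue, and the previous paragraph gives exactly this. The lemma then forces $(u_1,u_{k+1})$ to be blue, contradicting the fact that $(u_1,u_{k+1})=(v_a,v_b)$ was chosen red. Hence no such $\mathcal{C}$ exists and every tri-coloring of the prescribed form is forbidden.

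The only delicate part I anticipate is the bookkeeping behind the minimality step: one has to verify that \emph{every} non-edge of the induced subpath, not just those of the form $(u_1,u_j)$, has span strictly less than $k$, so that the minimal choice of $(v_a,v_b)$ really does rule out red elsewhere in the subpath. Once this is in place, the rest of the argument is essentially a one-line appeal to Lemma \ref{claim1}.
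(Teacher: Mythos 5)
Your proposal is correct and follows essentially the same route as the paper: choose a red non-edge of minimum span (necessarily at least $3$ since no $2$-non-edge is red), pass to the induced subpath between its endpoints, observe that minimality forces every other non-edge there to be blue, and apply Lemma \ref{claim1} to derive the contradiction. The ``delicate'' bookkeeping you flag is exactly what the paper relies on as well, and it holds because the subpath consists of consecutive nodes, so spans are preserved.
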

\begin{proof}
If $n=3$, there is a unique non-edge and it is a 2-non-edge; so, the claim trivially follows.

If $n \geq 4$, consider a triple $(T, d_{min}, d_{max})$ inducing a PCG-coloring with at least a red non-edge.
Among all red non-edges, let $(v_i, v_j)$ be the one such that $j-i$ is minimum. Assume by contradiction, $j-i>2$.
Consider now the subpath $P'$ induced by $v_i, \ldots , v_j$.
$P'$ has at least 4 nodes and inherits the PCG-coloring from $P_n$; in it, there is only a red non-edge (i.e. the non-edge connecting $v_i$ and $v_j$).
$P'$ satisfies the hypothesis of Lemma \ref{claim1}, hence $(v_i, v_j)$ must be blue, against the hypothesis that it is red.
\end{proof}

The following lemma is proved in \cite{Yal09} and here translated in our setting:

\begin{lem}
\label{lemma.cicloRossoBlu}
In every PCG-coloring of the $n$ node cycle $C_n$, $n\geq 4$, there exist at least one red and one blue non-edges.
\end{lem}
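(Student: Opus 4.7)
The plan is to proceed by contradiction in two symmetric cases: that every non-edge of $C_n=PCG(T,d_{min},d_{max})$ is blue (no red non-edge exists), or that every non-edge is red (no blue non-edge exists). In both cases I will apply Lemma \ref{lemma.subtree} to three cycle vertices and then propagate constraints around the cycle until some cycle edge is forced to violate its prescribed weight range.

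First suppose that every non-edge is blue. For three consecutive cycle vertices $v_1,v_2,v_3$, the cycle edges give $d_T(v_1,v_2),d_T(v_2,v_3)\le d_{max}$ whereas the non-edge forces $d_T(v_1,v_3)>d_{max}$, so $P_T(v_1,v_3)$ is the longest side of $T_{v_1v_2v_3}$. Lemma \ref{lemma.subtree} with $x=v_4$ then gives $d_T(v_2,v_4)\le\max\{d_T(v_1,v_4),d_T(v_3,v_4)\}$, and since $(v_3,v_4)$ is a cycle edge with $d_T(v_3,v_4)\le d_{max}<d_T(v_2,v_4)$, I obtain $d_T(v_2,v_4)\le d_T(v_1,v_4)$ and hence $d_T(v_1,v_4)>d_{max}$. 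For $n=4$ this is already a contradiction, because $(v_1,v_4)$ is itself a cycle edge. For $n\ge 5$, I would iterate exactly as in the proof of Lemma \ref{claim1}: at step $i$ the longest side of $T_{v_1v_2v_i}$ is $P_T(v_1,v_i)$, and Lemma \ref{lemma.subtree} with $x=v_{i+1}$ propagates $d_T(v_2,v_{i+1})\le d_T(v_1,v_{i+1})$. The chain terminates at $x=v_n$, where $(v_1,v_n)$ is a cycle edge and thus $d_T(v_1,v_n)\le d_{max}$, contradicting $d_T(v_2,v_n)>d_{max}$.

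The case in which every non-edge is red is symmetric. Now $P_T(v_1,v_3)$ is the short side of $T_{v_1v_2v_3}$ (weight $<d_{min}$), so the longest side is one of the two cycle-edge sides, say $P_T(v_2,v_3)$ by the cyclic symmetry of the hypothesis. Applying Lemma \ref{lemma.subtree} with $u=v_2$, $v=v_3$, $w=v_1$, $x=v_n$ yields $d_T(v_1,v_n)\le\max\{d_T(v_2,v_n),d_T(v_3,v_n)\}$; for $n\ge 5$ both $(v_2,v_n)$ and $(v_3,v_n)$ are non-edges of $C_n$, hence red with weight $<d_{min}$, while $(v_1,v_n)$ is a cycle edge with weight $\ge d_{min}$. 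This is the contradiction.

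The main obstacle is the combinatorial bookkeeping: I have to keep track of which pairs $(v_i,v_j)$ produced by Lemma \ref{lemma.subtree} are cycle edges of $C_n$ and which are non-edges. This is what drives the iteration in the blue case (the contradiction only appears when the process wraps around to $v_n$) and dictates the specific choice $x=v_n$ in the red case.
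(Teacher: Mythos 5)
The paper does not actually prove this lemma --- it imports it from \cite{Yal09} --- so there is no internal proof to compare against; your direct argument via Lemma \ref{lemma.subtree} is the natural one. Your first case (all non-edges blue) is correct for every $n\ge 4$: the base step on $T_{v_1v_2v_3}$ and the propagated inequality $d_T(v_2,v_{i+1})\le d_T(v_1,v_{i+1})$ mirror the proof of Lemma \ref{claim1}, and the wrap-around at the cycle edge $(v_1,v_n)$ yields the contradiction. Your second case, however, has a genuine gap: you rule out the all-red coloring only for $n\ge 5$, because the contradiction needs both $(v_2,v_n)$ and $(v_3,v_n)$ to be non-edges, and for $n=4$ the pair $(v_3,v_4)$ is a cycle edge. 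The case $n=4$ is simply left unaddressed.

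Moreover, that gap cannot be closed, because the ``at least one blue'' half of the statement is false for $n=4$ as written. Take $T$ with two internal nodes $p,q$ joined by an edge of weight $8$, leaves $v_1,v_3$ attached to $p$ and leaves $v_2,v_4$ attached to $q$, each by an edge of weight $1$, and set $d_{min}=9$, $d_{max}=11$. Then $d_T(v_1,v_2)=d_T(v_2,v_3)=d_T(v_3,v_4)=d_T(v_4,v_1)=10$, while $d_T(v_1,v_3)=d_T(v_2,v_4)=2<d_{min}$, so $PCG(T,9,11)=C_4$ and the induced PCG-coloring has two red non-edges and no blue one. (Note that the all-blue case really is impossible for $C_4$, consistent with your first half.) Presumably the source lemma in \cite{Yal09} carries the restriction $n\ge 5$, or only the ``red exists'' direction, and the $n\ge 4$ here is an artifact of translation. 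Since the paper only ever invokes the blue half for $C_5$ and the red half for general $n\ge 4$, your proof does cover every use actually made of the lemma; but as a proof of the lemma as stated it is incomplete at $n=4$, and necessarily so.
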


\begin{thm}
\label{cicloDistanza2}
Let $C_n$, $n\geq 4$,  be a cycle. Then any $PCG$-coloring of $C_n$ that has no red 2-non-edges  is forbidden.
\end{thm}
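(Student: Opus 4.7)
The plan is to argue by contradiction, reducing the cycle case to the path case via Lemma \ref{le1}. Suppose $C_n$ admits a PCG-coloring $\mathcal{C}$ without red 2-non-edges, induced by some triple $(T, d_{min}, d_{max})$. By Lemma \ref{lemma.cicloRossoBlu}, $\mathcal{C}$ must contain at least one red non-edge, say $(v_i, v_j)$. Since no 2-non-edge is red by hypothesis, $(v_i, v_j)$ is not a 2-non-edge, so $v_i$ and $v_j$ lie at cyclic distance at least $3$ in $C_n$; in particular this forces $n \geq 6$ (for $n\in\{4,5\}$ every non-edge of $C_n$ is a 2-non-edge, so the existence of a red non-edge immediately contradicts the hypothesis).

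Next I would pick one of the two arcs of $C_n$ joining $v_i$ and $v_j$. Because the cyclic distance between these two nodes is at least $3$, both arcs have length at least $3$, so either arc, viewed as the subgraph of $C_n$ induced by its nodes, is a path $P'$ on at least $4$ vertices. Since $C_n$ has no chords, $P'$ is an induced subgraph of $C_n$, and hence inherits a PCG-coloring from $\mathcal{C}$ (namely the one induced by $(T', d_{min}, d_{max})$, where $T'$ is the subtree of $T$ spanned by the leaves corresponding to nodes of $P'$). In this inherited coloring, the non-edge $(v_i, v_j)$ of $P'$ is red.

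The crux of the argument is then the observation that every 2-non-edge of $P'$ is also a 2-non-edge of $C_n$: consecutive nodes along $P'$ are consecutive nodes along $C_n$, so two nodes of $P'$ at graph-distance $2$ in $P'$ are again at graph-distance $2$ in $C_n$. Hence the inherited coloring on $P'$ has no red 2-non-edges either. Applying Lemma \ref{le1} to $P'$, this inherited coloring is forbidden, contradicting the fact that $P'$ is an induced subgraph of the supposed $C_n = PCG(T,d_{min},d_{max})$. This contradiction shows that no such $\mathcal{C}$ can exist, proving the theorem.

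The argument is short once the two ingredients are in place, and I do not foresee a genuine obstacle: the only subtlety worth stating carefully is the preservation of the ``no red 2-non-edge'' condition when restricting from $C_n$ to an induced arc, which follows from the fact that 2-non-edges in a contiguous arc of the cycle are precisely 2-non-edges of the cycle.
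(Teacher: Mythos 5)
Your proof is correct and follows essentially the same route as the paper: invoke Lemma \ref{lemma.cicloRossoBlu} to obtain a red non-edge, note it cannot be a 2-non-edge, and apply Lemma \ref{le1} to the induced path (arc) between its endpoints to reach a contradiction. Your version is in fact slightly more careful than the paper's, since you explicitly dispose of $n\in\{4,5\}$ and justify that 2-non-edges of the arc remain 2-non-edges of the cycle.
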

\begin{proof}
Let $C_n=PCG(T, d_{min}, d_{max})$, $n \geq 4$; from Lemma \ref{lemma.cicloRossoBlu}, there exists at least a red non-edge.
W.l.o.g. assume that this non-edge is $(v_1,v_i)$, with $4\leq i < n-1$. 
We apply Lemma \ref{le1} on the induced $P_i$   and the thesis follows by contradiction.
\end{proof}

\begin{figure}[t]
\begin{center}
\vspace*{.5cm}
\begin{tabular}{c  c  c  c  c}
\begin{picture}(60,80)(0,0)
\color{red}
\thicklines
\dottedline{4}(10,30)(50,70)
\dottedline{4}(10,70)(50,30)
\dottedline{4}(10,70)(50,70)
\dottedline{4}(10,30)(50,30)

\color{black}
\thicklines
\put(10,30){\circle*{6}}
\put(10,70){\circle*{6}}
\put(50,30){\circle*{6}}
\put(50,70){\circle*{6}}
\put(10,18){$a$}
\put(10,75){$b$}
\put(50,75){$c$}
\put(50,18){$d$}
\put(10,70){\line(0,-1){40}}
\put(50,70){\line(0,-1){40}}
\thinlines
\end{picture}

&
\begin{picture}(60,80)(0,0)
\color{red}
\thicklines
\dottedline{4}(10,70)(50,70)
\dottedline{4}(10,30)(50,30)

\color{blue}
\dashline{7}(10,30)(50,70)
\dashline{7}(10,70)(50,30)

\color{black}
\thicklines
\put(10,30){\circle*{6}}
\put(10,70){\circle*{6}}
\put(50,30){\circle*{6}}
\put(50,70){\circle*{6}}
\put(10,20){$a$}
\put(10,75){$b$}
\put(50,75){$c$}
\put(50,20){$d$}
\put(10,70){\line(0,-1){40}}
\put(50,70){\line(0,-1){40}}
\thinlines
\end{picture}

&
\begin{picture}(60,80)(0,0)
\color{red}
\thicklines
\dottedline{4}(10,30)(50,30)

\color{blue}
\dashline{7}(10,30)(50,70)
\dashline{7}(10,70)(50,30)

\color{black}
\thicklines
\put(10,70){\line(1,0){40}}
\put(10,30){\circle*{6}}
\put(10,70){\circle*{6}}
\put(50,30){\circle*{6}}
\put(50,70){\circle*{6}}
\put(10,18){$a$}
\put(10,75){$b$}
\put(50,75){$c$}
\put(50,18){$d$}
\put(10,70){\line(0,-1){40}}
\put(50,70){\line(0,-1){40}}
\thinlines
\end{picture}
&
\begin{picture}(60,80)(0,0)
\color{red}
\thicklines
\dottedline{4}(30,70)(10,30)
\dottedline{4}(30,70)(50,30)

\color{blue}
\dashline{7}(10,30)(50,30)

\color{black}
\thicklines
\put(10,30){\circle*{6}}
\put(30,70){\circle*{6}}
\put(30,50){\circle*{6}}
\put(50,30){\circle*{6}}
\put(6,18){$a$}
\put(30,38){$b$}
\put(26,75){$d$}
\put(46,18){$c$}
\qbezier(10,30)(30,50)(30,50)
\qbezier(50,30)(30,50)(30,50)
\put(30,70){\line(0,-1){20}}
\thinlines
\end{picture}

&

\begin{picture}(55,60)(0,0)
\color{red}
\thicklines
\dottedline{4}(0,55)(25,30)
\dottedline{4}(0,55)(50,55)

\color{blue}
\dashline{7}(0,55)(25,80)

\color{black}
\thicklines
\put(0,55){\circle*{6}}
\put(50,55){\circle*{6}}
\put(25,30){\circle*{6}}
\put(25,80){\circle*{6}}
\put(-2,45){$a$}
\put(25,85){$b$}
\put(25,20){$d$}
\put(50,45){$c$}
\qbezier(25,30)(50,55)(50,55)
\qbezier(25,80)(50,55)(50,55)
\put(25,30){\line(0,1){50}}
\thinlines
\end{picture}

\\
\hspace*{-.5cm} a. {\bf f-c}$(2K_2)a$&
\hspace*{-.5cm} b. {\bf f-c}$(2K_2)b$&
\hspace*{-.5cm} c. {\bf f-c}$(P_4)$&
\hspace*{-.5cm} d. {\bf f-c}$(K_{1,3})$&
\hspace*{-.5cm} e. {\bf f-c}$(K_3 \cup K_1)$
\end{tabular}
\end{center}
\caption{Some forbidden tri-colorings of small graphs. Acronym {\bf f-c} stands for {\em forbidden coloring}.}
\label{fig:forbidden}
\end{figure}
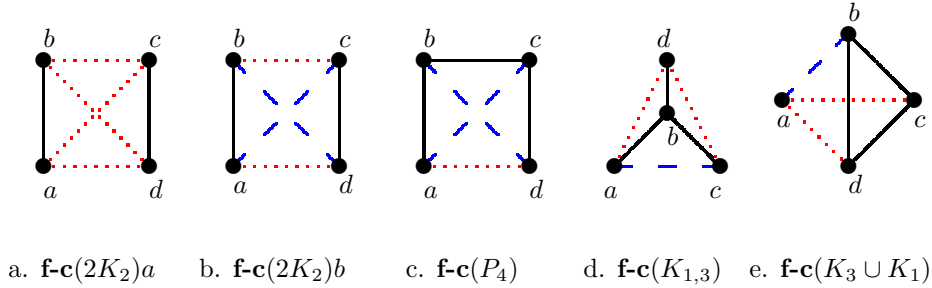

\begin{thm}
\label{lemma.forbidden}
The  tri-colorings in Figure \ref{fig:forbidden} are forbidden PCG-colorings.
\end{thm}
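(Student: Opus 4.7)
The plan is to treat each of the five colorings in Figure \ref{fig:forbidden} independently, assuming that some triple $(T, d_{\min}, d_{\max})$ realises the coloring and then deriving a contradiction. Except for part (c), which is a direct instance of Lemma \ref{le1}, the only tool I intend to use is Lemma \ref{lemma.subtree}: the scheme is to pick three of the four labelled leaves so that the coloring forces one of the three pairwise distances to be strictly larger than the other two (which pins down the longest path of the induced subtree $T_{uvw}$), and then to feed the fourth leaf into the lemma as the extra leaf $x$. The resulting inequality bounds one distance from above by the maximum of two others, and in every case this maximum will land in the wrong color class for the edge being bounded.

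For the two \textbf{f-c}$(2K_2)$ variants, I would take the triple $\{a,b,c\}$. In version (a) the black edge $(a,b)$ is the unique ``long'' distance (both $(a,c)$ and $(b,c)$ are red), so Lemma \ref{lemma.subtree} with $x=d$ bounds $d_T(c,d)$ by the maximum of two red distances, contradicting the fact that $(c,d)$ is black. In version (b) the blue non-edge $(a,c)$ plays the role of the longest path, and the lemma bounds $d_T(b,d)$ by the maximum of a red and a black distance, i.e.\ by at most $d_{\max}$, contradicting the blue non-edge $(b,d)$. For \textbf{f-c}$(K_{1,3})$ the same triple $\{a,b,c\}$ has the blue non-edge $(a,c)$ as its longest path, and the lemma bounds $d_T(b,d)$ strictly below $d_{\min}$, contradicting the black edge $(b,d)$. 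For \textbf{f-c}$(K_3\cup K_1)$ I would instead use the triple $\{a,c,d\}$ inside the triangle, where the black edge $(c,d)$ is the unique long distance and $x=b$ forces $d_T(a,b)\le d_{\max}$, contradicting the blue non-edge $(a,b)$. Finally, \textbf{f-c}$(P_4)$ has a red non-edge $(a,d)$ together with blue $2$-non-edges $(a,c)$ and $(b,d)$, which matches the hypothesis of Lemma \ref{le1} verbatim.

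No serious obstacle is expected. Once the correct triple has been identified in each picture, verifying that the coloring really pins down a strictly longest path in $T_{uvw}$ is routine, and the desired contradiction is a single substitution into Lemma \ref{lemma.subtree}. The only care needed is to check in each case that the ``maximum of two distances'' produced on the right-hand side of that lemma is strictly smaller (or larger) than the quantity it is compared against, which amounts to a sign check involving $d_{\min}$ and $d_{\max}$; the main creative step in the proof is thus the initial choice of triple in each of the four non-$P_4$ configurations.
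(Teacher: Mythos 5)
Your proposal is correct and follows essentially the same route as the paper: for \textbf{f-c}$(2K_2)b$, \textbf{f-c}$(K_{1,3})$ and \textbf{f-c}$(K_3 \cup K_1)$ you choose exactly the triples the paper chooses and extract the same contradictions from Lemma \ref{lemma.subtree}, and deducing \textbf{f-c}$(P_4)$ directly from Lemma \ref{le1} is a legitimate shortcut for the paper's explicit computation. The only place you go beyond the paper is \textbf{f-c}$(2K_2)a$, which the paper handles by citing Lemma 6 of \cite{DMR15}; your direct argument --- take the triple $\{a,b,c\}$, note that the two red non-edges force $P_T(a,b)$ to be the strictly longest path in $T_{abc}$, and then $d_T(c,d) \leq \max\{d_T(a,d), d_T(b,d)\} < d_{min}$ contradicts the black edge $(c,d)$ --- is valid and has the small advantage of making the theorem self-contained.
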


\begin{proof} We prove separately that the tri-colorings in figure are forbidden for PCGs $2K_2$, $P_4$, $K_{1,3}$ and $K_3 \cup K_1$.

\medskip

\bigskip

\bigskip

\noindent
{\bf Forbidden tri-coloring f-c$(2K_2)a$}:

We obtain that the tri-coloring in Figure \ref{fig:forbidden}.a is forbidden by rephrasing Lemma 6 of  \cite{DMR15} with our nomenclature.

\medskip

The other proofs are all by contradiction and proceed as follows: for each tri-coloring in Figure \ref{fig:forbidden}, we assume that it is a feasible PCG-coloring induced by a triple $(T, d_{min},d_{max})$ and show that this assumption contradicts Lemma \ref{lemma.subtree}.

\medskip
\noindent
{\bf Forbidden tri-coloring f-c$(2K_2)b$}:

From the tri-coloring in Figure \ref{fig:forbidden}.b we have that
$$d_T(b,c) < d_{min}\leq d_T(a,b)\leq  d_{max}< d_T(a,c). 
$$ 
Thus $P_T(a,c)$ is the largest path in $T_{a,b,c}$. By Lemma \ref{lemma.subtree}, for leaf $d$ it must be: $d_T(b,d) \leq \max\left\{d_T(a,d), d_T(c,d)\right\}= d_T(c,d)$ while from the tri-coloring it holds that $d_T(c,d) \leq d_{max}<d_T(b,d)$, a contradiction.

\medskip
\noindent
{\bf Forbidden tri-coloring f-c$(P_4)$}:

From the tri-coloring in Figure \ref{fig:forbidden}.c we have that 
$$d_T(a,b), d_T(b,c)  \leq   d_{max}< d_T(a,c).
$$ 
Thus $P_T(a,c)$ is the largest path in $T_{a,b,c}$. By Lemma \ref{lemma.subtree}, for leaf $d$ we have:
$d_T(b,d) \leq \max\left\{d_T(a,d), d_T(c,d)\right\}= d_T(c,d)$ while from the tri-coloring it holds that $d_T(c,d) \leq d_{max}<d_T(b,d)$, a contradiction.

\medskip
\noindent
{\bf Forbidden tri-coloring f-c$(K_{1,3})$}:

From the tri-coloring in Figure \ref{fig:forbidden}.d we have that 
$$d_T(a,b), d_T(b,c)  \leq   d_{max}< d_T(a,c).
$$ 
Thus $P_T(a,c)$ is the largest path in $T_{a,b,c}$. By Lemma \ref{lemma.subtree}, for leaf $d$ we have:
$d_T(b,d) \leq \max\left\{d_T(a,d), d_T(c,d)\right\}$ while from the tri-coloring it holds that $d_T(a,d), d_T(c,d) < d_{min}\leq d_T(b,d)$.

\medskip
\noindent
{\bf Forbidden tri-coloring f-c$(K_3 \cup K_1)$}:

From the tri-coloring in Figure \ref{fig:forbidden}.e we have that
$$d_T(a,d), d_T(a,c)< d_{min}\leq d_T(c,d). 
$$ 
Thus $P_T(c,d)$ is the largest path in $T_{a,c,d}$. By Lemma \ref{lemma.subtree}, for leaf $b$ it must be: $d_T(a,b) \leq \max\left\{d_T(c,b), d_T(d,b)\right\}$ while from the tri-coloring it holds that $d_T(c,b),$ $d_T(d,b) \leq d_{max} < d_T(a,b)$, a contradiction.
\end{proof}

\begin{figure}[t]
\begin{center}
\vspace*{0.5cm}
\hspace*{1cm}
\begin{tabular}{c  c  c }
\begin{picture}(90,80)(0,0)
\color{blue}
\thicklines
\dashline{7}(10,30)(55,55)
\dashline{7}(45,30)(0,55)

\color{black}
\qbezier(0,55)(10,30)(10,30)
\qbezier(0,55)(30,85)(30,85)
\qbezier(0,55)(30,65)(30,65)
\qbezier(55,55)(45,30)(45,30)
\qbezier(55,55)(30,85)(30,85)
\qbezier(55,55)(30,65)(30,65)

\put(30,65){\line(0,1){20}}
\put(10,30){\line(1,0){35}}

\put(0,55){\circle*{6}}
\put(10,30){\circle*{6}}
\put(30,85){\circle*{6}}
\put(30,65){\circle*{6}}
\put(45,30){\circle*{6}}
\put(55,55){\circle*{6}}
\put(2,60){$e$}
\put(4,18){$d$}
\put(52,60){$b$}
\put(48,16){$c$}
\put(27,55){$a$}
\put(33,84){$f$}
\thinlines

\end{picture}
&
\begin{picture}(90,80)(0,0)
\color{red}
\thicklines
\dottedline{4}(5,70)(55,70)

\color{blue}
\dashline{7}(5,30)(55,70)
\dashline{7}(5,70)(55,30)

\color{black}
\qbezier(5,30)(30,60)(55,70)
\qbezier(5,70)(30,60)(55,30)

\put(5,30){\circle*{6}}
\put(5,70){\circle*{6}}
\put(30,55){\circle*{6}}
\put(55,30){\circle*{6}}
\put(55,70){\circle*{6}}
\put(2,75){$b$}
\put(2,18){$a$}
\put(52,75){$c$}
\put(52,18){$d$}
\put(27,60){$e$}
\put(5,30){\line(0,1){40}}
\put(55,30){\line(0,1){40}}
\thinlines

\end{picture}

&
\begin{picture}(120,80)(0,0)
\thicklines
\qbezier(30,55)(55,30)(55,30)
\qbezier(30,55)(55,80)(55,80)

\put(0,55){\circle*{6}}
\put(30,55){\circle*{6}}
\put(55,30){\circle*{6}}
\put(55,80){\circle*{6}}
\put(80,55){\circle*{6}}
\put(110,55){\circle*{6}}

\put(-2,45){$a$}
\put(25,43){$b$}
\put(55,85){$c$}
\put(50,18){$d$}
\put(75,45){$e$}
\put(105,43){$f$}

\qbezier(55,30)(80,55)(80,55)
\qbezier(55,80)(80,55)(80,55)
\put(55,30){\line(0,1){50}}
\put(0,55){\line(1,0){30}}
\put(80,55){\line(1,0){30}}

\color{red}
\thicklines
\dottedline{4}(3,55)(50,80)
\dottedline{4}(60,80)(107,55)

\color{blue}
\dashline{7}(3,55)(50,30)
\dashline{7}(60,30)(107,55)
\thinlines

\end{picture}

\\
\hspace*{-1.5cm} a. {\bf f-c}$(A)$&
\hspace*{-1.5cm} 
b. {\bf f-c}$(B)$&
\hspace*{-.5cm} c. {\bf f-c}$(C)$
\end{tabular}
\end{center}
\caption{Some forbidden partial tri-colorings of small graphs. Acronym {\bf f-c} stands for {\em forbidden coloring}.}
\label{fig:forbidden2}
\end{figure}
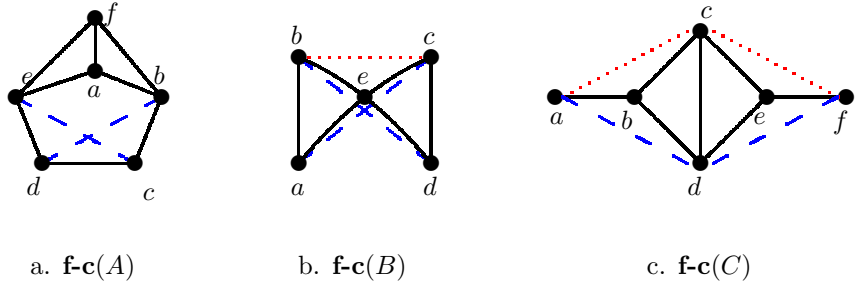

\begin{thm}
The  partial tri-colorings in Figure \ref{fig:forbidden2} are forbidden PCG-colorings.
\end{thm}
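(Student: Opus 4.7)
\medskip

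\noindent\textbf{Proof plan.} For each of the three partial tri-colorings, the plan is to argue by contradiction: assume a triple $(T,d_{min},d_{max})$ induces a PCG-coloring extending the partial tri-coloring, and derive a contradiction by iteratively applying Lemma~\ref{lemma.subtree} to well-chosen triples of leaves, occasionally combined with the forbidden patterns of Theorem~\ref{lemma.forbidden}, with Theorem~\ref{cicloDistanza2}, and with Lemma~\ref{claim1}. I expect the main obstacle to be f-c$(C)$: there, the chain of deductions requires several applications of Lemma~\ref{lemma.subtree} in which both the auxiliary leaf $x$ and the identification of the longest path in each triple must be chosen so that each derived inequality feeds into the next to reach the final contradiction.

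For f-c$(B)$ the proof is short. I will work with the triple $(b,c,e)$: since $(b,c)$ is red while $(b,e)$ and $(c,e)$ are edges, the longest path in $T_{bce}$ is either $P_T(b,e)$ or $P_T(c,e)$. In the first case, Lemma~\ref{lemma.subtree} with $x=a$ gives $d_T(c,a)\leq\max\{d_T(b,a),d_T(e,a)\}\leq d_{max}$, contradicting $(a,c)$ being blue; in the second case the same lemma with $x=d$ gives $d_T(b,d)\leq\max\{d_T(c,d),d_T(e,d)\}\leq d_{max}$, contradicting $(b,d)$ being blue.

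For f-c$(A)$ my plan has four steps. First, Lemma~\ref{lemma.subtree} applied to $(b,c,d)$ (whose longest path is $P_T(b,d)$) with $x=e$, together with $(c,e)$ blue, forces $(b,e)$ blue. Second, analogous applications of Lemma~\ref{lemma.subtree} to $(b,c,d)$ and $(c,d,e)$ with $x=a$ and with $x=f$ yield the couplings ``$(a,c)$ blue iff $(a,d)$ blue'' and ``$(c,f)$ blue iff $(d,f)$ blue''. Third, Theorem~\ref{cicloDistanza2} applied to the induced $5$-cycles on the vertex sets $\{a,b,c,d,e\}$ and $\{b,c,d,e,f\}$ requires a red $2$-non-edge in each; combined with the couplings and with the known blues $(b,d),(c,e),(b,e)$, this forces all of $(a,c),(a,d),(c,f),(d,f)$ to be red. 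Finally, Lemma~\ref{lemma.subtree} applied to $(a,c,d)$ (whose longest path is now $P_T(c,d)$) with $x=f$ gives $d_T(a,f)\leq\max\{d_T(c,f),d_T(d,f)\}<d_{min}$, contradicting that $(a,f)$ is an edge.

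For f-c$(C)$ my plan chains five steps. (i) In the subgraph induced by $\{a,c,d,e\}$, which consists of the triangle on $\{c,d,e\}$ with $a$ isolated, $(a,c)$ is red and $(a,d)$ is blue, so if $(a,e)$ were red one would read off the forbidden pattern f-c$(K_3\cup K_1)$; hence $(a,e)$ is blue, and symmetrically $(b,f)$ is blue. (ii) The vertex sequence $f,e,d,b,a$ induces a $P_5$ whose non-edges $(f,d),(f,b),(e,a)$ are now all blue, so Lemma~\ref{claim1} forces $(a,f)$ blue. (iii) To show $(b,e)$ blue I would assume it red and apply Lemma~\ref{lemma.subtree} to $(b,c,e)$: whichever of $P_T(b,c),P_T(c,e)$ is the longest, the lemma with $x=a$ (resp.\ $x=f$) contradicts $(a,e)$ (resp.\ $(b,f)$) being blue. (iv) In the triple $(a,e,f)$, if $P_T(a,f)$ were the longest, Lemma~\ref{lemma.subtree} with $x=c$ would yield $d_T(e,c)\leq\max\{d_T(a,c),d_T(f,c)\}<d_{min}$, contradicting that $(c,e)$ is an edge; therefore $d_T(a,e)>d_T(a,f)$. (v) In the triple $(b,e,f)$, if $P_T(b,f)$ were the longest, Lemma~\ref{lemma.subtree} with $x=a$ would give $d_T(e,a)\leq d_T(f,a)$, contradicting (iv); hence $P_T(b,e)$ is the longest, and Lemma~\ref{lemma.subtree} with $x=d$ yields $d_T(f,d)\leq\max\{d_T(b,d),d_T(e,d)\}\leq d_{max}$, contradicting $(d,f)$ blue.
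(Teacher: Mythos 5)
Your proof is correct and rests on the same engine as the paper's --- assume a witness triple $(T,d_{min},d_{max})$ and contradict Lemma~\ref{lemma.subtree} on well-chosen leaf triples --- but the routing differs in places. For \textbf{f-c}$(B)$ your two cases on the longest path of $T_{bce}$ are exactly the paper's argument (the paper hides one of the two cases behind a w.l.o.g.). For \textbf{f-c}$(A)$ the paper reuses its catalogue of forbidden patterns: \textbf{f-c}$(P_4)$ forces $(b,e)$ blue and propagates red around the two $5$-cycles, and the final contradiction is \textbf{f-c}$(2K_2)a$, imported from Durocher et al.; you re-derive the same chain from Lemma~\ref{lemma.subtree} directly, and your last step (triple $(a,c,d)$ with $x=f$) is in fact a self-contained proof of the all-red-$2K_2$ contradiction that the paper only cites. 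For \textbf{f-c}$(C)$ your route is genuinely leaner: the paper splits on the color of $(a,f)$, handling the red case via \textbf{f-c}$(2K_2)b$ plus Lemma~\ref{le1} and the blue case via three subcases on the longest path of $T_{adf}$, whereas you obtain $(a,f)$ blue in one stroke from Lemma~\ref{claim1} applied to the induced $P_5$ on $f,e,d,b,a$ (its hypotheses $(f,d)$, $(f,b)$, $(e,a)$ blue being exactly what your step (i) provides), and then close with the triples $(a,e,f)$ and $(b,e,f)$. Your step (iii) is correct but dispensable: in $T_{bef}$ the longest path is never $P_T(e,f)$ (an edge, while $(b,f)$ is blue), so the two branches of step (v) are exhaustive whatever the color of $(b,e)$. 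Net effect: the paper leans on its precomputed forbidden colorings, you on slightly longer but self-contained chains of Lemma~\ref{lemma.subtree} applications; both are sound.
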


\begin{proof}
Using the results of Theorem \ref{lemma.forbidden}, we again prove separately that each tri-coloring is forbidden by contradiction.

\medskip

\noindent
{\bf Forbidden tri-coloring f-c$(A)$}:

Let us assume that the partial tri-coloring in figure \ref{fig:forbidden2}.a is a PCG-coloring. 
Consider the PCG-coloring inherited by path $G[b,c,d,e]$. 
To avoid  ${ \bf f-c}(P_4)$, non-edge $(e,b)$ must be blue. 
Now consider the PCG-coloring inherited by cycle $G[a,b,c,d,e]$. 
From Lemma \ref{lemma.cicloRossoBlu}, every PCG-coloring of $C_n$, $n \geq 4$, has at least a red non-edge. 
Thus at least one of the non-edges between $(a,c)$ and $(a,d)$ is red and w.l.o.g. let assume it is $(a,c)$. 
To avoid ${\bf f-c}(P_4)$ for path  $G[c,d,e,a]$, non-edge $(a,d)$ is red, too. 
Now, consider the PCG-coloring inherited by the  cycle $G[b,c,d,e,f]$; with a similar reasoning, 
we get that the two non-edges $(f,c)$ and $(f,d)$ are both  red. 
Thus we have four red non-edges, namely $(a,c)$, $(a,d)$, $(f,c)$ and $(f,d)$. 
This implies  ${\bf f-c}(2K_2)a$ for $G[a,c,d,f]$, a contradiction.

\medskip
\noindent
{\bf Forbidden tri-coloring f-c$(B)$}:

From the tri-coloring in Figure \ref{fig:forbidden2}.b we have that 
$$d_T(b,c) < d_{min}\leq d_T(b,e), d_T(e,c).
$$ 
Without loss of generality,  let  assume $d_T(b,e)\leq d_T(e,c)$. 
Thus $P_T(e,c)$ is the largest path in $T_{b,c,e}$. By Lemma \ref{lemma.subtree}, for leaf $d$ we have:
$d_T(b,d) \leq \max\left\{d_T(d,e), d_T(c,d)\right\}$ while from the tri-coloring it holds that 
$d_T(d,e), d_T(c,d) \leq d_{max} < d_T(b,d)$, a contradiction.

\medskip
\noindent
{\bf Forbidden tri-coloring f-c$(C)$}:

From the the tri-coloring in Figure \ref{fig:forbidden2}.c, extract  the inherited $PCG$-colorings for the two subgraphs $G[a,c,d,e]$ and $G[b,c,d,f]$.
To avoid {\bf f-c}$(K_3 \cup K_1)$, 
the non-edges $(a,e)$ and $(b,f)$ are both blue.
Now we distinguish the two possible cases for the color of non-edge $(a,f)$:
\begin{description}
\item[$(a,f)$] is red: 
consider  the $PCG$-coloring for subgraph $G[a,b,e,f]$. To avoid {\bf f-c}$(2K_2)b$, non-edge $(b,e)$ has to be blue. This implies that the 
$PCG$-coloring for path $G[a,b,d,e,f]$ has  all the 2-non-edges with color blue while the non-edge $(a,f)$ is red. This is in contradiction with Lemma \ref{le1}.
\item[$(a,f)$] is blue: 
in this case consider Lemma \ref{lemma.subtree} applied to tree $T_{a,d,f}$.
We distinguish the three cases for the largest path among $P_T(a,d)$, $P_T(a,f)$ and $P_T(d,f)$:
\begin{description}
\item[$P_T(a,d):$] for leaf $b$ it must be: $d_T(f,b) \leq \max\left\{d_T(a,b), d_T(d,b)\right\}$ while from the tri-coloring $d_T(a,b), d_T(d,b)\leq d_{max}<d_T(f,b)$.
\item[$P_T(a,f):$] for leaf $c$ it must be:  $d_T(d,c) \leq \max\left\{d_T(a,c), d_T(f,c)\right\}$ while from the tri-coloring $d_T(a,c), d_T(f,c)< d_{min}\leq d_T(d,c)$.
\item[$P_T(d,f):$] for leaf $e$ it must be:  $d_T(a,e) \leq \max\left\{d_T(d,e), d_T(f,e)\right\}$ while from the tri-coloring $d_T(d,e), d_T(f,e)\leq d_{max}< d_T(a,e)$.
\end{description}
In all the three cases, a contradiction arises.
\end{description}
\end{proof}

\section{Wheels}
\label{sec.wheel}

Wheels $W_{n+1}$ are $n$ length cycles $C_n$ whose nodes are all connected with a universal node.
They have already been studied from the pairwise compatibility point of view.
Indeed,  wheel $W_{6+1}$ is PCG and it is the only graph with 7 nodes whose witness tree is not a caterpillar \cite{CFS13} (see Figure \ref{fig:wheel}.a).
Moreover, it has been proven in \cite{CAS13} that also the larger wheels up to $W_{10+1}$ do not have a caterpillar as a witness tree but, up to now, no other witness trees are known for these graphs and, in general, it has been left as an open problem whether wheels with at least 8 nodes are PCGs or not.
In this section we completely solve this problem.

\begin{figure}[t]
\vspace*{1cm}
\begin{center}
\begin{tabular}{c  c }
\begin{picture}(100,80)(0,0)
\hspace*{-1.5cm}
\put(10,10){\circle*{6}}
\put(10,40){\circle*{6}}
\put(10,60){\circle*{6}}
\put(10,90){\circle*{6}}

\put(40,25){\circle*{6}}
\put(40,75){\circle*{6}}

\put(70,50){\circle*{6}}

\put(100,25){\circle*{6}}
\put(100,75){\circle*{6}}

\put(130,10){\circle*{6}}
\put(130,40){\circle*{6}}

\put(-4,10){$v_3$}
\put(-4,40){$v_6$}
\put(-4,60){$v_5$}
\put(-4,90){$v_2$}
\put(103,75){$c$}
\put(132,11){$v_1$}
\put(132,41){$v_4$}

\put(20,6){$1$}
\put(20,38){$3$}
\put(20,56){$1$}
\put(20,88){$3$}

\put(48,38){$1$}
\put(48,56){$1$}

\put(88,38){$1$}
\put(88,56){$3$}

\put(116,6){$1$}
\put(116,38){$3$}

\thicklines
\qbezier(10,10)(40,25)(40,25)
\qbezier(10,40)(40,25)(40,25)
\qbezier(10,60)(40,75)(40,75)
\qbezier(10,90)(40,75)(40,75)
\qbezier(40,25)(70,50)(70,50)
\qbezier(40,75)(70,50)(70,50)
\qbezier(70,50)(100,25)(100,25)
\qbezier(70,50)(100,75)(100,75)
\qbezier(100,25)(130,10)(130,10)
\qbezier(100,25)(130,40)(130,40)
\end{picture}
&
\begin{picture}(100,80)(0,0)
\hspace*{.5cm}
\put(10,50){\circle*{6}}
\put(10,80){\circle*{6}}

\put(40,65){\circle*{6}}

\put(60,65){\circle*{6}}

\put(80,65){\circle*{6}}


\put(110,50){\circle*{6}}
\put(110,80){\circle*{6}}

\put(40,90){\circle*{6}}
\put(80,90){\circle*{6}}

\put(60,40){\circle*{6}}

\put(45,10){\circle*{6}}
\put(75,10){\circle*{6}}

\put(-4,50){$v_2$}
\put(-4,80){$v_4$}
\put(115,50){$v_1$}
\put(115,80){$v_3$}
\put(40,0){$c$}
\put(70,0){$v_5$}
\put(35,100){$v_7$}
\put(75,100){$v_6$}

\put(43,20){$3$}
\put(70,20){$6$}
\put(61,50){$3$}

\put(45,67){$2$}
\put(66,67){$2$}
\put(41,78){$5$}
\put(72,78){$1$}
\put(25,49){$3$}
\put(25,75){$1$}
\put(87,49){$3$}
\put(87,75){$5$}

%
%

\thicklines
\qbezier(10,50)(40,65)(40,65)
\qbezier(10,80)(40,65)(40,65)
\qbezier(110,50)(80,65)(80,65)
\qbezier(110,80)(80,65)(80,65)
\qbezier(45,10)(60,40)(60,40)
\qbezier(75,10)(60,40)(60,40)

\put(40,65){\line(1,0){40}}
\put(40,65){\line(0,1){25}}
\put(80,65){\line(0,1){25}}
\put(60,65){\line(0,-1){25}}
\end{picture}
\\
\hspace*{-1.3cm}a.
&
\hspace*{3.2cm}b. 
\end{tabular}
\end{center}
\caption{a. Tree $T$ such that $W_{6+1}=PCG(T, 5,7)$; b.  Tree $T$ such that $W_{7+1}=PCG(T, 9,13)$.}
\label{fig:wheel}
\end{figure}
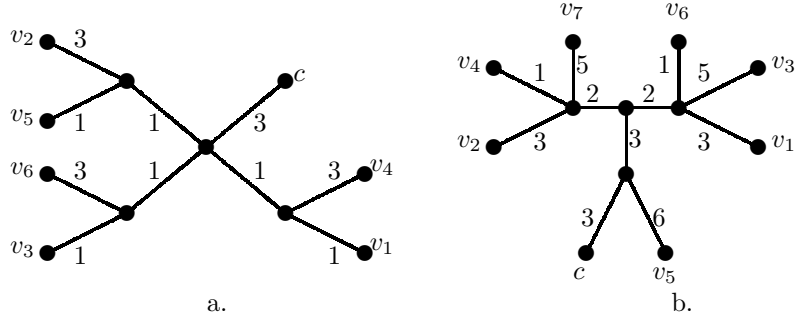

First we prove that $W_{7+1}$ is PCG.

\begin{thm}
Wheel $W_{7+1}$ is PCG.
\end{thm}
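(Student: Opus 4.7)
The plan is constructive: exhibit a specific edge-weighted tree $T$ together with explicit values of $d_{min}$ and $d_{max}$, and then verify that the resulting PCG is exactly $W_{7+1}$. In fact, the tree displayed in Figure \ref{fig:wheel}.b, together with the choice $d_{min}=9$ and $d_{max}=13$, is precisely the witness I would use. So the argument reduces to defining the tree carefully and checking all pairwise distances between leaves.

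Concretely, I would first describe $T$: it has four internal nodes $x_1, x_2, x_3, x_4$ forming a ``T''-shape, where $x_1 - x_2 - x_3$ is a horizontal path with both edges of weight $2$, and $x_2 - x_4$ is a pendant edge of weight $3$. Each of the eight cycle/center leaves is then attached to one of the internal nodes with a specific weight: $v_2, v_4, v_7$ hang from $x_1$ with weights $3, 1, 5$ respectively; $v_1, v_3, v_6$ hang from $x_3$ with weights $3, 5, 1$; and $c, v_5$ hang from $x_4$ with weights $3, 6$. This completely specifies $T$.

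Next I would verify the PCG property. The obligation is to show, for every pair $(u,v)$ of leaves, that $d_T(u,v)\in[9,13]$ iff $(u,v)$ is an edge of $W_{7+1}$ (i.e., $uv$ is a cycle edge or $u=c$). I would organize the bookkeeping in three groups: (i) the seven cycle edges $(v_i,v_{i+1\bmod 7})$, (ii) the seven spokes $(c,v_i)$, and (iii) the remaining $\binom{8}{2}-14=14$ non-edges. Each computation is just the sum of at most four tree-edge weights. For instance $d_T(v_1,v_2)=3+2+2+3=10$ (edge), $d_T(c,v_5)=3+6=9$ (edge), $d_T(v_4,v_7)=1+5=6<9$ (non-edge), and $d_T(v_1,v_5)=3+2+3+6=14>13$ (non-edge).

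This is not so much a proof with an obstacle as a verification; the only real effort lies in producing the tree with the right distance pattern (which the authors have already done), and then the routine enumeration of the $\binom{8}{2}=28$ pairwise distances will confirm that cycle-edges and spokes land inside $[9,13]$ while all other pairs fall strictly outside. I would present the verification as a small table, grouping the cycle edges (each in $\{10,12\}$), the spokes (each in $\{9,11,13\}$), the ``short'' non-edges with distance at most $8$, and the ``long'' non-edges involving $v_5$ with distance at least $14$, to make the case analysis easy to check.
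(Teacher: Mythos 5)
Your proposal is correct and is essentially identical to the paper's proof: the paper simply exhibits the tree of Figure \ref{fig:wheel}.b with $d_{min}=9$ and $d_{max}=13$ (and your reconstruction of that tree and its edge weights matches the figure exactly), leaving the $\binom{8}{2}$ distance checks implicit, which you spell out. One tiny slip in your bookkeeping sketch: not every ``long'' non-edge involves $v_5$ --- the pair $(v_3,v_7)$ has $d_T(v_3,v_7)=5+4+5=14>13$ as well --- but since you enumerate all $28$ pairs anyway this does not affect the argument.
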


\begin{proof}
In order to prove the statement, it is enough to show a triple $(T,$ $d_{min},$ $d_{max})$ witnessing that $W_{7+1}$ is PCG.
Tree $T$ is shown in Figure \ref{fig:wheel}.b, and the values of $d_{min}$ and $d_{max}$ are 9 and 13, respectively.
\end{proof}

%

Then, exploiting the proof technique just described, we prove that every larger wheel $W_{n+1}, n \geq 8$,  is not a PCG.

\begin{thm}
\label{th.wheel}
Let $n\geq 8$. The graph $W_{n+1}$ is not PCG.
\end{thm}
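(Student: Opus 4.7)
\emph{Proof plan.} The plan is to apply the technique of Section~\ref{sec.technique}: assume a PCG-coloring $\mathcal{C}$ of $W_{n+1}$ exists and exhibit an induced subgraph whose inherited tri-coloring is forbidden. The starting point is that the cycle $C_n$ on $v_1,\ldots,v_n$ is induced in $W_{n+1}$, so $\mathcal{C}$ restricted to $C_n$ must itself be a PCG-coloring. By Theorem~\ref{cicloDistanza2}, this restriction contains a red 2-non-edge; after relabeling I may assume it is $(v_1,v_3)$. I then iterate Lemma~\ref{le1} on the induced subpaths $G[v_i,v_{i+1},\ldots,v_j]$ of the cycle: every red non-edge $(v_i,v_j)$ must be ``witnessed'' by a red 2-non-edge on the arc between $v_i$ and $v_j$, so the red non-edges of $\mathcal{C}|_{C_n}$ are tightly clustered around red 2-non-edges.

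The second ingredient is to exploit the universal vertex $c$. Since $c$ is adjacent to every cycle vertex, any triple of pairwise non-adjacent cycle vertices $v_i,v_j,v_k$ (which exists for $n\geq 6$) yields an induced $K_{1,3}$ together with $c$; by f-c$(K_{1,3})$ the three non-edges $(v_i,v_j), (v_j,v_k), (v_i,v_k)$ cannot exhibit the ``two red, one blue'' pattern. Similarly, induced $2K_2$'s inside $C_n$ (available for $n\geq 6$ by picking two non-adjacent edges of the cycle separated by at least two vertices on each side) are subject to f-c$(2K_2)a$ and f-c$(2K_2)b$, ruling out the all-red and the antipodal-blue patterns on such pairs. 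Combining these local constraints with the clustering of red non-edges from Lemma~\ref{le1} leaves only a short list of candidate global patterns for $\mathcal{C}|_{C_n}$: roughly, the red non-edges concentrate in a short consecutive window of the cycle and every sufficiently ``long'' non-edge is blue.

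Each surviving candidate pattern will then be ruled out by locating a 5- or 6-vertex induced subgraph of $W_{n+1}$ whose inherited coloring matches one of the forbidden patterns f-c$(P_4)$, f-c$(A)$, f-c$(B)$, or f-c$(C)$; the availability of such subgraphs is where the hypothesis $n\geq 8$ enters, as it guarantees enough room both around the red window and opposite to it for the relevant five or six cycle vertices (together possibly with $c$) to span an induced copy. The main obstacle I expect is the exhaustive case analysis needed to check that every admissible tri-coloring of $\mathcal{C}|_{C_n}$ falls into one of the handled patterns; the bound $n\geq 8$ must be tight, since the preceding theorem shows that $W_{7+1}$ \emph{is} PCG and hence any valid argument must fail for $n=7$.
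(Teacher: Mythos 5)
Your plan assembles essentially the same toolkit as the paper's proof: restrict to the induced cycle, invoke Theorem~\ref{cicloDistanza2} to get a red 2-non-edge (normalized to $(v_1,v_3)$), and then use the hub $c$ to form induced $K_{1,3}$'s with triples of pairwise non-adjacent rim vertices, so that {\bf f-c}$(K_{1,3})$, together with {\bf f-c}$(2K_2)$a, {\bf f-c}$(P_4)$ and {\bf f-c}$(B)$, forces colors of further non-edges. That is exactly the paper's strategy. However, what you have written is a plan, not a proof: the entire deductive content --- showing that \emph{every} tri-coloring actually runs into one of the forbidden patterns --- is deferred to an ``exhaustive case analysis'' that you acknowledge but do not perform. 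In the paper that analysis is the whole proof: it branches on the four possible colorings of the pair $(v_7,v_1)$, $(v_3,v_5)$ flanking the red 2-non-edge, and in each branch derives a chain of forced colors (each step justified by a named forbidden coloring on an explicit 4- or 5-vertex induced subgraph) terminating in a contradiction. None of these chains, nor even the branching structure, appears in your proposal.

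Two further points where the plan as stated would not go through. First, your intermediate structural claim --- that the local constraints force the red non-edges to ``concentrate in a short consecutive window'' with all sufficiently long non-edges blue --- is unsubstantiated and does not match what actually happens: in several branches of the correct analysis one ends up proving that \emph{many} long non-edges (e.g.\ all non-edges incident to a fixed rim vertex) are red, and the contradiction comes from an all-red $2K_2$ or from {\bf f-c}$(B)$, not from a ``blue outside the window'' picture. Lemma~\ref{le1} only tells you that each red non-edge has \emph{some} red 2-non-edge on the arc it spans; it does not cluster the red non-edges. Second, to handle arbitrary $n\geq 8$ you need a propagation argument along the rim (of the form: if $(v_3,v_i)$ is red and $(v_4,v_i)$ is blue for some $i>8$, then the same holds for $i-1$), which reduces the general case to a bounded configuration near $v_7,v_8$; your plan says only that $n\geq 8$ ``guarantees enough room,'' which is not an argument. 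As it stands the proposal identifies the right ingredients but leaves the theorem unproved.
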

\begin{proof}
Step 1 of the proof technique, requiring a list of useful forbidden PCG-colorings, has been completed in Section \ref{sec.forbidden}: namely, we will use {\bf f-c}($2K_2$)a, {\bf f-c}($P_4$), {\bf f-c}($K_{1,3}$), {\bf f-c}($B$) and the forbidden tri-coloring in Theorem \ref{cicloDistanza2}.

Step 2 of the proof technique requires to prove that every tri-coloring of $W_{n+1}$ induces a forbidden PCG-coloring for a certain induced pairwise compatibility subgraph.

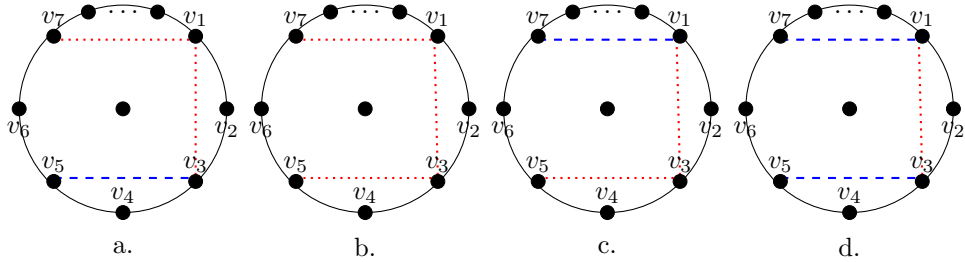
\begin{figure}[ht]
\vspace*{0.5cm}
\begin{tikzpicture}[scale=0.92]
\draw (1,2) circle (1.5cm);
\draw[thick,color=red,dotted] (0,3) -- (2,3);
\draw[thick,color=blue,dashed] (0,1) -- (2,1);
\draw[thick,color=red,dotted] (2.05,3) -- (2.05,1);
\draw[fill=black] (0,0.95) circle (0.1cm); \node[] at (0,1.2) {$v_5$};
\draw[fill=black] (0,3.05) circle (0.1cm); \node[] at (0,3.3) {$v_7$};
\draw[fill=black] (2.05,3.05) circle (0.1cm); \node[] at (2.05,3.3) {$v_1$};
\draw[fill=black] (2.05,0.95) circle (0.1cm); \node[] at (2.05,1.2) {$v_3$};
\draw[fill=black] (2.5,2) circle (0.1cm); \node[] at (2.5,1.7) {$v_2$};
\draw[fill=black] (1,0.5) circle (0.1cm); \node[] at (1,.75) {$v_4$};
\draw[fill=black] (-0.5,2) circle (0.1cm);  \node[] at (-.5,1.7) {$v_6$};
\draw[fill=black] (0.5,3.4) circle (0.1cm); \node[] at (1,3.4) {$\ldots$};
\draw[fill=black] (1.5,3.4) circle (0.1cm); 
\draw[fill=black] (1,2) circle (0.1cm);
\draw (4.5,2) circle (1.5cm);
\draw[thick,color=red,dotted] (3.5,3) -- (5.5,3);
\draw[thick,color=red,dotted] (3.5,1) -- (5.5,1);
\draw[thick,color=red,dotted] (5.5,3) -- (5.55,1);
\draw[fill=black] (3.5,0.95) circle (0.1cm); \node[] at (3.5,1.2) {$v_5$};
\draw[fill=black] (3.5,3.05) circle (0.1cm); \node[] at (3.5,3.3) {$v_7$};
\draw[fill=black] (5.55,3.05) circle (0.1cm); \node[] at (5.55,3.3) {$v_1$};
\draw[fill=black] (5.55,0.95) circle (0.1cm); \node[] at (5.55,1.2) {$v_3$};
\draw[fill=black] (6,2) circle (0.1cm); \node[] at (6,1.7) {$v_2$};
\draw[fill=black] (4.5,0.5) circle (0.1cm);  \node[] at (4.5,.75) {$v_4$};
\draw[fill=black] (3,2) circle (0.1cm);  \node[] at (3,1.7) {$v_6$};
\draw[fill=black] (4,3.4) circle (0.1cm); \node[] at (4.5,3.4) {$\ldots$};
\draw[fill=black] (5,3.4) circle (0.1cm); 
\draw[fill=black] (4.5,2) circle (0.1cm);
\draw (8,2) circle (1.5cm);
\draw[thick,color=blue,dashed] (7,3) -- (9,3);
\draw[thick,color=red,dotted] (7,1) -- (9,1);
\draw[thick,color=red,dotted] (9,3) -- (9.05,1);
\draw[fill=black] (7,0.95) circle (0.1cm); \node[] at (7,1.2) {$v_5$};
\draw[fill=black] (7,3.05) circle (0.1cm); \node[] at (7,3.3) {$v_7$};
\draw[fill=black] (9.05,3.05) circle (0.1cm); \node[] at (9.05,3.3) {$v_1$};
\draw[fill=black] (9.05,0.95) circle (0.1cm); \node[] at (9.05,1.2) {$v_3$};
\draw[fill=black] (9.5,2) circle (0.1cm); \node[] at (9.5,1.7) {$v_2$};
\draw[fill=black] (8,0.5) circle (0.1cm);  \node[] at (8,.75) {$v_4$};
\draw[fill=black] (6.5,2) circle (0.1cm);  \node[] at (6.5,1.7) {$v_6$};
\draw[fill=black] (7.5,3.4) circle (0.1cm); \node[] at (8,3.4) {$\ldots$};
\draw[fill=black] (8.5,3.4) circle (0.1cm); 
\draw[fill=black] (8,2) circle (0.1cm);
\draw (11.5,2) circle (1.5cm);
\draw[thick,color=blue,dashed] (10.5,3) -- (12.5,3);
\draw[thick,color=blue,dashed] (10.5,1) -- (12.5,1);
\draw[thick,color=red,dotted] (12.5,3) -- (12.55,1);
\draw[fill=black] (10.5,0.95) circle (0.1cm); \node[] at (10.5,1.2) {$v_5$};
\draw[fill=black] (10.5,3.05) circle (0.1cm); \node[] at (10.5,3.3) {$v_7$};
\draw[fill=black] (12.55,3.05) circle (0.1cm); \node[] at (12.55,3.3) {$v_1$};
\draw[fill=black] (12.55,0.95) circle (0.1cm); \node[] at (12.55,1.2) {$v_3$};
\draw[fill=black] (13,2) circle (0.1cm); \node[] at (13,1.7) {$v_2$};
\draw[fill=black] (11.5,0.5) circle (0.1cm);  \node[] at (11.5,.75) {$v_4$};
\draw[fill=black] (10,2) circle (0.1cm);  \node[] at (10,1.7) {$v_6$};
\draw[fill=black] (11,3.4) circle (0.1cm); \node[] at (11.5,3.4) {$\ldots$};
\draw[fill=black] (12,3.4) circle (0.1cm); 
\draw[fill=black] (11.5,2) circle (0.1cm);

\node[] at (1,0) {a.};
\node[] at (4.5,0) {b.};
\node[] at (8,0) {c.};
\node[] at (11.5,0) {d.};
\end{tikzpicture}
\vspace*{-0.5cm}
\caption{The four cases in the proof of Theorem \ref{th.wheel}.}
\label{fig:proof_wheel}
\end{figure}

Let be given any tri-coloring of $W_{n+1}$; in view of Theorem \ref{cicloDistanza2}, there exists a red 2-non-edge
, w.l.o.g.  let it be $(v_1, v_3)$.
Let us now consider the three non-edges $(v_7, v_1)$, $(v_1, v_3)$, $(v_3, v_5)$. There are only 4 possibilities for the colors of these non-edges and we will study them one by one (see Figure \ref{fig:proof_wheel}).

\medskip
\noindent {\bf Case in Figure \ref{fig:proof_wheel}.a}:

Assume first that $(v_4, v_7)$ is blue; then  
non-edge $(v_3, v_7)$ is necessarily red in order to avoid {\bf f-c($K_{1,3}$)} on the graph induced by nodes $c$, $v_1$, $v_3$ and $v_7$. In the following we summarize this sentence as: 

$(v_3, v_7)$ red $\leftarrow$ {\bf f-c($K_{1,3}$)} on $G[c, v_1, v_3, v_7]$.

\noindent
and  a chain of obliged colored non-edges follows, namely: 

\begin{ditemize}
\item
$(v_3, v_6)$ red $\leftarrow$ {\bf f-c($B$)} on $G[c, v_3, v_4, v_6, v_7]$ (indeed, $(v_3, v_7)$ is red and $(v_4, v_7)$ is blue, so $(v_3, v_6)$ cannot be blue)
\item
$(v_1, v_4)$ blue $\leftarrow$ {\bf f-c($K_{1,3}$)} on $G[c, v_1, v_4, v_7]$
\item
$(v_1, v_6)$ red $\leftarrow$ {\bf f-c($K_{1,3}$)} on $G[c, v_1, v_3, v_6]$
\item
$(v_4, v_6)$ blue $\leftarrow$ {\bf f-c($K_{1,3}$)} on $G[c, v_1, v_4, v_6]$
\end{ditemize}
We got a path induced by nodes $v_3$, $v_4$, $v_5$ and $v_6$ with forbidden coloring {\bf f-c($P_4$)}, a contradiction, meaning that $(v_4, v_7)$ cannot be blue.

So, $(v_4, v_7)$ is red, and
we have the following chain of obliged colored non-edges: 

\begin{ditemize}
\item
$(v_1, v_5)$ blue $\leftarrow$ {\bf f-c($K_{1,3}$)} on $G[c, v_1, v_3, v_5]$
\item
$(v_1, v_4)$ red $\leftarrow$ {\bf f-c($K_{1,3}$)} on $G[c, v_1, v_4, v_7]$
\item
$(v_2, v_4)$ red $\leftarrow$ {\bf f-c($B$)} on $G[c, v_1, v_2, v_4, v_5]$
\item
$(v_2, v_7)$ red $\leftarrow$ {\bf f-c($K_{1,3}$)} on $G[c, v_2, v_4, v_7]$
\item
$(v_5, v_7)$ blue $\leftarrow$ {\bf f-c($K_{1,3}$)} on $G[c, v_1, v_5, v_7]$
\item
$(v_4, v_6)$ red $\leftarrow$ {\bf f-c($P_4$)} on $G[v_4, v_5, v_6, v_7]$
\item
$(v_2, v_6)$ red $\leftarrow$ {\bf f-c($K_{1,3}$)} on $G[c, v_2, v_4, v_6]$
\item
$(v_1, v_6)$ red $\leftarrow$ {\bf f-c($K_{1,3}$)} on $G[c, v_1, v_4, v_6]$
\end{ditemize}

Graph $G[v_1, v_2, v_6, v_7]$ has forbidden coloring {\bf f-c($2K_2$)}a, and this is a contradiction, meaning that $(v_4, v_7)$ cannot be red.

\medskip
\noindent {\bf Case in Figure \ref{fig:proof_wheel}.b}:

Notice that:
\begin{ditemize}
\item

$(v_3, v_7)$ red $\leftarrow$ {\bf f-c($K_{1,3}$)} on $G[c, v_1, v_3, v_7]$
\item
$(v_5, v_7)$ red $\leftarrow$ {\bf f-c($K_{1,3}$)} on $G[c, v_3, v_5, v_7]$
\item
$(v_1, v_5)$ red $\leftarrow$ {\bf f-c($K_{1,3}$)} on $G[c, v_1, v_3, v_5]$
\end{ditemize}

Assume now that $(v_4, v_7)$ is blue; then
we have the following chain of oblied colored non-edges:

\begin{ditemize}
\item
$(v_5, v_8)$ red $\leftarrow$ {\bf f-c($B$)} on $G[c, v_4, v_5, v_7, v_8]$
\item
$(v_3, v_8)$ red $\leftarrow$ {\bf f-c($K_{1,3}$)} on $G[c, v_3, v_5, v_8]$
\item
$(v_1, v_4)$ blue $\leftarrow$ {\bf f-c($K_{1,3}$)} on $G[c, v_1, v_4, v_7]$
\item
$(v_2, v_5)$ red $\leftarrow$ {\bf f-c($B$)} on $G[c, v_1, v_2, v_4, v_5]$
\item
$(v_2, v_8)$ red $\leftarrow$ {\bf f-c($K_{1,3}$)} on $G[c, v_2, v_5, v_8]$
\item
$(v_2, v_7)$ red $\leftarrow$ {\bf f-c($K_{1,3}$)} on $G[c, v_2, v_5, v_7]$
\end{ditemize}

so $G[v_2, v_3, v_7, v_8]$ has forbidden coloring {\bf f-c($2K_2$)}a, a contradiction.

So, $(v_4, v_7)$ must be red, and
$(v_1, v_4)$ red $\leftarrow$ {\bf f-c($K_{1,3}$)} on $G[c, v_1, v_4, v_7]$.

Now, we consider the non-edge $(v_1, v_6)$.
If $(v_1, v_6)$ is red:

\begin{ditemize}
\item
$(v_4, v_6)$ red $\leftarrow$ {\bf f-c($K_{1,3}$)} on $G[c, v_1, v_4, v_6]$
\item
$(v_3, v_6)$ red $\leftarrow$ {\bf f-c($K_{1,3}$)} on $G[c, v_1, v_3, v_6]$
\end{ditemize}

and we have a contradiction arisen from having {\bf f-c$(2K_2)$}a on $G[v_3, v_4, v_6, v_7]$.

If, on the contrary, $(v_1, v_6)$ is blue, then:

\begin{ditemize}
\item
$(v_2, v_7)$ red $\leftarrow$ {\bf f-c($B$)} on $G[c, v_1, v_2, v_6, v_7]$
\item
$(v_2, v_4)$ red $\leftarrow$ {\bf f-c($K_{1,3}$)} on $G[c, v_2, v_4, v_7]$
\item
$(v_2, v_5)$ red $\leftarrow$ {\bf f-c($K_{1,3}$)} on $G[c, v_2, v_5, v_7]$
\end{ditemize}
deducing a contradiction on $G[v_1, v_2, v_4, v_5]$ with forbidden coloring {\bf f-c$(2K_2)$}a.

\noindent {\bf Case in Figure \ref{fig:proof_wheel}.c}:

\begin{ditemize}
\item
$(v_3, v_7)$ blue $\leftarrow$ {\bf f-c($K_{1,3}$)} on $G[c, v_1, v_3, v_7]$
\item
$(v_5, v_7)$ blue $\leftarrow$ {\bf f-c($K_{1,3}$)} on $G[c, v_3, v_5, v_7]$
\end{ditemize}

Let us now consider in this order the non-edges $(v_5, v_n)$, $(v_5, v_{n-1}), \ldots$ and let $(v_5, v_i)$ be the first encountered blue non-edge, surely existing because $(v_5, v_7)$ is blue.\\
We distinguish two subcases: either $i=n$ or $i<n$.

If $i=n$:
\begin{ditemize}
\item
$(v_3, v_n)$ blue $\leftarrow$ {\bf f-c($K_{1,3}$)} on $G[c, v_3, v_5, v_n]$
\item
$(v_1, v_5)$ red $\leftarrow$ {\bf f-c($K_{1,3}$)} on $G[c, v_1, v_3, v_5]$
\item
$(v_1, v_6)$ red $\leftarrow$ {\bf f-c($B$)} on $G[c, v_n, v_1, v_5, v_6]$
\item
$(v_3, v_6)$ red $\leftarrow$ {\bf f-c($K_{1,3}$)} on $G[c, v_1, v_3, v_6]$
\item
$(v_6, v_n)$ blue $\leftarrow$ {\bf f-c($K_{1,3}$)} on $G[c, v_3, v_6, v_n]$
\end{ditemize}
Now, If $n=8$, then $v_7$ and $v_n$ are adjacent and $G[v_6, v_7, v_n, v_1]$ has forbidden tri-coloring {\bf f-c$(P_4)$}.
If, on the contrary, $n>8$, then we have the forbidden tri-coloring  {\bf f-c$(B)$} on $G[c, v_n, v_1, v_6, v_7]$.

If $i<n$,
we know that $(v_5, v_{i+1})$ is red; moreover:
\begin{ditemize}
\item
$(v_3, v_{i+1})$ red $\leftarrow$ {\bf f-c($K_{1,3}$)} on $G[c, v_3, v_5, v_{i+1}]$
\item
$(v_3, v_i)$ blue $\leftarrow$ {\bf f-c($K_{1,3}$)} on $G[c, v_3, v_5, v_i]$
\item
$(v_2, v_{i+1})$ red $\leftarrow$ {\bf f-c($B$)} on $G[c, v_2, v_3, v_i, v_{i+1}]$
\item
$(v_2, v_5)$ red $\leftarrow$ {\bf f-c($K_{1,3}$)} on $G[c, v_2, v_5, v_{i+1}]$
\item
$(v_6, v_{i+1})$ red $\leftarrow$ {\bf f-c($B$)} on $G[c, v_5, v_6, v_i, v_{i+1}]$
\item
$(v_2, v_6)$ red $\leftarrow$ {\bf f-c($K_{1,3}$)} on $G[c, v_2, v_6, v_{i+1}]$
\item
$(v_3, v_6)$ red $\leftarrow$ {\bf f-c($K_{1,3}$)} on $G[c, v_3, v_6, v_{i+1}]$
\end{ditemize}
We get subgraph $G[v_2, v_3, v_5, v_6]$ colored with {\bf f-c$(2K_2)$}a.

\noindent {\bf Case in Figure \ref{fig:proof_wheel}.d}:

We distinguish two subcases, according to the color of non-edge $(v_1, v_4)$.

If $(v_1, v_4)$ is blue:

\begin{ditemize}
\item
$(v_3, v_n)$ red $\leftarrow$ {\bf f-c($B$)} on $G[c, v_n, v_1, v_3, v_4]$
\item
$(v_5, v_n)$ blue $\leftarrow$ {\bf f-c($K_{1,3}$)} on $G[c, v_3, v_5, v_n]$
\item
$(v_4, v_n)$ blue $\leftarrow$ {\bf f-c($B$)} on $G[c, v_n, v_1, v_4,v_5]$
\item
$(v_3, v_7)$ blue $\leftarrow$ {\bf f-c($K_{1,3}$)} on $G[c, v_1, v_3, v_7]$
\end{ditemize}
Now we show that $(v_3, v_n)$ red and $(v_4, v_n)$ blue imply $(v_3, v_8)$ red and $(v_4, v_8)$ blue,
so obtaining $G[c,v_3, v_4, v_7, v_8]$ with forbidden coloring {\bf f-c($B$)}, a contradiction.\\
To show the assert it is sufficient to prove that if $(v_3, v_i)$ is red and $(v_4, v_i)$ is blue and  $i>8$, then $(v_3, v_{i-1})$ is red and $(v_4, v_{i-1})$ is blue.
\begin{ditemize}
\item
$(v_3, v_{i-1})$ red $\leftarrow$ {\bf f-c($B$)} on $G[c, v_3, v_4, v_{i-1}, v_i]$
\item
$(v_1, v_{i-1})$ red $\leftarrow$ {\bf f-c($K_{1,3}$)} on $G[c, v_1, v_3, v_{i-1}]$
\item
$(v_4, v_{i-1})$ blue $\leftarrow$ {\bf f-c($K_{1,3}$)} on $G[c, v_1, v_4,v_{i-1}]$
\end{ditemize}
and this part of the proof is concluded.

If, instead, $(v_1, v_4)$ is red:
\begin{ditemize}
\item
$(v_4, v_7)$ blue $\leftarrow$ {\bf f-c($K_{1,3}$)} on $G[c,  v_1, v_4, v_7]$
\item
$(v_1, v_5)$ blue $\leftarrow$ {\bf f-c($K_{1,3}$)} on $G[c, v_1, v_3, v_5]$
\item
$(v_4, v_n)$ red $\leftarrow$ {\bf f-c($B$)} on $G[c, v_n, v_1, v_4,v_5]$
\item
$(v_3, v_n)$ blue $\leftarrow$  {\bf f-c$(2K_2)$}a on $G[ v_1, v_n, v_3, v_4]$
\end{ditemize}
Now, if $n=8$ then the nodes $v_7$ and $v_8$ are adjacent and $G[v_3, v_4, v_7, v_n]$ has forbidden tri-coloring {\bf f-c($B$)}.
Thus, let us assume  $n>8$. 
\begin{ditemize}
\item
$(v_4, v_{n-1})$ red $\leftarrow$ {\bf f-c($B$)} on $G[c,  v_3, v_4, v_{n-1},v_n]$
\item
$(v_1, v_{n-1})$ red $\leftarrow$  {\bf f-c($K_{1,3}$)} on $G[c, v_1, v_4, v_{n-1}]$
\item
$(v_3, v_{n-1})$ red $\leftarrow$ {\bf f-c($K_{1,3}$)} on $G[c,  v_1, v_3,v_{n-1}]$
\item
$(v_5, v_{n-1})$ blue $\leftarrow$  {\bf f-c($K_{1,3}$)} on $G[ c, v_3, v_5,v_{n-1}]$
\end{ditemize}
Similarly to what we did before, now we show that $(v_4, v_{n-1})$ red and $(v_5, v_{n-1})$ blue imply $(v_4, v_8)$ red and $(v_5, v_8)$ blue, so obtaining $G[c,v_4, v_5, v_7, v_8]$ with forbidden coloring {\bf f-c($B$)}, a contradiction.\\
To show the assert it is sufficient to prove that if $(v_4, v_i)$ is red and $(v_5, v_i)$ is blue and  $i>8$, then $(v_4, v_{i-1})$ is red and $(v_5, v_{i-1})$ is blue.
\begin{ditemize}
\item
$(v_4, v_{i-1})$ red $\leftarrow$ {\bf f-c($B$)} on $G[c, v_4, v_5, v_{i-1}, v_i]$
\item
$(v_1, v_{i-1})$ red $\leftarrow$ {\bf f-c($K_{1,3}$)} on $G[c, v_1, v_4, v_{i-1}]$
\item
$(v_5, v_{i-1})$ blue $\leftarrow$ {\bf f-c($K_{1,3}$)} on $G[c, v_1, v_5,v_{i-1}]$
\end{ditemize}

Step 3 of the proof technique: we deduce that $G$ is not PCG since all the partial colorings shown in Figure \ref{fig:proof_wheel} are not feasible.
\end{proof}

\section{The strong product of a cycle and $P_2$}
\label{ref.product}

\begin{figure}[t]
\hspace*{4cm}
\begin{tikzpicture}[scale=0.85]
\draw (4,4) circle (2cm);
\draw (4,4) circle (3cm);

\draw[fill=black] (2.95,5.7) circle (0.1cm); \node[] at (3.2,5.4) {$v_n$};
\draw[fill=black] (2.4,6.52) circle (0.1cm); \node[] at (2.3,6.9) {$u_n$};
\draw[] (2.95,5.7) -- (2.4,6.52);
\draw[dashed] (2.95,5.7) -- (2.4,5.8);
\draw[dashed] (2.4,6.52) -- (2.3,5.9);

\draw[fill=black] (4,6) circle (0.1cm); \node[] at (4,5.6) {$v_1$};
\draw[fill=black] (4,7) circle (0.1cm); \node[] at (4,7.4) {$u_1$};
\draw[] (4,6) -- (4,7);

\draw[fill=black] (5.05,5.7) circle (0.1cm); \node[] at (4.8,5.4) {$v_2$};
\draw[fill=black] (5.6,6.52) circle (0.1cm); \node[] at (5.7,7) {$u_2$};
\draw[] (5.05,5.7) -- (5.6,6.52);

\draw[fill=black] (5.78,4.95) circle (0.1cm); \node[] at (5.45,4.8) {$v_3$};
\draw[fill=black] (6.65,5.5) circle (0.1cm); \node[] at (7.1,5.6) {$u_3$};
\draw[] (5.78,4.95) -- (6.65,5.5);

\draw[fill=black] (6,4) circle (0.1cm); \node[] at (5.6,4) {$v_4$};
\draw[fill=black] (7,4) circle (0.1cm); \node[] at (7.4,4) {$u_4$};
\draw[] (6,4) -- (7,4);
\draw[dashed] (6,4) -- (6.4,3.5);
\draw[dashed] (7,4) -- (6.6,3.5);

\node[] at (7.3,3) {$\ldots$};
\node[] at (5.4,3) {$\ldots$};
\node[] at (1.7,6.3) {$\ldots$};
\node[] at (2.7,4.7) {$\ldots$};

\draw[] (2.4,6.52) -- (4,6); \draw[] (2.95,5.7) -- (4,7);
\draw[] (4,7) -- (5.05,5.7); \draw[] (4,6) -- (5.6,6.52);
\draw[] (5.05,5.7) -- (6.65,5.5); \draw[] (5.78,4.95) -- (5.6,6.52);
\draw[] (5.78,4.95) -- (7,4); \draw[] (6,4) -- (6.65,5.5);

\end{tikzpicture}
\caption{Graph $C_n \square P_2$.}
\label{fig:CnK2}
\end{figure}
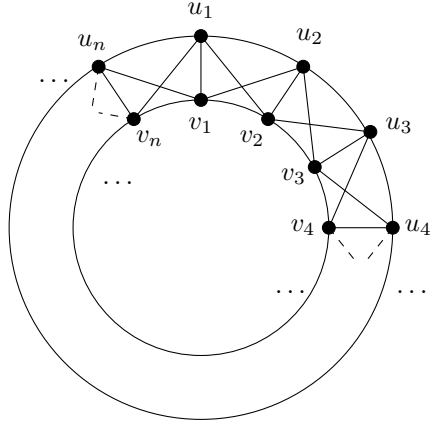

Given two graphs $G$ and $H$, their {\em strong product} $G \square H$ is a graph whose node set is the cartesian product of the node sets of the two graphs, and there is an edge between nodes $(u, v)$ and $(u', v')$ if and only if either $u=u'$ and $(v,v')$ is an edge of $H$ or $v=v'$ and $(u,u')$ is an edge of $G$.

In the following, we study graph $C_n \square P_2$, a $2n$ node graph in which two cycles are naturally highlighted; we call $v_1,\ldots, v_n$ and $u_1,u_2,\ldots, u_n$, respectively, their nodes as shown in Figure \ref{fig:CnK2}.

We recall that $C_4 \square P_2$, i.e. the graph depicted in Figure \ref{fig.notPCG}.b, has already been proved not to be PCG \cite{DMR15}.

We apply our technique to $C_n \square P_2$, by showing that every tri-coloring leads to forbidden tri-coloring {\bf f-c}$(C)$.
Since this tri-coloring appears only when $n \geq 6$, we need to handle the case  $C_5 \square P_2$ separately.

\begin{thm}
Graph  $C_5 \square P_2$ is not PCG.
\end{thm}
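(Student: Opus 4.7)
The plan is to follow the proof technique of Section~\ref{sec.technique}: I will exhibit a collection of forbidden PCG-colorings and argue that every tri-coloring of $C_5\square P_2$ induces one of them on some induced subgraph. Step~1 of the technique reuses the forbidden tri-colorings already established in Section~\ref{sec.forbidden}, in particular {\bf f-c}$(2K_2)a$, {\bf f-c}$(P_4)$, {\bf f-c}$(K_{1,3})$, {\bf f-c}$(B)$, and the cycle constraint of Theorem~\ref{cicloDistanza2}.

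The starting point is to apply Theorem~\ref{cicloDistanza2} to the two natural induced 5-cycles $v_1v_2v_3v_4v_5$ and $u_1u_2u_3u_4u_5$: each must carry at least one red 2-non-edge. Exploiting the dihedral symmetry of $C_5\square P_2$ (the rotation $i\mapsto i+1$, the reflection $i\mapsto -i$, and the layer swap $v_i\leftrightarrow u_i$), I may assume that $(v_1,v_3)$ is red and that the red 2-non-edge on the $u$-cycle is, up to the residual symmetry fixing $(v_1,v_3)$, one of $(u_1,u_3)$, $(u_2,u_4)$, or $(u_3,u_5)$. In each of the three resulting cases I would propagate colour constraints by repeatedly applying the forbidden tri-colorings to small induced subgraphs, in the spirit of the proof of Theorem~\ref{th.wheel}. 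The most useful small induced subgraphs are the several $2K_2$'s pairing $\{v_1,v_3\}$ with a second non-adjacent edge, such as $\{v_1,u_1,v_3,u_3\}$, $\{v_1,u_1,v_3,u_4\}$, $\{v_1,u_5,v_3,u_3\}$, $\{v_1,v_5,v_3,u_3\}$ and $\{v_1,u_1,v_3,v_4\}$, each of which, via {\bf f-c}$(2K_2)a$, forbids all four of its non-edges being simultaneously red; the induced $P_4$'s obtained by alternating between layers, for instance $u_1v_1u_2v_3$, $v_1u_2v_3u_4$ and $v_1v_2v_3v_4$, controlled by {\bf f-c}$(P_4)$; the $K_{1,3}$'s centred at $v_2$ or $u_2$, the only two common neighbours of $v_1$ and $v_3$, controlled by {\bf f-c}$(K_{1,3})$; and the mixed induced 5-cycles such as $v_1v_2u_3v_4v_5$, to which Theorem~\ref{cicloDistanza2} itself applies and may force fresh red 2-non-edges elsewhere in the graph.

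The main obstacle I anticipate is the combinatorial explosion. Unlike $C_n\square P_2$ with $n\ge 6$, where a single induced copy of the six-vertex configuration {\bf f-c}$(C)$ closes the argument, for $n=5$ no induced subgraph directly carries {\bf f-c}$(C)$, so one must chase the implications over several rounds. I would tame this by first pinning down the colours of the four non-edges incident to $v_1$ (namely to $v_3,v_4,u_3,u_4$) and then of the four incident to $v_3$ (to $v_1,v_5,u_1,u_5$); once these eight colours are fixed in each sub-case, only a handful of cross non-edges $(v_i,u_j)$ remain free, and further propagation should in every branch produce either a monochromatic red $2K_2$ or an instance of {\bf f-c}$(P_4)$ or {\bf f-c}$(B)$, the required contradiction.
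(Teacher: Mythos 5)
There is a genuine gap: your text is a plan, not a proof. After reducing to three cases for the placement of the red 2-non-edges on the two layers, you never carry out the propagation; the decisive sentence is ``further propagation should in every branch produce either a monochromatic red $2K_2$ or an instance of {\bf f-c}$(P_4)$ or {\bf f-c}$(B)$,'' which is exactly the part that needs to be verified. Since you yourself identify the combinatorial explosion as the main obstacle and then do not resolve it, the argument cannot be checked and does not establish the theorem. It is also not clear that your chosen toolkit ({\bf f-c}$(2K_2)a$, {\bf f-c}$(P_4)$, {\bf f-c}$(K_{1,3})$, {\bf f-c}$(B)$, Theorem~\ref{cicloDistanza2}) is sufficient to close every branch; you omit precisely the forbidden colorings built on six-vertex and triangle-containing subgraphs ({\bf f-c}$(A)$ and {\bf f-c}$(K_3\cup K_1)$ of Figures~\ref{fig:forbidden} and~\ref{fig:forbidden2}) that the paper relies on for this class of graphs.

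The paper's proof avoids the explosion entirely by starting from the opposite colour. By Lemma~\ref{lemma.cicloRossoBlu} the induced $5$-cycle $G[v_1,\dots,v_5]$ has a \emph{blue} non-edge, say $(v_2,v_5)$. Each of four induced six-vertex subgraphs of the form $G[v_1,v_2,v_3,\ast,v_4,v_5]$ (with the middle vertices drawn from $\{u_1,v_1\}$ and $\{u_3,u_4,v_4\}$) is a copy of the prism-like configuration of {\bf f-c}$(A)$, and avoiding that forbidden colouring forces the single long non-edge of each copy to be red; this yields the four red non-edges $(v_1,v_4)$, $(u_1,v_4)$, $(v_1,v_3)$, $(u_1,v_3)$, which form {\bf f-c}$(2K_2)a$ on $G[u_1,v_1,v_3,v_4]$ and give the contradiction in one step. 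If you want to salvage your approach, either carry out the full case analysis explicitly in all three sub-cases (and be prepared to add {\bf f-c}$(A)$ or {\bf f-c}$(K_3\cup K_1)$ to your list when a branch does not close), or switch to the blue-non-edge starting point, which is what makes the argument short.
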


\begin{proof}
According to the second step of the proof technique, we focus on any tri-coloring of $C_5 \square P_2$ and prove that it is forbidden.

Consider cycle $G[v_1,v_2,v_3,v_4,v_5]=PCG(T, d_{min}, d_{max})$; from Lemma \ref{lemma.cicloRossoBlu}, there exists at least a blue non-edge.

Thus, w.l.o.g. assume that non-edge $(v_2,v_5)$ is blue. In order to avoid forbidden coloring ${\bf f-c}(A)$ on the induced subgraph $G[v_1,v_2,v_3,u_4,v_4,v_5]$, 
non-edge $(v_1,v_4)$ must be red. 
The same reasoning can be used for the  following three induced subgraphs: $G[u_1,v_2,v_3,u_3,u_4,v_5]$, $G[u_1,v_2,v_3,v_4,u_4,v_5]$ and $G[u_1,v_2,v_3,v_4,u_4,v_5]$ to prove that non-edges $(u_1,v_4)$,  $(v_1,v_3)$ and $(u_1,v_3)$ must be red, too.
We get ${ \bf f-c}(2K_2)a$  on the induced subgraph $G[u_1,v_1,v_3,v_4]$, a contradiction.

In view of the last step of the proof technique, $C_5 \square P_2$ is not PCG, so concluding the proof.
\end{proof}

\begin{thm}
Graph $C_n \square P_2$, $n\geq 6$, is not PCG.
\end{thm}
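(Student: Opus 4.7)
The plan is to instantiate the three-step technique of Section~\ref{sec.technique} with target forbidden partial tri-coloring \textbf{f-c}$(C)$, as already signalled in the paragraph preceding the statement. Step~1 (listing the forbidden colorings) is already provided in Section~\ref{sec.forbidden}, so the real substance of the proof is Step~2: for an arbitrary tri-coloring of $C_n \square P_2$, I must locate a six-node induced subgraph whose inherited partial coloring is exactly \textbf{f-c}$(C)$.

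First I would identify the natural candidates for a copy of \textbf{f-c}$(C)$ inside $C_n \square P_2$. For every $i$ (indices mod $n$), the subset
\[
S_i = \{v_{i-1},\, v_i,\, u_{i+1},\, v_{i+1},\, v_{i+2},\, v_{i+3}\}
\]
induces, via the identification $(a,b,c,d,e,f) = (v_{i-1}, v_i, u_{i+1}, v_{i+1}, v_{i+2}, v_{i+3})$, a subgraph isomorphic to the underlying graph of \textbf{f-c}$(C)$: the two triangles $bcd$ and $cde$ correspond to $v_i v_{i+1} u_{i+1}$ and $v_{i+1} v_{i+2} u_{i+1}$ glued along the rung $v_{i+1} u_{i+1}$, while $ab = v_{i-1} v_i$ and $ef = v_{i+2} v_{i+3}$ play the role of the two pendant edges. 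The hypothesis $n \ge 6$ is exactly what ensures that the remaining eight pairs in $S_i$, in particular $(v_{i-1}, v_{i+3})$, are non-adjacent in $C_n \square P_2$, so the induced subgraph is really the one depicted in Figure~\ref{fig:forbidden2}.c.

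To force some $S_i$ to exhibit \textbf{f-c}$(C)$, I would begin by exploiting that both the $v$-cycle and the $u$-cycle are induced copies of $C_n$ with $n \ge 6$: Theorem~\ref{cicloDistanza2} and Lemma~\ref{le1} then yield a red $2$-non-edge in each cycle and forbid any ``isolated'' red non-edge of larger span. Next, in the spirit of the wheel argument of Theorem~\ref{th.wheel}, I would propagate color constraints along the $v$-cycle by repeatedly applying the smaller forbidden colorings (\textbf{f-c}$(K_{1,3})$, \textbf{f-c}$(2K_2)$, \textbf{f-c}$(B)$, \textbf{f-c}$(A)$) to induced six-node prisms of the form $G[v_j,v_{j+1},v_{j+2},u_j,u_{j+1},u_{j+2}]$. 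The goal is to pinpoint a rotation index $i$ at which both $2$-non-edges $(v_{i-1}, v_{i+1})$, $(v_{i+1}, v_{i+3})$ of the $v$-cycle are blue while the two crossing non-edges $(v_{i-1}, u_{i+1})$, $(u_{i+1}, v_{i+3})$ are red, which is precisely \textbf{f-c}$(C)$ on $S_i$.

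The hard part will be the case analysis inside this propagation: several color branchings can occur around the $v$- and $u$-cycles, and one must verify that every branch eventually produces the \textbf{f-c}$(C)$ pattern at some index $i$. The boundary case $n=6$ may also warrant a short separate treatment, since the wrap-around makes adjacent $S_i$'s overlap almost entirely and collapses some of the derived distance-$2$ constraints into a single identity. Once Step~2 is completed, Step~3 of the proof technique immediately yields that no tri-coloring of $C_n \square P_2$ can be a PCG-coloring, so $C_n \square P_2 \notin\mathrm{PCG}$.
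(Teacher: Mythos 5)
Your proposal correctly sets up the frame of the paper's argument (Step~1 is indeed already done in Section~\ref{sec.forbidden}, and the target is indeed to force \textbf{f-c}$(C)$ on a six-node induced subgraph), and the subset $S_i$ you describe is a legitimate embedding of the underlying graph of Figure~\ref{fig:forbidden2}.c. However, the entire mathematical content of the theorem is Step~2, and you do not carry it out: you explicitly defer "the case analysis inside this propagation" as "the hard part" without exhibiting a single forced color, and you give no argument that every tri-coloring must eventually realize your target pattern (both $2$-non-edges $(v_{i-1},v_{i+1})$, $(v_{i+1},v_{i+3})$ blue and both crossing non-edges to $u_{i+1}$ red) at some index. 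As it stands this is a plan, not a proof.

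There are also two concrete problems with the toolkit you propose for the propagation. First, $C_n \square P_2$ is claw-free (for any vertex, the non-neighbors among its neighbors form a bipartite pattern with no independent set of size three), so \textbf{f-c}$(K_{1,3})$ --- which drives the wheel proof via the universal center --- can never be applied here; listing it is a dead end. Second, you omit \textbf{f-c}$(K_3 \cup K_1)$, which is the actual engine of the paper's propagation. The paper's route is: take a red $2$-non-edge $(v_2,v_4)$ of the $v$-cycle (Theorem~\ref{cicloDistanza2}); apply \textbf{f-c}$(B)$ to $G[v_2,u_2,v_3,v_4,u_4]$ to force, w.l.o.g., $(v_2,u_4)$ red as well; then apply \textbf{f-c}$(K_3 \cup K_1)$ to the four-node subgraphs $G[v_2,v_i,u_i,v_{i+1}]$ and $G[v_2,v_i,u_i,u_{i+1}]$ to propagate redness to \emph{every} non-edge incident on $v_2$; then two applications of \textbf{f-c}$(2K_2)a$ (to $G[v_2,u_2,v_n,u_n]$ and $G[v_2,u_2,v_4,u_4]$) yield blue non-edges $(u_2,x)$ and $(u_2,y)$ with $x\in\{v_n,u_n\}$, $y\in\{v_4,u_4\}$; and \textbf{f-c}$(C)$ finally appears on $G[x,v_1,v_2,u_2,v_3,y]$ --- an embedding centered on the rung $(v_2,u_2)$, not the one you target. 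To repair your write-up you would need to supply such a deterministic chain of forced colorings; the branching propagation over prisms that you sketch has no guarantee of terminating in your specific configuration.
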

\begin{proof}
We exploit again the technique described in Section \ref{sec.technique}.

For Step 1,  we will use {\bf f-c}($2K_2$)a, {\bf f-c}($K_3 \cup K_1$), {\bf f-c}($B$), {\bf f-c}($C$) and the forbidden tri-coloring in Theorem \ref{cicloDistanza2}.

According to Step 2, 
we prove that for each tri-coloring of  $C_n \square P_2$, with $n \geq 6$, there exists an induced subgraph of $C_n \square P_2$ that inherits a forbidden PCG-coloring.

Let fix any tri-coloring of $C_n \square P_2$. 
Consider the cycle $G[v_1,v_2,\ldots,v_n]$; in view of Theorem \ref{cicloDistanza2}, there exists a red 2-non-edge in the cycle, w.l.o.g. let it be $(v_2,v_4)$. 
Consider now the induced subgraph $G[v_2,u_2,v_3,v_4,u_4]$.
In order to avoid {\bf f-c($B$)}, at least one between the non-edges $(u_{2}, v_4)$ and $(v_2, u_{4})$ must be red.
Thus, either $(v_2, v_{4})$ and $(u_2, v_{4})$ are  red or $(v_{2}, v_4)$ and $(v_{2}, u_4)$ are  red. 
Due to the symmetry of $C_n \square P_2$, it is not restrictive to assume that non-edges $(v_2,v_4)$ and $(v_2,u_4)$ are red. 
From this, we can prove that all the non-edges incident on $v_2$ are red.  To do that, it is sufficient to show that if  non-edges $(v_2,v_i)$ and $(v_2,u_i)$, $4\leq i<n$, are red,  then non-edges  $(v_2,v_{i+1})$ and $(v_2,u_{i+1})$ are red, too. 
To this aim consider  the induced subgraph $G[v_2,v_i,u_i,v_{i+1}]$; in order to avoid {\bf f-c$(K_3 \cup K_1)$}, on the three non-edges $(v_2,v_i)$, $(v_2,u_i)$ and $(v_2,v_{i+1})$ the red color can not appear exactly twice. 
Since $(v_2,v_i)$ and  $(v_2,u_i)$ are both red, it follows that $(v_2,v_{i+1})$ must also be red. 
Analogously, considering the induced subgraph $G[v_2,v_i,u_i,u_{i+1}]$, to avoid {\bf f-c$(K_3 \cup K_1)$} we get that $(v_2,u_{i+1})$ is red.

In particular, when $i=n-1$, we have that $(v_2,v_n)$ and $(v_2,u_n)$ are both red. 
Consider now the induced subgraph $G[v_2,u_2,v_n,u_n]$; to avoid {\bf f-c$(2K_2)$}a, we have that $(u_2,x)$, with $x\in \{u_n,v_n\}$, must be a blue non-edge. Analogously, to avoid {\bf f-c$(2K_2)$}a on the induced graph $G[v_2, u_2, v_4, u_4]$,  $(u_2,y)$, with $y\in \{u_4,v_4\}$ must be a blue non edge.
Finally, we get the  {\bf f-c$(C)$} on the induced graph $G[x, v_1,  v_2,  u_2,  v_3, y]$, a contradiction.

Step 3 of the proof technique concludes the proof.
\end{proof}

\section{Minimality}
\label{sec.minimality}

If a graph contains as induced subgraph a not PCG, then it is not PCG, too.
We call {\em minimal non PCG} a graph that is not PCG and it does not contain any proper induced subgraph that is not PCG.

In this section we prove that all graphs inside each one of the two considered classes we have just proved not to be PCGs are minimal not PCGs. More in detail, we prove that by deleting any node from the considered graph, we get a PCG.

\medskip

The following theorem states that wheels are minimal not PCGs.
\begin{thm}
Let $n\geq 8$. The graph obtained by removing any node from $W_{n+1}$ is PCG. In other words, $W_{n+1}$ is a minimal not PCG.
\end{thm}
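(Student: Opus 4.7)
The plan is to split the argument into two cases depending on whether the removed node is the hub $c$ or a rim node. If $c$ is removed, the resulting graph is exactly the cycle $C_n$, which is already listed in the introduction as being in PCG (\cite{Yal09,Yal10}), so nothing further is required there. If a rim node is removed, the cyclic symmetry of $W_{n+1}$ lets us assume without loss of generality that the removed node is $v_n$; the resulting graph is the fan $F_{n-1}$, consisting of the path $v_1 v_2 \cdots v_{n-1}$ together with a vertex $c$ adjacent to every $v_i$. So the content of the theorem reduces entirely to showing that $F_{n-1}$ is PCG, and I would finish the proof by exhibiting an explicit witness caterpillar.

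The witness tree $T$ I would use has spine $s_0, s_1, \ldots, s_{n-1}$ in which every spine edge $(s_{i-1}, s_i)$ carries the same small weight $\epsilon$; each rim vertex $v_i$ hangs from $s_i$ as a pendant leaf by an edge of weight $1$, and the hub $c$ hangs from $s_0$ by a pendant edge of weight $b$, for a value $b$ just below $1$ to be chosen. A direct computation then gives $d_T(v_i, v_j) = 2 + (j-i)\epsilon$ and $d_T(c, v_i) = 1 + b + i\epsilon$. Setting $d_{min} = 1 + b + \epsilon$ and $d_{max} = 1 + b + (n-1)\epsilon$, the three families of required conditions (every path edge in the window, every hub--rim edge in the window, every non-edge $v_i v_j$ with $j - i \geq 2$ strictly above $d_{max}$) collapse into the single pair of inequalities $1 - (n-2)\epsilon \leq b < 1 - (n-3)\epsilon$, which carves out a nonempty interval of length $\epsilon$. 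Any small $\epsilon$ (for instance $\epsilon = 1/n^{2}$) together with any $b$ in that interval then certifies $F_{n-1} = PCG(T, d_{min}, d_{max})$.

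The only delicate point is the choice of scales, and this is what I expect to be the main obstacle. A naive caterpillar with unit spine weights would make $d_T(c, v_i)$ grow linearly in $i$, blowing past the narrow window that the path edges $v_i v_{i+1}$ already pin down. Shrinking the spine weights to $\epsilon \ll 1$ flattens the ``staircase'' of hub--rim distances into an arbitrarily narrow strip, and that is exactly what creates the room in which a single pendant weight $b$ can simultaneously meet the hub--rim constraints and leave the path edges on the boundary of the window. Once this scale is identified the verification of all three families of distance constraints is routine.
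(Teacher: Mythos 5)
Your proposal is correct, and the hub case (removal of $c$ yields $C_n$, a known PCG) coincides with the paper's. For the rim case, however, you take a genuinely different route. The paper simply observes that deleting a rim node leaves a fan, which is an interval graph, and then invokes the known facts that interval graphs are leaf powers and hence PCGs (\cite{BH08,Bsurvey}); the whole proof is three lines and entirely non-constructive. You instead build an explicit caterpillar witness, and your arithmetic checks out: with spine weights $\epsilon$, pendant weights $1$ for the $v_i$ and $b$ for $c$ at the end of the spine, the distances are $d_T(v_i,v_j)=2+(j-i)\epsilon$ and $d_T(c,v_i)=1+b+i\epsilon$, and with $d_{min}=1+b+\epsilon$, $d_{max}=1+b+(n-1)\epsilon$ the hub--rim edges land in the window automatically, the path edges force $1-(n-2)\epsilon\leq b\leq 1$, and pushing every non-edge $v_iv_j$ ($j-i\geq 2$) strictly above $d_{max}$ forces $b<1-(n-3)\epsilon$; the resulting interval for $b$ is nonempty and all weights are positive for small $\epsilon$. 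What your approach buys is self-containedness (no reliance on the interval-graph-to-leaf-power chain) and an explicit witness tree in the same constructive spirit as the paper's Theorem~\ref{th.minimal} for $C_n\square P_2$; what it costs is length, and it only handles the fan rather than interval graphs in general. Either argument is acceptable.
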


\begin{proof}
Notice that, if we remove from $W_{n+1}$ the central node, the resulting graph is a cycle; if we remove any other node, the resulting graph  is an interval graph.
In both cases, we get a PCG \cite{Yal09,Bsurvey}.
\end{proof}

Now we prove that $C_n \square P_2$ is a minimal not PCG.
The proof is constructive and it provides an edge-weighted tree $T$ and two values $d_{min}$ and $d_{max}$ such that $PCG(T, d_{min}, d_{max})=C_n \square P_2 \setminus \{ x \}$ for any node $x$ of $C_n \square P_2$.

\begin{thm}
\label{th.minimal}
The graph obtained by removing any node from $C_n \square P_2$, $n\geq 4$, is PCG. In other words, $C_n \square P_2$ is a minimal not PCG.
\end{thm}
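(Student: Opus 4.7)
The proof I have in mind is constructive and rests on two preliminary observations. First, the automorphism group of $C_n \square P_2$ is vertex-transitive: it is generated by the cyclic rotation $(v_i, u_i) \mapsto (v_{i+1}, u_{i+1})$ (indices mod $n$) and the mirror swap $v_i \leftrightarrow u_i$. Consequently deleting any single vertex yields a graph isomorphic to $G' = (C_n \square P_2) \setminus \{u_1\}$, so it suffices to produce a triple $(T, d_{min}, d_{max})$ realizing $G'$. Second, the ``ladder'' $P_{n-1} \square P_2$ admits a well-known caterpillar PCG-representation (with spine nodes $s_2,\ldots,s_n$ and the column leaves $v_i, u_i$ hanging off $s_i$), so the real work is to graft the extra vertex $v_1$ into such a caterpillar without disturbing the existing adjacencies.

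Concretely, fix a positive real $\beta$ and set $\alpha = (n-2)\beta$ and $\gamma = (n-2)\beta/2$. Build the caterpillar $T$ with spine $s_2 - s_3 - \cdots - s_n$ in which every spine edge has weight $\beta$, and for every $i = 2, \ldots, n$ attach $v_i$ and $u_i$ as leaves of $s_i$ via edges of weight $\alpha$. Then graft $v_1$ in at the center of the spine: if $n$ is even, attach $v_1$ directly to $s_{(n+2)/2}$ by an edge of weight $\gamma$; if $n$ is odd, subdivide the central spine edge $(s_{(n+1)/2}, s_{(n+3)/2})$ by a Steiner vertex $x$ at its midpoint (each half of weight $\beta/2$) and attach $v_1$ to $x$ by an edge of weight $\gamma$. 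Finally take $d_{min} = 2\alpha$ and $d_{max} = 2\alpha + \beta$.

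The verification splits into two routine parts. For pairs contained entirely in $\{v_2,\dots,v_n\}\cup\{u_2,\dots,u_n\}$ one recovers the classical ladder computation: $d_T = 2\alpha = d_{min}$ on same-column pairs, $d_T = 2\alpha + \beta = d_{max}$ on all four kinds of adjacent-column pairs generated by the strong product, and $d_T \geq 2\alpha + 2\beta > d_{max}$ on every pair of columns at index-distance $\geq 2$; this matches exactly the edges and non-edges induced on $G' \setminus \{v_1\}$. For pairs involving $v_1$, the deliberate symmetric placement of $v_1$ yields $d_T(v_1, s_2) = d_T(v_1, s_n) = \gamma + (n-2)\beta/2 = \alpha$, and hence $d_T(v_1, x) = 2\alpha = d_{min}$ precisely when $x \in \{v_2, u_2, v_n, u_n\}$; for every intermediate column $3 \le i \le n-1$ the path from $v_1$ to $s_i$ traverses strictly fewer than $(n-2)/2$ full spine edges, giving $d_T(v_1, v_i) = d_T(v_1, u_i) < d_{min}$. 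These match exactly the four edges and all non-edges incident to $v_1$ in $G'$.

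The main obstacle to anticipate is that $G'$ still carries the cycle $v_1 v_2 \cdots v_n v_1$, which cannot be embedded isometrically in any tree; in particular, the naive attempt of placing $v_1$ at an end of the spine fails because $v_1$ would then be forced close to only one of the two ``wrap-around'' neighbors $v_2, v_n$. The symmetric middle-grafting trick, together with the Steiner subdivision for odd $n$, is exactly what allows a tree metric to realize $d_T(v_1, v_2) = d_T(v_1, v_n) = d_{min}$ simultaneously, after which every remaining comparison is a short arithmetic check.
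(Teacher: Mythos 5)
Your proposal is correct and is essentially the paper's own argument: both reduce to deleting one fixed vertex by symmetry and then build a caterpillar whose spine carries the surviving columns with equal leaf-edge weights, grafting the orphaned partner vertex at the exact center of the spine (with a degree-three Steiner/internal node splitting the middle spine edge when the spine has an even number of nodes) so that it lands at distance exactly $d_{min}$ from the two extreme columns and strictly below $d_{min}$ from all intermediate ones. The only differences are cosmetic: the paper fixes the weights (spine $2$, leaves $n$, center attachment $1$, $d_{min}=2n-2$, $d_{max}=2n+2$) where you parameterize by $\beta$, and it deletes $u_n$ where you delete $u_1$.
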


\begin{proof}
To prove the statement, we remove from the graph a node $x$ and prove that the new graph $G'$ is PCG. In view of the symmetry of the graph, it is not restrictive to assume that $x=u_n$. 
We construct a tree $T$ such that $G'=$ PCG($T$, $2n-2$, $2n+2$).  

We distinguish the following two cases depending on whether $n$ is an even or an odd number:
\begin{description}
\item[$n$ is an even number:] (refer to Figure \ref{fig.minimal}.a) tree $T$ is a caterpillar with  $n-1$ internal nodes that we denote as $x_1,x_2,\ldots, x_{n-1}$. The internal nodes induce a path from $x_1$ to $x_{n-1}$ and edges on this path $(x_i,x_{i+1})$, $1\leq i<n-1$, have all weight $2$. 
Leaves $v_i$ and $u_i$, $1\leq i<n$,  are connected  to $x_i$ with edges of weight $n$. Finally leaf $v_n$ is connected to the node $x_{\frac{n}{2}}$ with an edge of weight $1$. 

\item[$n$ is an odd number:] (refer to Figure \ref{fig.minimal}.b) tree $T$ is a caterpillar with  $n$ internal nodes that we denote as $x_1,x_2,\ldots, x_{n-1}$ and $y$. The internal nodes $x_1,\ldots, x_{\left\lfloor\frac{n}{2}\right\rfloor}$ induce a path from $x_1$ to $ x_{\left\lfloor\frac{n}{2}\right\rfloor}$ and edges $(x_i,x_{i+1})$, $1\leq i<\left\lfloor\frac{n}{2}\right\rfloor$, have weight $2$. The internal nodes $ x_{\left\lceil\frac{n}{2}\right\rceil}, \ldots, x_{n-1}$ induce a path from 
$ x_{\left\lceil\frac{n}{2}\right\rceil},$ to $x_{n-1}$  and edges $(x_i,x_{i+1})$, $\left\lceil\frac{n}{2}\right\rceil \leq i<n-1$, have weight $2$. Leaves $v_i$ and $u_i$, $1\leq i<n$,  are connected  to $x_i$ with edges of weight $n$.  Finally the internal node $y$ is connected to $ x_{\left\lfloor\frac{n}{2}\right\rfloor}$, $ x_{\left\lceil\frac{n}{2}\right\rceil}$ and $v_n$ with edges of weight $1$. 
\end{description}
In both cases, $G'= PCG(T, 2n-2, 2n+2)$.
\end{proof}

\begin{figure}[t]
\begin{tikzpicture}
\draw[fill=black] (0,1) circle (0.1cm); \node[] at (0,0.7) {$u_1$};
\draw[fill=black] (0.5,1) circle (0.1cm); \node[] at (0.5,0.7) {$v_1$};
\draw[fill=black] (0.25,2) circle (0.1cm); \node[] at (0.25,2.25) {$x_1$};
\draw[] (0,1) -- (0.25,2);
\draw[] (0.5,1) -- (0.25,2);

\draw[fill=black] (1,1) circle (0.1cm); \node[] at (1,0.7) {$u_2$};
\draw[fill=black] (1.5,1) circle (0.1cm); \node[] at (1.5,0.7) {$v_2$};
\draw[fill=black] (1.25,2) circle (0.1cm); \node[] at (1.25,2.25) {$x_2$};
\draw[] (1,1) -- (1.25,2);
\draw[] (1.5,1) -- (1.25,2);

\draw[fill=black] (2.5,1) circle (0.1cm); \node[] at (2.5,0.7) {$u_{n \over 2}$};
\draw[fill=black] (3,1) circle (0.1cm); \node[] at (3,0.7) {$v_{n \over 2}$};
\draw[fill=black] (2.75,2) circle (0.1cm); \node[] at (2.55,2.25) {$x_{n \over 2}$};
\draw[fill=black] (2.75,3) circle (0.1cm); \node[] at (2.75,3.25) {$v_{n}$};
\draw[] (2.5,1) -- (2.75,2);
\draw[] (3,1) -- (2.75,2);
\draw[] (2.75,2) -- (2.75,3);

\draw[fill=black] (4,1) circle (0.1cm); \node[] at (4,0.7) {$u_{n-1}$};
\draw[fill=black] (4.5,1) circle (0.1cm); \node[] at (4.7,0.7) {$v_{n-1}$};
\draw[fill=black] (4.25,2) circle (0.1cm); \node[] at (4.25,2.25) {$x_{n-1}$};
\draw[] (4,1) -- (4.25,2);
\draw[] (4.5,1) -- (4.25,2);

\draw[] (0.25,2) -- (1.25,2);
\draw[dashed] (1.25,2) -- (1.8,2);
\draw[dashed] (2.25,2) -- (3.25,2);
\draw[dashed] (3.75,2) -- (4.25,2);

\node[] at (0.7,2.2) {$2$};
\node[] at (0,1.4) {$n$};
\node[] at (0.5,1.4) {$n$};
\node[] at (1,1.4) {$n$};
\node[] at (1.5,1.4) {$n$};
\node[] at (2.5,1.4) {$n$};
\node[] at (3,1.4) {$n$};
\node[] at (4,1.4) {$n$};
\node[] at (4.5,1.4) {$n$};
\node[] at (2.9,2.6) {$1$};

\node[] at (2.5,0) {a.};

\draw[fill=black] (6,1) circle (0.1cm); \node[] at (6,0.7) {$u_1$};
\draw[fill=black] (6.5,1) circle (0.1cm); \node[] at (6.5,0.7) {$v_1$};
\draw[fill=black] (6.25,2) circle (0.1cm); \node[] at (6.25,2.25) {$x_1$};
\draw[] (6,1) -- (6.25,2);
\draw[] (6.5,1) -- (6.25,2);

\draw[fill=black] (7,1) circle (0.1cm); \node[] at (7,0.7) {$u_2$};
\draw[fill=black] (7.5,1) circle (0.1cm); \node[] at (7.5,0.7) {$v_2$};
\draw[fill=black] (7.25,2) circle (0.1cm); \node[] at (7.25,2.25) {$x_2$};
\draw[] (7,1) -- (7.25,2);
\draw[] (7.5,1) -- (7.25,2);

\draw[fill=black] (8.5,1) circle (0.1cm); \node[] at (8.4,0.7) {$u_{\lfloor {n \over 2} \rfloor}$};
\draw[fill=black] (9,1) circle (0.1cm); \node[] at (9.2,0.7) {$v_{\lfloor {n \over 2} \rfloor}$};
\draw[fill=black] (8.75,2) circle (0.1cm); \node[] at (8.55,2.25) {$x_{\lfloor {n \over 2} \rfloor}$};
\draw[] (8.5,1) -- (8.75,2);
\draw[] (9,1) -- (8.75,2);

\draw[fill=black] (9.5,2) circle (0.1cm); \node[] at (9.35,2.2) {$y$};
\draw[fill=black] (9.5,3) circle (0.1cm); \node[] at (9.25,3.25) {$v_{n}$};
\draw[] (9.5,2) -- (9.5,3);

\draw[fill=black] (10,1) circle (0.1cm); \node[] at (10,0.7) {$u_{\lceil {n \over 2} \rceil}$};
\draw[fill=black] (10.5,1) circle (0.1cm); \node[] at (10.7,0.7) {$v_{\lceil {n \over 2} \rceil}$};
\draw[fill=black] (10.25,2) circle (0.1cm); \node[] at (10.25,2.25) {$x_{\lceil {n \over 2} \rceil}$};
\draw[] (10,1) -- (10.25,2);
\draw[] (10.5,1) -- (10.25,2);

\draw[fill=black] (11.5,1) circle (0.1cm); \node[] at (11.5,0.7) {$u_{n-1}$};
\draw[fill=black] (12,1) circle (0.1cm); \node[] at (12.2,0.7) {$v_{n-1}$};
\draw[fill=black] (11.75,2) circle (0.1cm); \node[] at (11.75,2.25) {$x_{n-1}$};
\draw[] (11.5,1) -- (11.75,2);
\draw[] (12,1) -- (11.75,2);

\draw[] (6.25,2) -- (7.25,2);
\draw[] (8.75,2) -- (10.25,2);

\draw[dashed] (7.25,2) -- (7.8,2);
\draw[dashed] (8.25,2) -- (8.55,2);
\draw[dashed] (10.25,2) -- (10.75,2);
\draw[dashed] (11.25,2) -- (11.75,2);

\node[] at (6.7,2.2) {$2$};
\node[] at (6,1.4) {$n$};
\node[] at (6.5,1.4) {$n$};
\node[] at (7,1.4) {$n$};
\node[] at (7.5,1.4) {$n$};
\node[] at (8.5,1.4) {$n$};
\node[] at (9,1.4) {$n$};
\node[] at (10,1.4) {$n$};
\node[] at (10.5,1.4) {$n$};
\node[] at (11.5,1.4) {$n$};
\node[] at (12,1.4) {$n$};

\node[] at (9.4,2.6) {$1$};

\node[] at (9.1,2.2) {$1$};
\node[] at (9.7,2.2) {$1$};

\node[] at (9.5,0) {b.};
\end{tikzpicture}
\vspace*{-0.5cm}
\caption{Caterpillars for the proof of Theorem \ref{th.minimal}.}
\label{fig.minimal}
\end{figure}
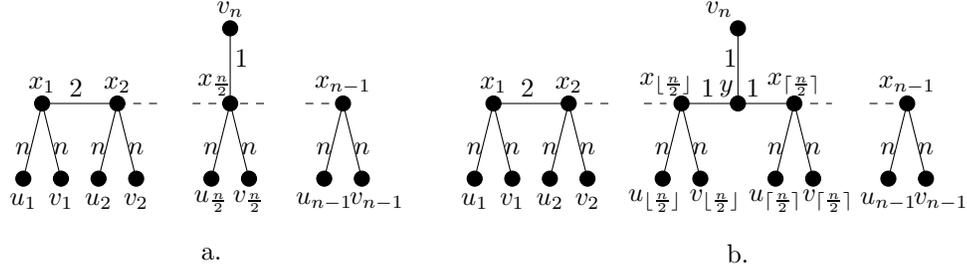

\section{Conclusions}
\label{sec.conclusion}

In this paper we proposed a new proof technique to show that graphs are not PCGs. As an example, we applied it to wheels and to $C_n \square P_2$.
Note that both these two classes are obtained by operating on cycles and recently we have used the same approach to prove that also the square of an $n$ node cycle, $n \geq 8$, is not PCG. 
 
Nevertheless, we think that this technique can be potentially used to collocate  outside PCG many other graph classes no related to cycles.
This represents an important step toward the solution of the very general open problem consisting in demarcating the boundary of the PCG class.

\bibliography{references}
\end{document}